\title{Tensor Completion Made Practical}
\author{Allen Liu \thanks{Department of Electrical Engineering and Computer Science, Massachusetts Institute of Technology. Email: {\tt cliu568@mit.edu}.}\and Ankur Moitra\thanks{Department of Mathematics, Massachusetts Institute of Technology. Email: {\tt moitra@mit.edu}. This work was
supported in part by a Microsoft Trustworthy AI Grant, NSF CAREER Award CCF-1453261, NSF Large CCF1565235, a David and Lucile Packard Fellowship, an Alfred P. Sloan Fellowship and an ONR Young Investigator
Award.}}
\date{\today}
\newcommand{\wh}{\widehat}
\newcommand{\ov}{\overline}
\newcommand{\op}{\textsf{op}}
\newcommand{\spn}{\textsf{span}}
\renewcommand{\norm}[1]{\left\lVert#1\right\rVert}
\begin{document}

\maketitle

\begin{abstract}
    Tensor completion is a natural higher-order generalization of matrix completion where the goal is to recover a low-rank tensor from sparse observations of its entries. Existing algorithms are either heuristic without provable guarantees, based on solving large semidefinite programs which are impractical to run, or make strong assumptions such as requiring the factors to be nearly orthogonal. 

In this paper we introduce a new variant of alternating minimization, which in turn is inspired by understanding how the progress measures that guide convergence of alternating minimization in the matrix setting need to be adapted to the tensor setting. We show strong provable guarantees, including showing that our algorithm converges linearly to the true tensors even when the factors are highly correlated and can be implemented in nearly linear time. Moreover our algorithm is also highly practical and we show that we can complete third order tensors with a thousand dimensions from observing a tiny fraction of its entries. In contrast, and somewhat surprisingly, we show that the standard version of alternating minimization, without our new twist, can converge at a drastically slower rate in practice. 
\end{abstract}

\newpage

\section{Introduction}

In this paper we study the problem of recovering a low-rank tensor from sparse observations of its entries. In particular, suppose
$$
T = \sum_{i=1}^r \sigma_i \mbox{ } x_i \otimes y_i \otimes z_i
$$
Here $\{x_i\}_i$, $\{y_i\}_i$ and $\{z_i\}_i$ are called the {\em factors} in the low rank decomposition and the $\sigma_i$'s are scalars, which allows us to assume without loss of generality that the factors are unit vectors. Additionally $T$ is called a {\em third order} tensor because it is naturally represented as a three-dimensional array of numbers. Now suppose each entry of $T$ is revealed independently with some probability $p$. Our goal is to accurately estimate the missing entries with $p$ as small as possible. 

Tensors completion is a natural generalization of the classic matrix completion problem \cite{candes2009exact}. Similarly, it has a wide range of applications including in recommendation systems \cite{zhu2018fairness}, signal and image processing \cite{liu2012tensor, li2017low, ng2017adaptive, bengua2017efficient}, data analysis in engineering and the sciences \cite{tan2016short, kreimer2013tensor, wang2015rubik, xie2016accurate} and harmonic analysis \cite{trickett2013interpolation}. However, unlike matrix completion, there is currently a large divide between theory and practice. Algorithms with rigorous guarantees either rely on solving very large semidefinite programs \cite{barak2016noisy, potechin2017exact}, which is impractical, or else need to make strong assumptions that are not usually satisfied \cite{cai2019nonconvex}, such as assuming that the factors are nearly orthogonal, which is a substantial restriction on the model. In contrast, the most popular approach in practice is {\em alternating minimization} where we fix two out of three sets of factors and optimize over the other
\begin{equation}\label{eq:naivealtmin}
(\wh{z}_1, \dots , \wh{z}_r) = \arg\min_{z_1, \dots , z_r}\norm{\left(T - \sum_{i=1}^r \wh{x}_i \otimes \wh{y}_i \otimes z_i\right)\Bigg|_{S}}_2^2
\end{equation}
Here $S$ is the set of observations and we use $X \Big |_{S}$ to denote restricting a tensor $X$ to the set of entries in $S$. We then update our estimates, and optimize over a different set of factors, continuing in this fashion until convergence. A key feature of alternating minimization that makes it so appealing in practice is that it only needs to store $3r$ vectors along with the observations, and never explicitly writes down the entire tensor. Unfortunately, not much is rigorously known about alternating minimization for tensor completion, unlike for its matrix counterpart \cite{jain2013low, matrixaltmin}. 

In this paper we introduce a new variant of alternating minimization for which we can prove strong theoretical guarantees. Moreover we show that our algorithm is highly practical. We can complete third order tensors with a thousand dimensions from observing a tiny fraction of its entries. We observe experimentally that, in many natural settings, our algorithm takes an order of magnitude fewer iterations to converge than the standard version of alternating minimization.

\subsection{Prior Results}

 In matrix completion, the first algorithms were based on finding a completion that minimizes the nuclear norm \cite{candes2009exact}. This is in some sense the best convex relaxation to the rank \cite{chandrasekaran2012convex}, and originates from ideas in compressed sensing \cite{fazel2002matrix}. There is a generalization of the nuclear norm to the tensor setting. Thus a natural approach \cite{yuan2016tensor} to completing tensors is to solve the convex program
$$\min \|\wh{T}\|_* \mbox{ s.t. } \wh{T} \Big |_{S} = T \Big |_{S}$$
where $\|\cdot \|_*$ is the tensor nuclear norm. Unfortunately the tensor nuclear norm is hard to compute \cite{gurvits2003classical, hillar2013most}, so this approach does not lead to any algorithmic guarantees. 

Barak and Moitra \cite{barak2016noisy} used a semidefinite relaxation to the tensor nuclear norm and showed that an $n \times n \times n$ incoherent tensor of rank $r$ can be recovered approximately from roughly $r n^{3/2}$ observations. Moreover they gave evidence that this bound is tight by showing lower bounds against powerful families of semidefinite programming relaxations as well as relating the problem of completing approximately low-rank tensors from few observations to the problem of refuting a random constraint satisfaction problem with few clauses \cite{daniely2013more}. This is a significant difference from matrix completion in the sense that here there are believed to be fundamental computational vs. statistical tradeoffs whereby any efficient algorithm must use a number of observations that is larger by a polynomial factor than what is possible information-theoretically. Their results, however, left open two important questions:

\begin{aqquestion}
Are there algorithms that achieve exact completion, rather than merely getting most of the missing entries mostly correct?
\end{aqquestion} 

\begin{aqquestion}
Are there much faster algorithms that still have provable guarantees \--- ideally ones that can actually be implemented in practice? 
\end{aqquestion}

For the first question, Potechin and Steurer \cite{potechin2017exact} gave a refined analysis of the semidefinite programming approach through which they gave an exact completion algorithm when the factors in the low rank decomposition are orthogonal.  Jain and Oh \cite{jain2014provable} obtain similar guarantees via an alternating minimization-based approach. For matrices, orthogonality of the factors can be assumed without loss of generality. But for tensors, it is a substantial restriction. Indeed, one of the primary applications of tensor decomposition is to parameter learning where the fact that the decomposition is unique even when the factors can be highly correlated is essential \cite{anandkumar2014tensor}. Xia and Yuan \cite{xia2019polynomial} gave an algorithm based on optimization over the Grassmannian and they claimed it achieves exact completion in polynomial time under mild conditions. However no bound on the number of iterations was given, and it is only known that each step can be implemented in polynomial time. For the second question, Cai et al. \cite{cai2019nonconvex} gave an algorithm based on nonconvex optimization that runs in nearly linear time up to a polynomial in $r$ factor. (In the rest of the paper we will think of $r$ as constant or polylogarithmic and thus we will omit the phrase ``up to a polynomial in $r$ factor" when discussing the running time.) Moreover their algorithm achieves exact completion under the somewhat weaker condition that the factors are nearly orthogonal. Notably, their algorithm also works in the noisy setting where they showed it nearly achieves the minimax optimal prediction error for the missing entries. 

There are also a wide variety of heuristics. For example, there are many other relaxations for the tensor nuclear norm based on flattening the tensor into a matrix in different ways \cite{gandy2011tensor}. However we are not aware of any rigorous guarantees for such methods that do much better than completing each slice of the tensor as its own separate matrix completion problem. Such methods require a number of observations that is a polynomial factor larger than what is needed by other algorithms. 

\subsection{Our Results}

In this paper we introduce a new variant of alternating minimization that is not only highly practical but also allows us to resolve many of the outstanding theoretical problems in the area. Our algorithm is based on some of the key progress measures behind the theoretical analysis of alternating minimization in the matrix completion setting. In particular, Jain et al. \cite{jain2013low} and Hardt \cite{matrixaltmin} track the principal angle between the true subspace spanned by the columns of the unknown matrix and that of the estimate. They prove that this distance measure decreases geometrically. In Hardt's \cite{matrixaltmin} analysis this is based on relating the steps of alternating minimization to a noisy power method, where the noise comes from the fact that we only partially observe the matrix that we want to recover. 

We observe two simple facts. First, if we have bounds on the principal angles between the subspaces spanned by $x_1, \dots, x_r$ and $\wh{x}_1, \dots, \wh{x}_r$ as well as between the subspaces spanned by  $y_1, \dots, y_r$ and $\wh{y}_1, \dots, \wh{y}_r$ then it does {\em not} mean we can bound the principal angle between the subspaces spanned by
 $$x_1 \otimes y_1, \dots, x_r \otimes y_r \mbox{ and } \wh{x}_1 \otimes \wh{y}_1, \dots, \wh{x}_r \otimes \wh{y}_r$$
 However if we take all pairs of tensor products, and instead consider the principal angle between the subspaces spanned by
 $$\{x_i \otimes y_j\}_{i, j} \mbox{ and } \{\wh{x}_i \otimes \wh{y}_j\}_{i, j}$$
 then we can bound the principal angle (see Observation \ref{obs:tensorprincipalangles}). This leads to our new variant of alternating minimization, which we call {\sc Kronecker Alternating Minimization}, where the new update rule is:
 \begin{equation}\label{eq:subspacealtmin}
\{\wh{z}_{i,j}\} = \arg\min_{z_{i,j}} \norm{\left(T - \sum_{1 \leq i,j \leq r} \wh{x}_i \otimes \wh{y}_j \otimes z_{i,j}\right)\Bigg|_{S}}_2^2
\end{equation}
In particular, we are solving a least squares problem over the variables $z_{i,j}$ by taking the Kronecker product of $\wh{X}$ and $\wh{Y}$ where $\wh{X}$ is a matrix whose columns are the $\wh{x}_i$'s and similarly for $\wh{Y}$. In contrast the standard version of alternating minimization takes the Khatri-Rao product\footnote{The Khatri-Rao product, which is less famililar than the Kronecker product, takes two matrices $A$ and $B$ with the same number of columns and forms a new matrix $C$ where the $i$th column of $C$ is the tensor product of the $i$th column of $A$ and the $i$th column of $B$. This operation has many applications in tensor analysis, particularly to prove (robust) identifiability results \cite{allman2009identifiability, bhaskara2014smoothed}. }.

This modification increases the number of rank one terms in the decomposition from $r$ to $r^2$. We show that we can reduce back to $r$ without incurring too much error by finding the best rank $r$ approximation to the $n \times r^2$ matrix of the $\wh{z}_{i,j}$'s. We combine these ideas with methods for initializing alternating minimization, building on the work of Montanari and Sun \cite{spectral}, along with a post-processing algorithm to solve the non-convex exact completion problem when we are already very close to the true solution.  Our main result is:

\begin{theorem}[Informal version of Theorem~\ref{thm:main}]
Suppose $T$ is an $n \times n \times n$ low-rank, incoherent, well-conditioned tensor and its factors are robustly linearly independent. There is an algorithm that runs in nearly linear time in the number of observations and exactly completes $T$ provided that each entry is observed independently with probability $p$ where $$p \geq C \frac{r^{O(1)}}{n^{3/2}}$$ 
Moreover the constant $C$ depends polynomially on the incoherence, the condition number and the inverse of the lower bound on how far the factors are from being linearly dependent. 

\end{theorem}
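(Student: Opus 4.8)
The plan is to prove Theorem~\ref{thm:main} in three phases: a spectral initialization that lands inside a basin of attraction, a geometrically-convergent phase driven by {\sc Kronecker Alternating Minimization}, and a local routine that converts an exponentially accurate estimate into an exact one. Throughout, every sampling-dependent quantity is controlled by maintaining incoherence of the iterates and by concentration inequalities that hold precisely when $p n^{3/2} \gtrsim r^{O(1)}$ times the relevant factors in $\mu$ (incoherence), $\kappa$ (condition number) and the robust-independence parameter.

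\textbf{Initialization.} First I would run a spectral method in the spirit of Montanari--Sun on the rescaled partial observation $\tfrac1p T|_S$ (with the usual diagonal corrections), and show that the top-$r$ left singular subspaces of its three flattenings are, up to a small absolute-constant principal angle $\delta_0 < 1$, equal to $\spn\{x_i\}$, $\spn\{y_i\}$, $\spn\{z_i\}$ respectively, and that the resulting estimates can be taken incoherent. This is the step that forces $p \gtrsim r^{O(1)}/n^{3/2}$: matrix Bernstein applied to the centered sampling noise of a flattening shows its operator norm drops below the spectral gap exactly at the $n^{3/2}$-observation scale, using the incoherence of the rows.

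\textbf{Contraction and rank reduction.} This is the heart of the argument. Fix $\wh X, \wh Y$ at principal-angle error $\le \delta$ and solve \eqref{eq:subspacealtmin} for the $r^2$ vectors $\{\wh z_{i,j}\}$. By Observation~\ref{obs:tensorprincipalangles}, $\spn\{\wh x_i \otimes \wh y_j\}$ is within principal angle $O(\delta)$ of $\spn\{x_i \otimes y_j\}$, which contains the column space of the flattened signal; so the exact (fully observed) least-squares solution already has small error, and I would bound the difference between the observed and fully observed solutions using a restricted-isometry estimate for the Kronecker design matrix restricted to $S$ --- this is where $p n^{3/2} \gtrsim r^{O(1)}$ is used again and where incoherence of $\wh X, \wh Y$ is essential. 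Following the noisy-power-method template of Hardt, the ``signal'' part of the update contracts the angle by a factor bounded away from $1$ (depending on $\kappa$ and the robust-independence parameter), while the ``noise'' part scales with the current error, giving a genuine geometric contraction. Then I would replace the $n \times r^2$ matrix of $\wh z_{i,j}$'s by its best rank-$r$ approximation; since it is close in Frobenius norm to a rank-$r$ matrix whose column space is $O(\delta)$-close to $\spn\{z_i\}$, a Davis--Kahan / Eckart--Young argument shows the truncation inflates the angle error by only a constant, and a leverage-score argument shows it preserves incoherence. Composing the three updates of one round still contracts, so after $O(\log(1/\e))$ rounds all three subspace errors --- and the Frobenius error in the recovered $\sigma_i$'s --- are below $\e$.

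\textbf{Exact completion, running time, and the main obstacle.} Once the estimate is within a $\poly(1/n)$-ball of the truth I would invoke a local exact-completion subroutine: write the interpolation constraints $\wh T|_S = T|_S$ as a low-rank nonconvex system, show it is well-conditioned in this neighborhood (again via incoherence and concentration at the $n^{3/2}$ rate), and run a few Newton-type steps to reach the exact tensor. The near-linear running time then follows by never materializing the $n^3$ entries: each least-squares solve and each truncation touches only the $|S| = \widetilde O(n^{3/2})$ observed entries and is carried out with an iterative solver. I expect the main difficulty to be the contraction step --- specifically, obtaining the correct concentration/restricted-isometry bound for the \emph{Kronecker} design restricted to $S$ at the critical sampling rate (its rows are highly correlated, so this is more delicate than the matrix-completion analogue), and carefully tracking how the per-factor angle errors propagate through the $x_i \otimes y_j$ subspace and back after rank reduction. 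A closely related subtlety, pervasive throughout, is proving that the iterates remain incoherent; if a clean invariant is not available this may require inserting an explicit clipping step and re-checking that it does not destroy the contraction.
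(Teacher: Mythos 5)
Your first two phases line up closely with the paper's actual proof: the Montanari--Sun style spectral initialization with an explicit incoherence-enforcing truncation, the contraction analysis driven by Observation~\ref{obs:tensorprincipalangles}, the bound on the gap between the partially-observed and fully-observed least-squares solutions via matrix Chernoff at the $n^{3/2}$ rate, the rank-$r$ truncation controlled by an Eckart--Young/Wedin-type argument (Claim~\ref{claim:operator-anglebound}), and the preservation of incoherence via a per-row error bound (the paper's Lemma~\ref{lemma:rowbound} plus Claim~\ref{claim:row-incoherence} is essentially your ``leverage-score'' step, and your anticipated clipping is exactly what the paper does at initialization). The worry you flag about correlated rows of the Kronecker design is handled in the paper row-by-row: each row of the unfolding is an independent least-squares problem, so only the $r^2\times r^2$ Gram matrices $B_t\Pi_jB_t^T$ need to be controlled.

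The genuine gap is in your final phase. You propose to write the interpolation constraints $\wh T|_S = T|_S$ as a low-rank nonconvex system, claim it is ``well-conditioned'' near the truth, and run Newton steps. As stated this fails for two reasons. First, after alternating minimization you only have the three subspaces $\wh{V_x},\wh{V_y},\wh{V_z}$ (and an estimate of $T$ itself); you do not yet have the individual rank-one components or the $\sigma_i$'s, and your contraction phase does not actually produce them. The paper inserts an extra step here: project onto $\wh{V_x}\otimes\wh{V_y}\otimes\wh{V_z}$ and run a robust Jennrich decomposition (Theorem~\ref{thm:robust-jennrich}) to extract per-component estimates $\wh{\sigma_i},\wh{x_i},\wh{y_i},\wh{z_i}$ to inverse-polynomial accuracy; some such identification step is needed before any local refinement in the factor parametrization makes sense. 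Second, the naive parametrization $\sum_i(\wh{\sigma_i}+\Delta_{\sigma_i})(\wh{x_i}+\Delta_{x_i})\otimes(\wh{y_i}+\Delta_{y_i})\otimes(\wh{z_i}+\Delta_{z_i})$ has continuous gauge redundancies (rescaling the factors of a component leaves $\wh T(\Delta)$ unchanged), so the Hessian of the local objective has exact null directions and no well-conditioning or strong-convexity claim can hold; a plain Newton/Gauss--Newton iteration on this system is degenerate at every point. The paper removes the gauge by constraining the perturbations to the polytope $Q$ (the conditions $b_i\cdot y_i'=0$, $c_i\cdot z_i'=0$ and folding all scale into the $a_i$ direction), and only then proves strong convexity and smoothness of the restricted objective (Lemmas~\ref{lem:tractability} and~\ref{lem:smoothness}), which is what turns an inverse-polynomially accurate estimate into exact recovery. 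Your plan needs both of these ingredients, or some substitute for them, to close.
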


\noindent We state the assumptions precisely in Section~\ref{sec:assum}.  This algorithm combines the best of many worlds:

\begin{enumerate}
    \item[(1)] It achieves \textbf{exact completion}, even when the factors are highly correlated. 
    \item[(2)] It runs in \textbf{nearly linear time} in terms of the number of observations.
    \item[(3)] The alternating minimization phase \textbf{converges at a linear rate}. 
    \item[(4)] It \textbf{scales to thousands of dimensions}, whereas previous experiments had been limited to about a hundred dimensions. 
    \item[(5)] Experimentally, in the presence of noise, it still achieves strong guarantees. In particular, it achieves \textbf{nearly optimal prediction error}. 
\end{enumerate}

\noindent We believe that our work takes an important and significant step forward to making tensor completion practical, while nevertheless maintaining strong provable guarantees.

\section{Preliminaries}

\subsection{Model and Assumptions}\label{sec:assum}

As usual in matrix and tensor completion, we will need to make an incoherence assumption as otherwise the tensor could be mostly zero and have a few large entries that we never observe.  We will also assume that the components of the tensor are not too close to being linearly dependent as otherwise the tensor will be degenerate. (Even if we fully observed it, it would not be clear how to decompose it.)

\begin{definition}\label{def:incoherence}
Given a subspace $V \subset \R^n$ of dimension $r$, we say $V$ is $\mu$-incoherent if the projection of any standard basis vector $e_i$ onto $V$ has length at most $\sqrt{\mu r/ n}$
\end{definition}

\begin{aquestion}
Consider an $n \times n \times n$ tensor with a rank $r$ CP decomposition
\[
T = \sum_{i=1}^r \sigma_i (x_i \otimes y_i \otimes z_i)
\]
where $x_i,y_i,z_i$ are unit vectors and $\sigma_1 \geq \dots \geq \sigma_r > 0$.  We make the following assumptions
\begin{itemize}
    \item {\em \textbf{ Robust Linear Independence}}: The smallest singular value of the matrix with columns given by $x_1, \dots , x_r$ is at least $c$.  The same is true for $y_1, \dots , y_r$ and $z_1, \dots , z_r$.
    \item {\em \textbf{ Incoherence}}: The subspace spanned by $x_1, \dots , x_r$ is $\mu$-incoherent.  The same is true for $y_1, \dots ,y_r$ and $z_1, \dots , z_r$.
\end{itemize}

Finally we observe each entry independently with probability $p$ and our goal is to recover the original tensor $T$.  
\end{aquestion}

We will assume that for some sufficiently small constant $\delta$, 
\[
\max\left(r, \frac{\sigma_1}{\sigma_r}, \frac{1}{c}, \mu \right) \leq n^{\delta}.
\]
In other words, we are primarily interested in how the number of observations scales with $n$, provided that it has polynomial dependence on the other parameters. 

\subsection{Technical Overview}


\paragraph{Alternating Minimization, with a Twist}

Recall the standard formulation of alternating minimization in tensor completion, given in Equation \ref{eq:naivealtmin}. Unfortunately, this approach is difficult to analyze from a theoretical perspective. In fact, in Section \ref{sec:experiments} we observe experimentally that it can indeed get stuck. Moreover, even if we add randomness by looking at a random subset of the observations in each step, it converges at a prohibitively slow rate when the factors of the tensor are correlated. Instead, in Equation \ref{eq:subspacealtmin} we gave a subtle modification to the alternating minimization steps that prevents it from getting stuck. 
We then update $\wh{z}_1, \dots , \wh{z}_r$ to be the top $r$ left singular vectors of the $n \times r^2$ matrix with columns given by the $\wh{z}_{i,j}$. With this modification, we will be able to prove strong theoretical guarantees, even when the factors of the tensor are correlated.

The main tool in the analysis of our alternating minimization algorithm is the notion of the principal angles between subspaces.   Intuitively, the principal angle between two $r$-dimensional subspaces $U,V \subset \R^n$ is the largest angle between some vector in $U$ and the subspace $V$.  For matrix completion, Jain et al. \cite{jain2013low} and Hardt \cite{matrixaltmin} analyze alternating minimization by tracking the principal angles between the subspace spanned by the top $r$ singular vectors of the true matrix and of the estimate at each step.  Our analysis follows a similar pattern.  We rely on the following key observation as the starting point for our work: 
\begin{observation}\label{obs:tensorprincipalangles}
Given subspaces $U,V \in \R^n$ of the same dimension, let $\alpha(U, V)$ be the sine of the principal angle between $U$ and $V$.  Suppose we have subspaces $U_1, V_1 \subset \R^{n_1}$ of dimension $d_1$ and $U_2, V_2 \subset \R^{n_2}$ of dimension $d_2$, then
\[
\alpha(U_1 \otimes U_2, V_1 \otimes V_2) \leq \alpha(U_1, V_1) + \alpha(U_2, V_2).
\]
\end{observation}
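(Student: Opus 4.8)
The plan is to reduce the claim to a short computation with orthogonal projections. First I would record the standard fact that for subspaces $U, V \subset \R^n$ of the same dimension, writing $P_U, P_V$ for the orthogonal projections onto $U, V$, one has $\alpha(U,V) = \norm{(I - P_V)P_U}_{\op}$: the quantity $\norm{(I-P_V)P_U}_{\op} = \max_{u \in U,\, \norm{u}=1}\norm{(I-P_V)u}$ is exactly the largest distance from a unit vector of $U$ to $V$, i.e. the sine of the largest principal angle. When the two dimensions agree one moreover has $\alpha(U,V) = \norm{P_U - P_V}_{\op}$ (the nonzero singular values of $P_U - P_V$ come in pairs $\pm\sin\theta_i$ over the principal angles $\theta_i$), and this is the form I will ultimately use.

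Next I would observe that the orthogonal projection onto a tensor product of subspaces is the Kronecker product of the projections: if $\{a_i\}$ is an orthonormal basis of $U_1$ and $\{b_j\}$ an orthonormal basis of $U_2$, then $\{a_i \otimes b_j\}$ is an orthonormal basis of $U_1 \otimes U_2$, and summing the rank-one projections gives $P_{U_1 \otimes U_2} = \left(\sum_i a_i a_i^\top\right) \otimes \left(\sum_j b_j b_j^\top\right) = P_{U_1} \otimes P_{U_2}$, and likewise for $V_1 \otimes V_2$. Since $\dim(U_1 \otimes U_2) = d_1 d_2 = \dim(V_1 \otimes V_2)$, the characterization above applies to this pair, so $\alpha(U_1\otimes U_2, V_1 \otimes V_2) = \norm{P_{U_1}\otimes P_{U_2} - P_{V_1}\otimes P_{V_2}}_{\op}$.

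The heart of the argument is then a one-line telescoping identity followed by the triangle inequality. Writing $\Delta_1 = P_{U_1} - P_{V_1}$ and $\Delta_2 = P_{U_2} - P_{V_2}$, we have
\[
P_{U_1}\otimes P_{U_2} - P_{V_1}\otimes P_{V_2} = \Delta_1 \otimes P_{U_2} + P_{V_1}\otimes \Delta_2 .
\]
Taking operator norms and using $\norm{A\otimes B}_{\op} = \norm{A}_{\op}\norm{B}_{\op}$ together with $\norm{P_{U_2}}_{\op}\le 1$ and $\norm{P_{V_1}}_{\op}\le 1$ (orthogonal projections) yields
\[
\alpha(U_1\otimes U_2, V_1\otimes V_2) \le \norm{\Delta_1}_{\op}\norm{P_{U_2}}_{\op} + \norm{P_{V_1}}_{\op}\norm{\Delta_2}_{\op} \le \norm{\Delta_1}_{\op} + \norm{\Delta_2}_{\op} = \alpha(U_1,V_1) + \alpha(U_2,V_2),
\]
which is the desired bound.

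I do not expect a serious obstacle; the only points requiring care are bookkeeping ones — invoking the projection-difference formula for $\alpha$ only for equal-dimensional subspaces (which is fine, since every tensor product here has dimension $d_1 d_2$), and justifying $P_{U_1\otimes U_2} = P_{U_1}\otimes P_{U_2}$ and $\norm{A\otimes B}_{\op} = \norm{A}_{\op}\norm{B}_{\op}$, both immediate from product orthonormal bases. If one prefers to sidestep the equal-dimension remark entirely, the same telescoping works with the one-sided quantity: expand $I - P_{V_1}\otimes P_{V_2} = (I-P_{V_1})\otimes I + P_{V_1}\otimes(I-P_{V_2})$, multiply on the right by $P_{U_1}\otimes P_{U_2}$ to get $(I-P_{V_1})P_{U_1}\otimes P_{U_2} + P_{V_1}P_{U_1}\otimes (I-P_{V_2})P_{U_2}$, and bound the two summands by $\alpha(U_1,V_1)$ and $\alpha(U_2,V_2)$ respectively using $\norm{P_{U_2}}_{\op},\norm{P_{V_1}P_{U_1}}_{\op}\le 1$.
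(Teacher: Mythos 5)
Your proof is correct, but it takes a genuinely different route from the paper. You work at the level of orthogonal projections: identify $P_{U_1\otimes U_2}=P_{U_1}\otimes P_{U_2}$, telescope $P_{U_1}\otimes P_{U_2}-P_{V_1}\otimes P_{V_2}=(P_{U_1}-P_{V_1})\otimes P_{U_2}+P_{V_1}\otimes(P_{U_2}-P_{V_2})$, and finish with the triangle inequality and multiplicativity of the operator norm under Kronecker products, invoking $\alpha(U,V)=\norm{P_U-P_V}_{\op}$ for equal-dimensional subspaces (or, in your one-sided variant, only $\alpha(U,V)=\norm{(I-P_V)P_U}_{\op}$, which matches the paper's Definition \ref{def:principalangle} directly and sidesteps the equal-dimension fact). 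The paper instead works with the cross-Gram matrices: the cosine of the principal angle is the smallest singular value of $U_1^TV_1$, and since $(U_1\otimes U_2)^T(V_1\otimes V_2)=(U_1^TV_1)\otimes(U_2^TV_2)$, smallest singular values multiply, giving the exact identity $1-\alpha(U_1\otimes U_2,V_1\otimes V_2)^2=(1-\alpha(U_1,V_1)^2)(1-\alpha(U_2,V_2)^2)$, from which the additive bound follows. The paper's argument buys an exact formula for the tensor-product angle (hence the slightly sharper bound $\sqrt{\alpha(U_1,V_1)^2+\alpha(U_2,V_2)^2}$), while your telescoping argument is more modular: it extends verbatim to $k$-fold tensor products with a sum of $k$ terms, and its one-sided form needs nothing beyond the definition of $\alpha$. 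Both are elementary and either suffices for the way the observation is used downstream (Lemma \ref{lemma:frobeniusbound} and Claim \ref{claim:nondegenerate2}).
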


Thus if the subspaces spanned by the estimates $(\wh{x}_1, \dots , \wh{x}_r)$ and $(\wh{y}_1, \dots, \wh{y}_r)$ are close to the subspaces spanned by the true vectors $(x_1, \dots , x_r)$ and $(y_1, \dots , y_r)$, then the solution for $\{z_{i,j}\}$ in Equation \ref{eq:subspacealtmin} will have small error \--- i.e. $\sum_{1 \leq i,j \leq r} \wh{x}_i \otimes \wh{y}_j \otimes z_{i,j}$ will be close to $T$.  This means that the top $r$ principal components of the matrix with columns given by $\{\wh{z}_{i,j}\}$ must indeed be close to the space spanned by $z_1, \dots, z_r$.  For more details, see Corollary \ref{corollary:angle-monovariant}; this result allows us to prove that our alternating minimization steps make progress in reducing the principal angles of our subspace estimates.

On the other hand, Observation \ref{obs:tensorprincipalangles} does {\em not} hold if the tensor product is replaced with the Khatri-Rao product, which is what would be the natural route towards analyzing an algorithm that uses Equation \ref{eq:naivealtmin}.  To see this consider 
\[
U_1 = V_1 = U_2 = \begin{bmatrix} 1 & 0 \\ 0 & 1 \\ 0 & 0 \end{bmatrix}, V_2 = \begin{bmatrix} 0 & 1 \\ 1 & 0 \\ 0 & 0 \end{bmatrix}
\]
(where we really mean that $U_1,V_1,U_2,V_2$ are the subspaces spanned by the columns of the respective matrices).  Then the principal angles between $U_1,V_1$ and and $U_2,V_2$ are zero yet the principal angle between $U_1 \otimes U_2$ and $V_1 \otimes V_2$ is $\frac{\pi}{2}$ i.e. they are orthogonal.  This highlights another difficulty in analyzing Equation \ref{eq:naivealtmin}, namely that the iterates in the alternating minimization depend on the order of $(\wh{x}_1, \dots , \wh{x}_r)$ and $(\wh{y}_1, \dots, \wh{y}_r)$ and not just the subspaces themselves.

\paragraph{Initialization and Cleanup}
For the full theoretical analysis of our algorithm, in addition to the alternating minimization, we will need two additional steps.  First, we obtain a sufficiently good initialization by building on the work of Montanari and Sun \cite{spectral}.  We use their algorithm for estimating the subspaces spanned by the $\{x_i\}, \{y_i\}$ and $ \{z_i\}$.  However we then slightly perturb those estimates to ensure that our initial subspaces are incoherent. 

Next, in order to obtain exact completion rather than merely being able to estimate the entries to any desired inverse polynomial accuracy, we prove that exact tensor completion reduces to an optimization problem that is convex once we have subspace estimates that are sufficiently close to the truth (see Section \ref{sec:bitcomplexity} for a further discussion on this technicality).  To see this, applying robustness analyses of tensor decomposition \cite{moitra2018algorithmic} to our estimates at the end of the alternating minimization phase it follows that we can not only estimate the subspaces but also the entries and rank one components in the tensor decomposition to any inverse polynomial accuracy.  Now if our estimates are given by $\{\wh{\sigma_i}\},\{\wh{x_i}\},\{\wh{y_i}\}, \{\wh{z_i}\}$ we can write the expression
\begin{equation}\label{eq:informalconvexopt}
\wh{T}(\Delta) = \sum_{i=1}^r (\wh{\sigma_i} + \Delta_{\sigma_i})(\wh{x_i}  + \Delta_{x_i}) \otimes (\wh{y_i}  +\Delta_{y_i}) \otimes (\wh{z_i}  +\Delta_{z_i})
\end{equation}
and attempt to solve for $\Delta_{\sigma_i}, \Delta_{x_i}, \Delta_{y_i}, \Delta_{z_i}$ that minimize 
\[
\norm{\left(\wh{T}(\Delta) - T\right)\Big|_S}_2^2.
\]
The key observation is that since we can ensure $\{\wh{\sigma_i}\},\{\wh{x_i}\},\{\wh{y_i}\}, \{\wh{z_i}\}$ are all close to their true values, all of $\Delta_{\sigma_i}, \Delta_{x_i}, \Delta_{y_i}, \Delta_{z_i}$ are small and thus $\wh{T}(\Delta)$ can be approximated well by its linear terms.  If we only consider the linear terms, then solving for $\Delta_{\sigma_i}, \Delta_{x_i}, \Delta_{y_i}, \Delta_{z_i}$ is simply a linear least squares problem. Intuitively, $\norm{\left(\wh{T}(\Delta) - T\right)\Big|_S}_2^2$ must also be convex because the contributions from the  non-linear terms in $\wh{T}(\Delta)$ are small.  The precise formulation that we use in our algorithm will be slightly different from Equation \ref{eq:informalconvexopt} because in Equation \ref{eq:informalconvexopt}, there are certain redundancies i.e. ways to set $\Delta_{\sigma_i}, \Delta_{x_i}, \Delta_{y_i}, \Delta_{z_i}$ that result in the same $\wh{T}(\Delta)$.  See Section \ref{sec:convexopt} and in particular, Lemma \ref{lem:tractability} for details.  


\subsection{Basic Facts}
We use the following notation:
\begin{itemize}
    \item Let $U_x(T)$ denote the unfolding of $T$ into an $n \times n^2$ matrix where the dimension of length $n$ corresponds to the $x_i$.  Define $U_y(T), U_z(T)$ similarly.
    \item  Let $V_x$ be the subspace spanned by $x_1, \dots , x_r$ and define $V_y,V_z$ similarly.
    \item Let $M_x$ be the matrix whose columns are $x_1, \dots , x_r$ and define $M_y, M_z$ similarly. 
\end{itemize}

The following claim, which states that the unfolded tensor $U_x(T)$ is not a degenerate matrix, will be used repeatedly later on.

\begin{claim}\label{claim:nondegenerate}
The $r$\ts{th} largest singular value of $U_x(T)$ is at least $c^3\sigma_r$
\end{claim}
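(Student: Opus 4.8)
The plan is to lower-bound the $r$\ts{th} singular value of $U_x(T)$ by exhibiting a factorization of this unfolding and bounding the smallest singular value of each factor. Writing $T = \sum_{i=1}^r \sigma_i\, x_i \otimes y_i \otimes z_i$, the unfolding along the first mode is $U_x(T) = M_x\, D\, (M_y \odot M_z)^{\top}$, where $D = \textsf{diag}(\sigma_1, \dots, \sigma_r)$ and $M_y \odot M_z$ is the $n^2 \times r$ Khatri--Rao product whose $i$\ts{th} column is $y_i \otimes z_i$. Since $M_x$ has $r$ columns and smallest singular value at least $c$ by Robust Linear Independence, and $D$ has smallest singular value $\sigma_r$, it suffices to show that the $r \times r$ Gram matrix $(M_y \odot M_z)^{\top}(M_y \odot M_z)$ has smallest singular value at least $c^4$; then the $r$\ts{th} singular value of the product is at least $c \cdot \sigma_r \cdot c^2 = c^3 \sigma_r$.

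First I would recall the standard identity that the Gram matrix of the Khatri--Rao product is the Hadamard (entrywise) product of the Gram matrices: $(M_y \odot M_z)^{\top}(M_y \odot M_z) = (M_y^{\top} M_y) \circ (M_z^{\top} M_z)$, since the $(i,j)$ entry of the left side is $\dot{y_i \otimes z_i}{y_j \otimes z_j} = \dot{y_i}{y_j}\dot{z_i}{z_j}$. Then I would invoke the fact that the Hadamard product of two positive semidefinite matrices has smallest eigenvalue bounded below in terms of the smallest eigenvalues and diagonal entries of the factors: specifically, if $A, B \succeq 0$ then $\lambda_{\min}(A \circ B) \geq \lambda_{\min}(A)\cdot \min_i B_{ii}$ (this is a consequence of the Schur product theorem, viewing $A \circ B$ as a principal submatrix of $A \otimes B$, or via the fact that $A \circ B \succeq \lambda_{\min}(A)\, (I \circ B) = \lambda_{\min}(A)\,\textsf{diag}(B_{ii})$). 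Applying this with $A = M_y^{\top} M_y$, whose smallest eigenvalue is at least $c^2$, and $B = M_z^{\top} M_z$, whose diagonal entries are $\norm{z_i}_2^2 = 1$ (the $z_i$ are unit vectors), gives $\lambda_{\min}\big((M_y^{\top} M_y)\circ(M_z^{\top} M_z)\big) \geq c^2 \cdot 1 = c^2$, so the smallest singular value of $M_y \odot M_z$ is at least $c$. Combining, the $r$\ts{th} singular value of $U_x(T) = M_x D (M_y \odot M_z)^{\top}$ is at least $\sigma_{\min}(M_x)\cdot \sigma_{\min}(D) \cdot \sigma_{\min}(M_y \odot M_z) \geq c \cdot \sigma_r \cdot c \geq c^3 \sigma_r$, where the last inequality uses $c \leq 1$ (which holds since the columns are unit vectors, so $\sigma_{\min}(M_x) \leq 1$). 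Actually this gives the slightly stronger bound $c^2 \sigma_r$, but $c^3\sigma_r$ follows a fortiori; alternatively one symmetrizes and uses $\lambda_{\min}(A \circ B) \geq \lambda_{\min}(A)\lambda_{\min}(B) \cdot (\text{something})$ if one wants to track the exact power, but the clean bound above already suffices.

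The main obstacle, such as it is, is getting the lower bound on $\sigma_{\min}(M_y \odot M_z)$ right: one must resist the temptation to bound it by $\sigma_{\min}(M_y)\sigma_{\min}(M_z)$ directly (which would be false in general, as the Khatri--Rao product is not multiplicative on singular values), and instead route through the Hadamard-product-of-Gram-matrices identity and the Schur product theorem. Once that lemma is in hand, the rest is the elementary observation that multiplying by $M_x$ on the left (full column rank, $\sigma_{\min} \geq c$) and $D$ in the middle ($\sigma_{\min} = \sigma_r$) can only shrink the $r$\ts{th} singular value by factors of $c$ and $\sigma_r$ respectively, using that all matrices involved have exactly $r$ columns so that Weyl/Courant--Fischer applies cleanly to the product.
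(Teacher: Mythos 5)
Your proposal is correct, and it uses the same overall skeleton as the paper — factor the unfolding as $U_x(T) = M_x D N$ with $N = (M_y \odot M_z)^{\top}$ and multiply lower bounds on the smallest singular values of the three factors — but it handles the one nontrivial step, lower-bounding $\sigma_{\min}$ of the Khatri--Rao factor, by a different lemma. The paper observes that the rows $y_i \otimes z_i$ of $N$ are a subset of the rows of the Kronecker product $M_y^{\top} \otimes M_z^{\top}$, whose singular values are exactly products of singular values; since $NN^{\top}$ is then a principal submatrix of a Gram matrix bounded below by $c^4 I$, it gets $\sigma_{\min}(N) \geq c^2$ and hence the stated $c^3\sigma_r$. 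You instead pass to the Gram identity $(M_y \odot M_z)^{\top}(M_y \odot M_z) = (M_y^{\top}M_y) \circ (M_z^{\top}M_z)$ and apply the Schur-product bound $\lambda_{\min}(A \circ B) \geq \lambda_{\min}(A)\min_i B_{ii}$, which exploits that the $z_i$ are unit vectors and yields the stronger bound $\sigma_{\min}(M_y \odot M_z) \geq c$, hence $c^2\sigma_r \geq c^3\sigma_r$. Both arguments are short; the paper's is arguably more elementary (no Schur product theorem needed), while yours is sharper and generalizes to the common situation where one only controls the column norms of one factor.

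One small correction to an aside: your claim that bounding $\sigma_{\min}(M_y \odot M_z)$ below by $\sigma_{\min}(M_y)\sigma_{\min}(M_z)$ "would be false in general" is itself wrong — that inequality is true for the Khatri--Rao product, and the paper's row-subset-of-the-Kronecker argument is precisely a proof of it (your own Hadamard argument also implies it, since $\min_i \norm{z_i} \geq \sigma_{\min}(M_z)$). What fails is only equality (the Khatri--Rao product's singular values are not the products of the factors' singular values). This slip does not affect your proof, since you never rely on the claimed falsity and your route gives a stronger bound anyway.
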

\begin{proof}
  Let $D$ be the $r \times r$ diagonal matrix whose entries are $\sigma_1, \dots , \sigma_r$.  Let $N$ be the $r \times n^2$ matrix whose rows are $y_1 \otimes z_1, \dots , y_r \otimes z_r$ respectively.  Then
\[
U_x(T) = M_xDN
\]
For any unit vector $v \in V_x$, $||vM_x||_2 \geq c$.  Also $N$ consists of a subset of the rows of $M_y \otimes M_z$ so the smallest singular value of $N$ is at least $c^2$.  Thus 
\[
||vU_x(T)||_2 \geq c^3\sigma_r.
\]
Since $V_x$ has dimension $r$, this implies that the $r$\ts{th} largest singular value of $U_x(T)$ is at least $c^3\sigma_r$.
\end{proof}

A key component of our analysis will be tracking principal angles between subspaces.  Intuitively, the principal angle between two $r$-dimensional subspaces $U,V \subset \R^n$ is the largest angle between some vector in $U$ and the subspace $V$.
\begin{definition}\label{def:principalangle}
For two subspaces $U,V \subset \R^n$ of dimension $r$, we let $\alpha(U,V)$ be the sine of the principal angle between $U$ and $V$.  More precisely, if $U$ is a $n \times r$ matrix whose columns form an orthonormal basis for $U$ and $V_{\perp}$ is a $n \times (n-r)$ matrix whose columns form an orthonormal basis for the orthogonal complement of $V$, then 
\[
\alpha(U,V) = \norm{V_{\perp}^T U}_{\op}
\]
\end{definition}

\begin{observation}[Restatement of Observation \ref{obs:tensorprincipalangles}]
Given subspaces $U_1, V_1 \subset \R^{n_1}$ of dimension $d_1$ and $U_2, V_2 \subset \R^{n_2}$ of dimension $d_2$, we have
\[
\alpha(U_1 \otimes U_2, V_1 \otimes V_2) \leq \alpha(U_1, V_1) + \alpha(U_2, V_2)
\]
\end{observation}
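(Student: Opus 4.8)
The plan is to reduce everything to elementary identities about orthogonal projections. First I would record the fact that the orthogonal projection onto a tensor product of subspaces factors as a tensor product of the projections: writing $P_1,P_2$ for the orthogonal projections onto $U_1,U_2$ and $Q_1,Q_2$ for those onto $V_1,V_2$, one has $P_{U_1\otimes U_2}=P_1\otimes P_2$ and $P_{V_1\otimes V_2}=Q_1\otimes Q_2$. This is immediate: tensoring orthonormal bases of the factors gives an orthonormal basis of the tensor product, and $\sum_{i,j}(u_i\otimes u'_j)(u_i\otimes u'_j)^T=\big(\sum_i u_iu_i^T\big)\otimes\big(\sum_j u'_ju'^T_j\big)$. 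I would also observe that the matrix definition of $\alpha$ in Definition~\ref{def:principalangle} satisfies $\alpha(U,V)=\norm{(I-Q)P}_{\op}$, where $P,Q$ are the projections onto $U,V$: this is because $V_\perp V_\perp^T=I-Q$, $UU^T=P$, and pre-/post-composing with a matrix having orthonormal columns (or its transpose) does not change the operator norm. Note also that $U_1\otimes U_2$ and $V_1\otimes V_2$ have the same dimension $d_1d_2$, so $\alpha(U_1\otimes U_2,V_1\otimes V_2)$ is well-defined.

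The key algebraic step is the telescoping identity on the tensor product space
\[
I-Q_1\otimes Q_2 = (I-Q_1)\otimes I + Q_1\otimes(I-Q_2),
\]
where each $I$ is the identity on the appropriate factor. Composing on the right with $P_1\otimes P_2$ and using the mixed-product rule $(A\otimes B)(C\otimes D)=(AC)\otimes(BD)$ gives
\[
\big(I-Q_1\otimes Q_2\big)\big(P_1\otimes P_2\big)=\big((I-Q_1)P_1\big)\otimes P_2+\big(Q_1P_1\big)\otimes\big((I-Q_2)P_2\big).
\]
Taking operator norms, applying the triangle inequality, submultiplicativity, and $\norm{A\otimes B}_{\op}=\norm{A}_{\op}\norm{B}_{\op}$, the left side is at most $\norm{(I-Q_1)P_1}_{\op}\norm{P_2}_{\op}+\norm{Q_1P_1}_{\op}\norm{(I-Q_2)P_2}_{\op}$. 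Since $P_2$ is a projection and $Q_1P_1$ is a product of projections, both have operator norm at most $1$, so this is bounded by $\norm{(I-Q_1)P_1}_{\op}+\norm{(I-Q_2)P_2}_{\op}=\alpha(U_1,V_1)+\alpha(U_2,V_2)$. The left-hand side equals $\norm{(I-P_{V_1\otimes V_2})P_{U_1\otimes U_2}}_{\op}=\alpha(U_1\otimes U_2,V_1\otimes V_2)$, which completes the proof.

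I do not expect a real obstacle: the argument is short once the pieces are assembled. The only points needing a bit of care are (i) checking that $\alpha$ as defined via $\norm{V_\perp^TU}_{\op}$ genuinely coincides with $\norm{(I-Q)P}_{\op}$, i.e. handling the orthonormal-column factors correctly, and (ii) keeping track of which identity matrices live on which tensor factor when expanding the telescoping identity. Both are routine linear algebra.
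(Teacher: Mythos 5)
Your proof is correct, but it takes a genuinely different route from the paper's. The paper works with the cosine: it notes that $\alpha(U,V)^2 = 1-\sigma_{\min}(V^TU)^2$ for equal-dimensional subspaces, uses the identity $(U_1\otimes U_2)^T(V_1\otimes V_2)=(U_1^TV_1)\otimes(U_2^TV_2)$ together with multiplicativity of the smallest singular value under Kronecker products, and obtains the exact relation $1-\alpha(U_1\otimes U_2,V_1\otimes V_2)^2=(1-\alpha(U_1,V_1)^2)(1-\alpha(U_2,V_2)^2)$, from which the stated bound follows; note this exact product formula is slightly stronger than what you prove, since it yields $\alpha(U_1\otimes U_2,V_1\otimes V_2)\le\sqrt{\alpha(U_1,V_1)^2+\alpha(U_2,V_2)^2}$. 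You instead recast $\alpha(U,V)$ as $\norm{(I-Q)P}_{\op}$ with orthogonal projections $P,Q$ (your justification that this matches Definition~\ref{def:principalangle} is sound, since $V_\perp V_\perp^T=I-Q$, $UU^T=P$, and multiplying by matrices with orthonormal columns, or their transposes on the right, preserves the operator norm), then use the telescoping identity $I-Q_1\otimes Q_2=(I-Q_1)\otimes I+Q_1\otimes(I-Q_2)$, the mixed-product rule, and the triangle inequality. What your approach buys is robustness and generality: it never converts between sines and cosines, needs only $\norm{A\otimes B}_{\op}=\norm{A}_{\op}\norm{B}_{\op}$ rather than the smallest-singular-value fact, and telescopes immediately to products of more than two factors; what it gives up is the exact equality, which the paper's computation provides for free (though the paper only ever uses the weaker additive bound).
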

\begin{proof}
We slightly abuse notation and use $U_1,V_1, U_2,V_2$ to denote matrices whose columns form an orthonormal basis of the respective subspaces.  Note that the cosine of the principal angle between $U_1$ and $V_1$ is equal to the smallest singular value of $U_1^TV_1$ and similar for $U_2$ and $V_2$.  Next note 
\[
(U_1 \otimes U_2)^T (V_1 \otimes V_2) = (U_1^TV_1) \otimes (U_2^TV_2).
\]
Thus 
\[
1 - \alpha(U_1 \otimes U_2, V_1 \otimes V_2)^2 = (1 -\alpha(U_1, V_1)^2)(1 - \alpha(U_2, V_2)^2)
\]
and we conclude
\[
\alpha(U_1 \otimes U_2, V_1 \otimes V_2) \leq \alpha(U_1, V_1) + \alpha(U_2, V_2)
\]

\end{proof}

In the analysis of our algorithm, we will also need to understand the incoherence of the tensor product of vector spaces.  The following claim gives us a simple relation for this. 

\begin{claim}\label{claim:incoherencebound}
Suppose we have subspaces $V_1 \subset \R^{n_1}$ and $V_2 \subset \R^{n_2}$ with dimension $r_1, r_2$ that are $\mu_1$ and $\mu_2$ incoherent respectively.  Then $V_1 \otimes V_2$ is $\mu_1\mu_2$-incoherent.
\end{claim}
\begin{proof}
Let $M_1,M_2$ be matrices whose columns are orthonormal bases for $V_1,V_2$ respectively.  Then the columns of $M_1 \otimes M_2$ form an orthonormal basis for $V_1 \otimes V_2$.  All rows of $M_1$ have norm at most $\sqrt{\frac{\mu_1 r_1}{n_1}}$ and all rows of $M_2$ have norm at most $\sqrt{\frac{\mu_2 r_2}{n_2}}$ so thus all rows of $M_1 \otimes M_2$ have norm at most $\sqrt{\frac{\mu_1 \mu_2 r_1r_2}{n_1n_2}}$ and we are done.
\end{proof}

\subsection{Matrix and Incoherence Bounds}\label{sec:general-ineqs}
Here, we prove a few general results that we will use later to bound the principal angles and incoherence of the subspace estimates at each step of our algorithm.

\begin{claim}\label{claim:row-incoherence}
Let $X$ be an $n \times m$ matrix such that: 
\begin{itemize}
    \item The rows of $X$ have norm at most $c \sqrt{\frac{r}{n}}$
    \item The $r$\ts{th} singular value of $X$ is at least $\rho$
\end{itemize}
Then the subspace spanned by the top $r$ left singular vectors of $X$ is $\frac{c^2 r}{\rho^2}$-incoherent.
\end{claim}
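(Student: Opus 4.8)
The plan is to read off the incoherence of the top-$r$ left singular subspace of $X$ directly from the row norms of $X$, using the singular value lower bound as the bridge. Let $X = \sum_j \sigma_j u_j v_j^T$ be a singular value decomposition with $\sigma_1 \geq \sigma_2 \geq \cdots \geq 0$, let $U$ be the $n \times r$ matrix with columns $u_1, \dots, u_r$, and let $P = UU^T = \sum_{j=1}^r u_j u_j^T$ denote the orthogonal projection onto the subspace in question. Since the $r$th singular value is at least $\rho > 0$, this subspace has dimension exactly $r$; if $\sigma_r = \sigma_{r+1}$ the subspace need not be unique, but the argument below applies to any valid choice. By Definition~\ref{def:incoherence}, it suffices to show $\norm{P e_i}_2^2 \leq \frac{c^2 r^2}{\rho^2 n}$ for every standard basis vector $e_i$.

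The one real step is to compare $\norm{P e_i}_2$ with the norm of the $i$th row of $X$. First I would write $\norm{P e_i}_2^2 = \sum_{j=1}^r \langle u_j, e_i \rangle^2$. Then, since $e_i^T X = \sum_j \sigma_j \langle u_j, e_i \rangle v_j^T$ and the $v_j$ are orthonormal, $\norm{e_i^T X}_2^2 = \sum_j \sigma_j^2 \langle u_j, e_i \rangle^2 \geq \sum_{j=1}^r \sigma_j^2 \langle u_j, e_i \rangle^2 \geq \sigma_r^2 \sum_{j=1}^r \langle u_j, e_i \rangle^2 = \sigma_r^2 \norm{P e_i}_2^2 \geq \rho^2 \norm{P e_i}_2^2$, where we used $\sigma_j \geq \sigma_r \geq \rho$ for all $j \leq r$. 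Combining this with the hypothesis $\norm{e_i^T X}_2^2 \leq c^2 r / n$ yields $\norm{P e_i}_2^2 \leq \frac{c^2 r}{\rho^2 n} \leq \frac{c^2 r^2}{\rho^2 n}$, which establishes $\frac{c^2 r}{\rho^2}$-incoherence (in fact it gives the sharper bound $\mu = c^2/\rho^2$).

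There is no serious obstacle here: the content is just the observation that a left singular vector attached to a large singular value cannot concentrate on a single coordinate when every row of $X$ has small norm, quantified by the chain of inequalities above. The only points to watch are the direction of the singular value comparison ($\sigma_j^2 \geq \sigma_r^2$ for $j \leq r$, which is exactly why we keep only the top $r$ terms) and the possible non-uniqueness of the subspace when $\sigma_r$ is a repeated singular value, both handled as noted above.
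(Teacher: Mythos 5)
Your proof is correct and follows essentially the same route as the paper: both arguments bound the projection of a standard basis vector onto the top-$r$ left singular subspace via the row norms of $X$, using the lower bound $\sigma_r \geq \rho$ as the bridge. The only difference is that by working through the SVD identity $\norm{e_i^T X}_2^2 = \sum_j \sigma_j^2 \langle u_j, e_i\rangle^2$ you avoid the paper's term-by-term Cauchy--Schwarz step (which costs a factor of $\sqrt{r}$), so you actually obtain the sharper incoherence parameter $c^2/\rho^2$ rather than the stated $c^2 r/\rho^2$.
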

\begin{proof}
Let the top $r$ left singular vectors of $X$ be $v_1, \dots , v_r$.  There are vectors $u_1, \dots , u_r$ such that $v_i^T = Xu_i$ for all $i$.  Furthermore, we can ensure
\[
||u_i||_2 \leq \frac{1}{\rho}
\]
for all $i$.  Now for a standard basis vector say $e_j$, its projection onto the subspace spanned by $v_1, \dots , v_r$ has norm
\[
\sqrt{(e_j^TXu_1)^2 + \dots + (e_j^TXu_r)^2} \leq \sqrt{r} \cdot c\sqrt{\frac{r}{n}}\frac{1}{\rho}. 
\]
Thus, the column space of $X$ is $\frac{c^2 r}{\rho^2}$-incoherent.
\end{proof}

\begin{claim}\label{claim:operator-anglebound}
Let $A,B$ be $n \times m$ matrices with rank $r$.  Let $\delta = \norm{A - B}_{\op}$.  Assume that the $r$\ts{th} singular value of $B$ is at least $\rho$.   Then the sine of the principal angle between the subspaces spanned by the columns of $A$ and $B$ is at most $\frac{\delta}{\rho}$.
\end{claim}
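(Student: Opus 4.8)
The plan is a short argument resting on two facts: the symmetry of the sine of the principal angle between two subspaces of the \emph{same} dimension, and the triviality $(I-P)A = 0$ whenever the columns of $A$ lie in the range of an orthogonal projection $P$. Throughout, let $P_A$ and $P_B$ denote the orthogonal projections onto $\mathrm{col}(A)$ and $\mathrm{col}(B)$. By Definition~\ref{def:principalangle} (writing the orthogonal complement basis $B_\perp$ and using $B_\perp B_\perp^T = I - P_B$) we have $\alpha(\mathrm{col}(A),\mathrm{col}(B)) = \norm{(I - P_B)P_A}_{\op}$, so the goal is to show this is at most $\delta/\rho$.

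First I would record the symmetry $\norm{(I-P_B)P_A}_{\op} = \norm{(I-P_A)P_B}_{\op}$, which uses that $\mathrm{col}(A)$ and $\mathrm{col}(B)$ both have dimension $r$. Indeed $\norm{(I-P_B)P_A}_{\op}^2$ equals the largest eigenvalue of $P_A(I-P_B)P_A = P_A - P_AP_BP_A$ on $\mathrm{col}(A)$, i.e. $1$ minus the smallest eigenvalue of $P_AP_BP_A$ there; symmetrically $\norm{(I-P_A)P_B}_{\op}^2 = 1$ minus the smallest eigenvalue of $P_BP_AP_B$ on $\mathrm{col}(B)$. The nonzero eigenvalues of $P_AP_BP_A$ and of $P_BP_AP_B$ coincide, since both are the squared singular values of $P_AP_B$; as each of these operators has exactly $r$ eigenvalues counted with multiplicity on its $r$-dimensional invariant subspace, the two full multisets agree, and in particular the two minima agree. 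Hence it suffices to bound $\norm{(I-P_A)P_B}_{\op}$.

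Now take an arbitrary unit vector $u \in \mathrm{col}(B)$ and set $v = B^{+}u$, the image under the Moore--Penrose pseudoinverse of $B$. Since $u \in \mathrm{col}(B)$ we have $Bv = u$, and since $B$ has rank $r$ with $r$th singular value at least $\rho$ we get $\norm{v}_2 \le \norm{B^{+}}_{\op}\norm{u}_2 = \sigma_r(B)^{-1} \le 1/\rho$. Because the columns of $A$ lie in $\mathrm{col}(A)$ we have $(I-P_A)Av = 0$, so
\[
(I - P_A)u = (I-P_A)Bv = (I-P_A)(B - A)v,
\]
which gives $\norm{(I-P_A)u}_2 \le \norm{B-A}_{\op}\,\norm{v}_2 \le \delta/\rho$. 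Taking the supremum over unit $u \in \mathrm{col}(B)$ yields $\norm{(I-P_A)P_B}_{\op} \le \delta/\rho$, and by the symmetry above this is $\alpha(\mathrm{col}(A),\mathrm{col}(B))$.

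The one subtlety — and the reason for first flipping to $\norm{(I-P_A)P_B}_{\op}$ — is that applying the same pseudoinverse trick directly to a unit vector of $\mathrm{col}(A)$, writing it as $Av$ and bounding $\norm{v}_2 \le \sigma_r(A)^{-1}$, only produces $\delta/\sigma_r(A) \le \delta/(\rho-\delta)$, since we are handed a lower bound on $\sigma_r(B)$ rather than on $\sigma_r(A)$. Routing through $\mathrm{col}(B)$ makes the pseudoinverse estimate land on $\sigma_r(B) \ge \rho$ and recovers the clean constant. (The case $\delta \ge \rho$ needs no separate treatment: the stated bound is then at least $1$, which a sine trivially satisfies.)
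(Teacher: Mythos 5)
Your proof is correct and follows essentially the same route as the paper: both rest on the identity that projecting $B$ onto the orthogonal complement of $\mathrm{col}(A)$ kills $A$, so $\norm{(I-P_A)B}_{\op} = \norm{(I-P_A)(B-A)}_{\op} \leq \delta$, combined with $\sigma_r(B) \geq \rho$ to convert this into an angle bound. The only difference is that you make explicit two steps the paper compresses into the single inequality $\norm{V_{\perp}B}_{\op} \geq \rho\,\alpha(V_A,V_B)$, namely the pseudoinverse estimate and the symmetry of the sine of the principal angle for equal-dimensional subspaces.
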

\begin{proof}
Let $V_A$ be the column space of $A$ and $V_B$ be the column space of $B$.  Let $V_{\perp}$ be a $(n-r) \times n$ matrix whose rows form an orthonormal basis of the orthogonal complement of $V_A$.  Note that 
\[
\norm{V_{\perp}B} \geq \rho \alpha(V_A, V_B)
\]
Now $\norm{V_{\perp}B} \leq \norm{V_{\perp}(B - A)} \leq \delta$. 
Thus $\alpha(V_A, V_B) \leq \frac{\delta}{\rho}$ which completes the proof.
\end{proof}

\subsection{Sampling Model}
In our sampling model, we observe each entry of the tensor $T$ independently with probability $p$.  In our algorithm we will require splitting the observations into several independent samples.  To do this, we rely on the following claim:
\begin{claim}\label{claim:samplesplitting}
Say we observe a sample $\wh{T}$ where every entry of $T$ is revealed independently with probability $p$.  Say $p_1 + p_2 \leq p$.  We can construct two independent samples $\wh{T_1}, \wh{T_2}$ where in the first sample, every entry is observed independently with probability $p_1$ and in the second, every entry is observed independently with probability $p_2$.
\end{claim}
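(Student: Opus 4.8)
The plan is to handle each entry of the tensor separately, constructing from the single bit ``is this entry revealed in $\wh{T}$'' (together with fresh internal randomness) a \emph{pair} of bits, one for each of $\wh{T_1}$ and $\wh{T_2}$, whose joint law is chosen so that the two resulting sampling processes come out exactly independent. The naive approach --- routing each revealed entry to sample $1$, to sample $2$, or to neither in a mutually exclusive way --- does not work, because it makes the two samples negatively correlated. The fix is to \emph{allow} an entry to land in both $\wh{T_1}$ and $\wh{T_2}$, and to pick the conditional joint distribution correctly.

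Concretely, index the entries of $T$ by $e$ and let $\chi_e \in \{0,1\}$ indicate that $e$ is revealed in $\wh{T}$, so the $\chi_e$ are i.i.d.\ $\mathrm{Ber}(p)$. For each $e$ I will produce a pair $(B_1^{(e)}, B_2^{(e)}) \in \{0,1\}^2$ as follows. If $\chi_e = 0$, set $B_1^{(e)} = B_2^{(e)} = 0$. If $\chi_e = 1$, draw $(B_1^{(e)}, B_2^{(e)})$ (using fresh randomness, independently across $e$) from the distribution $q$ on $\{0,1\}^2$ given by
\[
q(1,1) = \frac{p_1 p_2}{p}, \qquad q(1,0) = \frac{p_1(1-p_2)}{p}, \qquad q(0,1) = \frac{(1-p_1)p_2}{p}, \qquad q(0,0) = 1 - \frac{p_1 + p_2 - p_1 p_2}{p}.
\]
One checks $q$ is a genuine probability distribution: the first three values are nonnegative, and $q(0,0) \geq 0$ because $p_1 + p_2 - p_1 p_2 \leq p_1 + p_2 \leq p$; the four values sum to $1$. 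Now let $\wh{T_1}$ reveal entry $e$, with its true value $T_e$ (which is available whenever $B_1^{(e)} = 1$, since then $\chi_e = 1$), exactly when $B_1^{(e)} = 1$, and define $\wh{T_2}$ analogously from $B_2^{(e)}$.

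It then remains to verify the claimed distributions. For a fixed $e$, $\Pr[B_1^{(e)} = 1] = \Pr[\chi_e = 1]\,(q(1,1)+q(1,0)) = p \cdot \frac{p_1}{p} = p_1$, and similarly $\Pr[B_2^{(e)} = 1] = p_2$; moreover $\Pr[B_1^{(e)} = 1,\, B_2^{(e)} = 1] = p \cdot q(1,1) = p_1 p_2 = \Pr[B_1^{(e)}=1]\,\Pr[B_2^{(e)}=1]$, so $B_1^{(e)}$ and $B_2^{(e)}$ are independent $\mathrm{Ber}(p_1)$ and $\mathrm{Ber}(p_2)$ variables. Since the pairs are drawn independently over $e$, the family $(B_1^{(e)})_e$ is independent of $(B_2^{(e)})_e$, and hence $\wh{T_1}$ and $\wh{T_2}$ are independent, with each entry of $\wh{T_1}$ (resp.\ $\wh{T_2}$) revealed independently with probability $p_1$ (resp.\ $p_2$).

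The only real idea here is recognizing that overlap between the two samples must be allowed and choosing $q$ so that the induced marginals decouple; the rest is a routine check. I expect no genuine obstacle. The same construction splits $\wh{T}$ into $k$ mutually independent samples with probabilities $p_1, \dots, p_k$ whenever $\sum_i p_i \leq p$, using the conditional law $q(b) = \frac1p \prod_i p_i^{b_i}(1-p_i)^{1-b_i}$ for $b \neq \mathbf{0}$ (nonnegativity of $q(\mathbf{0})$ again following from $1 - \prod_i(1-p_i) \leq \sum_i p_i \leq p$), which is the form actually invoked later.
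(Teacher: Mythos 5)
Your construction is exactly the paper's: conditioned on an entry being revealed in $\wh{T}$, the paper routes it to both samples with probability $\frac{p_1p_2}{p}$, to only the first with probability $\frac{p_1 - p_1p_2}{p}$, and to only the second with probability $\frac{p_2 - p_1p_2}{p}$, which is precisely your distribution $q$, and your verification of the marginals and independence is the routine check the paper leaves implicit. So the proposal is correct and takes essentially the same approach as the paper.
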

\begin{proof}
For each entry in $\wh{T}$ that we observe, reveal it in only $\wh{T_1}$ with probability $\frac{p_1 - p_1p_2}{p}$, reveal it in only $\wh{T_2}$ with probability $\frac{p_2 - p_1p_2}{p}$ and reveal it in both with probability $\frac{p_1p_2}{p}$.  Otherwise, don't reveal the entry in either $\wh{T_1}$ or $\wh{T_2}$.  It can be immediately verified that $\wh{T_1}$ and $\wh{T_2}$ constructed in this way have the desired properties.
\end{proof}

\section{Our Algorithms}\label{sec:algorithm}

Here we will give our full algorithm for tensor completion, which consists of the three phases discussed earlier, initialization through spectral methods, our new variant of alternating minimization, and finally a post-processing step to recover the entries of the tensor exactly (when there is no noise). In Section \ref{sec:lineartime}, we will show that our algorithm can be implemented so that it runs in nearly linear time in terms of the number of observations.

\begin{algorithm}[H]
\caption{{\sc Full Exact Tensor Completion} }
\begin{algorithmic} 
\State \textbf{Input:} Let $\wh{T}$ be a sample where we observe each entry of $T$ independently with probability $p$
\State Let $p_1,p_2,p_3$ be parameters with $p_1 + p_2 + p_3 \leq p$
\State Split $\wh{T}$ into three samples $\wh{T_1}, \wh{T_2}, \wh{T_3}$ using Claim \ref{claim:samplesplitting} such that
\begin{itemize}
    \item In $\wh{T_1}$ each entry is observed with probability $p_1$
    \item In $\wh{T_2}$ each entry is observed with probability $p_2$
    \item In $\wh{T_3}$ each entry is observed with probability $p_3$
\end{itemize}
\State Run {\sc Initialization} using $\wh{T_1}$ to obtain initial estimates $V_x^0, V_y^0, V_z^0$ for the subspaces $V_x,V_y,V_z$
\State Run {\sc Kronecker Alternating Minimization} using $\wh{T_2}$ and initial subspace estimates $V_x^0, V_y^0, V_z^0$ to obtain refined subspace estimates $\wh{V_x}, \wh{V_y}, \wh{V_z}$
\State Run {\sc Post-Processing via Convex Optimization} using $\wh{T_3}$ and the subspace estimates $\wh{V_x}, \wh{V_y}, \wh{V_z}$ from the  previous step.

\end{algorithmic}
\end{algorithm}

Our initialization algorithm is based on \cite{spectral}.  For ease of notation, we make the following definition:  
\begin{definition}
Let $\Pi$ be the projection map that projects an $n \times n$ matrix onto its diagonal entries and let $\Pi_{\perp}$ denote the projection onto the orthogonal complement.
\end{definition}

\begin{algorithm}[H]
\caption{{\sc Initialization} }
\begin{algorithmic} 
\State \textbf{Input:} Let $\wh{T}$ be an input tensor where each entry is observed with probability $p_1$
\State Let $U = U_x(\wh{T})$ be the unfolded tensor with all unobserved entries equal to $0$

\State Define
\[
\wh{B} = \frac{1}{p_1} \Pi(UU^T) + \frac{1}{p_1^2}\Pi_{\perp}(UU^T)
\]
\State Let $X$ be the matrix whose columns are given by the top $r$ singular vectors of $\wh{B}$ 
\State Zero out all rows of $X$ that have norm at least $\tau \sqrt{\frac{r}{n}}$ where $\tau =  \left(\frac{2\mu r}{c^2} \cdot \frac{\sigma_1^2}{\sigma_r^2}\right)^{5}$
and let $X_0$ be the resulting matrix
\State Let $V_x^0$ be the subspace spanned by the columns of $X_0$
\State Compute $V_y^0, V_z^0$ similarly using the corresponding unfoldings of $\wh{T}$
\State \textbf{Output:} $V_x^0,V_y^0, V_z^0 $
\end{algorithmic}
\end{algorithm}


\begin{algorithm}[H]
\caption{{\sc Kronecker Alternating Minimization} }
\begin{algorithmic} 
\State \textbf{Input:} Let $\wh{T}$ be a sample where we observe each entry of $T$ independently with probability $p_2$
\State \textbf{Input:} Let $V_x^0, V_y^0, V_z^0$ be initial subspace estimates that we are given
\State Set $k = 10^2\log \frac{n\sigma_1}{c\sigma_r}$
\State Let $p'$ be a parameter such that $kp' \leq p_2$
\State Split $\wh{T}$ into independent samples $\wh{T_1}, \dots \wh{T_{k}}$ using Claim \ref{claim:samplesplitting} such that 
\begin{itemize}
    \item For $1 \leq i \leq k$, in $\wh{T_i}$ each entry is revealed with probability $p'$
\end{itemize}

\For  {$t = \{0,1, \dots , k-1 \}$ }
\State Let $B_t$ be an $r^2 \times n^2$ matrix whose rows are an orthonormal basis for $V_y^{t} \otimes V_z^{t}$

\State Consider the sample $\wh{T}_{t+1}$ and let $S_{t+1}$ be the set of observed entries
\State Let $H_{t+1}$ be the solution to $\min_H\norm{(U_x(\widehat{T}_{t+1}) - HB_t)\big|_{S_{t+1}}}_2^2$

\State Let $V_x^{t+1}$ be the space spanned by the top $r$ left singular vectors of $H_{t+1}$
\State Compute $V_y^{t+1}$ from $V_x^t, V_z^t$ and compute $V_z^{t+1}$ from $V_x^t, V_y^t$ similarly
\EndFor
\State \textbf{Output:} $V_x^{k}, V_y^{k}, V_z^{k}$

\end{algorithmic}
\end{algorithm}

\begin{algorithm}[H]
\caption{{\sc Post-Processing via Convex Optimization} }
\begin{algorithmic} 
\State  \textbf{Input:} Let $\wh{T}$ be a sample where each entry is observed with probability $p_3$
\State \textbf{Input:} Let $\wh{V_x}, \wh{V_y}, \wh{V_z}$ be subspace estimates that we are given
\State Split $\wh{T}$ into two independent samples $\wh{T_{-}}, \wh{T_{\sim}}$ where each entry is observed with probability $p_3/2$

\State Let $S$ be the set of observed entries in $\wh{T_{-}}$

\State Define:
\[
T' = \arg\min_{T' \in \wh{V_x} \otimes \wh{V_y} \otimes \wh{V_z}}\norm{(T' - \wh{T_{-}})|_S}_2^2
\]
\State Run Jennrich's algorithm (see Section \ref{sec:jennrichalg}) to decompose $T'$ into $r$ rank-$1$ components 
\[
T' = T_1 + \dots  + T_r
\]
\State For each $1 \leq i \leq r$, write $T_i = \wh{\sigma_i}\wh{x_i} \otimes \wh{y_i} \otimes \wh{z_i}$ where $\wh{x_i}, \wh{y_i}, \wh{z_i}$ are unit vectors and $\wh{\sigma_i} \geq 0$.

\State Let $S_{\sim}$ be the set of observed entries in $\wh{T_{\sim}}$
\State Solve the following constrained optimization problem where $a_1,b_1, c_1, \dots , a_r, b_r, c_r \in \R^n$:
\[
\min_{a_i,b_i,c_i}\norm{\left(T - \sum_{i=1}^r (\wh{\sigma_i}(\wh{x_i} + a_i)) \otimes (\wh{y_i} + b_i) \otimes (\wh{z_i} + c_i)\right)\Bigg|_{S_{\sim}}}_2^2
\]
over the polytope $ Q(\wh{x_1}, \dots , \wh{x_r}, \wh{y_1}, \dots, \wh{y_r}, \wh{z_1}, \dots, \wh{z_r}) $ (defined below).

\State \textbf{Output:}
\[
T_{\textsf{est}}= \sum_{i=1}^r (\wh{\sigma_i}(\wh{x_i} + a_i)) \otimes (\wh{y_i} + b_i) \otimes (\wh{z_i} + c_i)
\]
\end{algorithmic}
\end{algorithm}

\begin{definition}[Definition of $ Q $]
For each $1 \leq i \leq r$, let $y_i'$ be the unit vector in $\spn(\wh{y_1}, \dots , \wh{y_r})$ that is orthogonal to $\wh{y_1}, \dots , \wh{y_{i-1}}, \wh{y_{i+1}}, \dots , \wh{y_r}$ and define $z_i'$ similarly.  Let $$Q(\wh{x_1}, \dots , \wh{x_r}, \wh{y_1}, \dots, \wh{y_r}, \wh{z_1}, \dots, \wh{z_r}) $$ be the polytope consisting of all $\{a_1,b_1,c_1, \dots, a_r, b_r, c_r\}$ such that:
\begin{itemize}
    \item $0 \leq ||a_i||_{\infty},||b_i||_{\infty}, ||c_i||_{\infty} \leq \left(\frac{c\sigma_r}{10n\sigma_1}\right)^{10}$ for all $1 \leq i \leq r$.
    \item $b_i \cdot y_i' = 0$ and $c_i \cdot z_i' = 0$ for all $1 \leq i \leq r$.
\end{itemize}
\end{definition}

\begin{remark}
Note we will prove that the constrained optimization problem is strongly convex (and thus can be solved efficiently) in Section \ref{sec:convexopt}.
\end{remark}

We will now state our main theorem:

\begin{theorem} \label{thm:main}
The {\sc Full Exact Tensor Completion} algorithm run with the following parameter settings:
\begin{align*}
&p = 2\left(\frac{\mu r \log n  }{c} \cdot \frac{\sigma_1}{\sigma_r} \right)^{300}\frac{1}{n^{3/2}} \\
&p_1 = \left(\frac{2\mu r \log n}{c^2} \cdot \frac{\sigma_1^2}{\sigma_r^2}\right)^{10}\frac{1}{n^{3/2}} \\
&p_2 = \left(\frac{\mu r \log n  }{c} \cdot \frac{\sigma_1}{\sigma_r} \right)^{300}\frac{1}{n^{3/2}} \\
&p_3 = 2\frac{\log^2 n}{n^2} \left(\frac{10r\mu}{ c} \cdot \frac{\sigma_1}{\sigma_r}\right)^{10}
\end{align*}
successfully outputs $T$ with probability at least $0.9$.  Furthermore, the algorithm can be implemented to run in $ n^{3/2}\poly(r,\log n, \sigma_1/\sigma_r, \mu, 1/c)$ time.

\end{theorem}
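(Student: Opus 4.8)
The plan is to prove Theorem~\ref{thm:main} by chaining together a guarantee for each of the three phases of \textsc{Full Exact Tensor Completion}, with a union bound over the failure probabilities of the phases and of the sample-splitting steps (Claim~\ref{claim:samplesplitting}), and then invoking the implementation of Section~\ref{sec:lineartime} for the running time. Concretely, I would establish: (I) with $p_1 \gtrsim (\mu r \log n / c)^{O(1)} n^{-3/2}$, \textsc{Initialization} returns $V_x^0,V_y^0,V_z^0$ that are simultaneously $\poly(\mu,r,1/c,\sigma_1/\sigma_r)$-incoherent and within principal angle at most some absolute constant $\alpha_0 < 1/2$ of $V_x,V_y,V_z$; (II) with $p_2$ and $k$ as in the statement, \textsc{Kronecker Alternating Minimization} started from any such triple outputs $\wh{V_x},\wh{V_y},\wh{V_z}$ with $\max_{\bullet}\alpha(\wh{V_\bullet},V_\bullet) \le (c\sigma_r/(n\sigma_1))^{C}$ for a large constant $C$; (III) with $p_3 \gtrsim (\log^2 n/n^2)\poly(\mu,r,1/c,\sigma_1/\sigma_r)$, \textsc{Post-Processing via Convex Optimization} given such estimates outputs exactly $T$. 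Theorem~\ref{thm:main} is then immediate.

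For (I), the Montanari--Sun spectral estimator applied to $\wh{B}$ (an unbiased estimate, up to the diagonal rescaling, of $U_x(T)U_x(T)^T$) gives that the top-$r$ singular subspace of $\wh{B}$ is within the required principal angle of $V_x$; the only extra work is the truncation step that zeroes rows of $X$ of norm $\geq \tau\sqrt{r/n}$. Since $V_x$ is $\mu$-incoherent and the untruncated subspace is already close to $V_x$, only $o(n)$ rows are zeroed and their total squared mass is small, so by Claim~\ref{claim:operator-anglebound} (with the $r$-th singular value lower bound from Claim~\ref{claim:nondegenerate}) the principal angle increases only negligibly, while Claim~\ref{claim:row-incoherence} with $\rho \approx c^3\sigma_r$ certifies that $V_x^0$ is incoherent; the same holds for the $y$- and $z$-modes.

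Step (II) is the heart of the argument. I would prove a one-step progress lemma (the content of Corollary~\ref{corollary:angle-monovariant}): if at iteration $t$ all three subspace estimates are $\mu'$-incoherent and within principal angle $\alpha_t$ of the truth, then with high probability over $\wh{T}_{t+1}$ the updated subspace $V_x^{t+1}$ satisfies $\alpha(V_x^{t+1},V_x)\le\tfrac12\alpha_t$ and is again $\mu'$-incoherent (symmetrically for the other modes). The structural input is that with full, noiseless observations the minimizer $H^\ast = U_x(T)B_t^{T}$ of $\norm{U_x(T) - HB_t}$ has column space spanned by $M_x D (NB_t^{T})$; since the rows of $U_x(T)$ lie in $\spn\{y_i\otimes z_i\}\subset V_y\otimes V_z$ and, by Observation~\ref{obs:tensorprincipalangles}, $\alpha(V_y^t\otimes V_z^t,\,V_y\otimes V_z)\le 2\alpha_t$, the matrix $NB_t^{T}$ stays full rank (its $r$-th singular value is at least $c^2 - 2\sqrt r\,\alpha_t$), so $\mathrm{col}(H^\ast)=V_x$ exactly and $\sigma_r(H^\ast)\gtrsim c^3\sigma_r$. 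The remaining work is a concentration estimate, in the spirit of Hardt's noisy-power-method analysis of matrix completion, bounding $\norm{\wh{H} - H^\ast}_{\op}$ by (residual of the population fit)$\times$(sampling factor) $\lesssim \alpha_t\,\sigma_1\,r\cdot(\text{sampling factor})$; since the residual is itself proportional to $\alpha_t$, combining with Claim~\ref{claim:operator-anglebound} gives the claimed contraction once $p'$ (equivalently $p_2$) is the prescribed large power of the parameters over $n^{3/2}$. Incoherence of the iterate is self-sustaining: the rows of $\wh{H}\approx H^\ast = M_x D N B_t^{T}$ have norm $\lesssim(\sqrt\mu\, r\,\sigma_1/\sigma_r)\sqrt{r/n}$ by incoherence of $V_x$, so Claim~\ref{claim:row-incoherence} (using $\sigma_r(\wh H)\gtrsim c^3\sigma_r$ and $V_y^t\otimes V_z^t$'s incoherence from Claim~\ref{claim:incoherencebound}) keeps the incoherence polynomially bounded, hence below $\mu'$, across all $k$ steps. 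Iterating the progress lemma $k=\Theta(\log(n\sigma_1/(c\sigma_r)))$ times drives $\alpha_t$ from $\alpha_0$ down to the target accuracy. This concentration bound for the partially observed least-squares step --- getting the incoherence bookkeeping through the $r^2$-dimensional Kronecker space and showing incoherence is preserved without re-truncation --- is the step I expect to be the main obstacle.

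For (III), once the $\wh{V_\bullet}$ are within inverse-polynomial principal angle of the truth, the projection $T'$ of $\wh{T_{-}}$ onto $\wh{V_x}\otimes\wh{V_y}\otimes\wh{V_z}$ is inverse-polynomially close to $T$ (a further least-squares concentration estimate plus the angle bound and Claim~\ref{claim:nondegenerate}). A robust analysis of Jennrich's algorithm \cite{moitra2018algorithmic}, valid because the true components are robustly linearly independent, then yields unit vectors $\wh{x_i},\wh{y_i},\wh{z_i}$ and scalars $\wh{\sigma_i}$ within inverse-polynomial distance (up to the inherent sign/permutation ambiguity) of the true components --- close enough that, after the reparametrization that removes the redundancies, the true tensor $T$ is feasible for the polytope $Q$ and attains objective value $0$ on $S_\sim$. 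By Lemma~\ref{lem:tractability} the constrained problem is strongly convex on $Q$, so $T$ is its unique minimizer and is recovered exactly (uniqueness of the residual system, once the components are pinned this closely, gives $T_{\textsf{est}}=T$ as a full tensor, not just on $S_\sim$). Finally, choosing the phase failure probabilities to each be at most $0.1/(k+O(1))$ and union bounding gives overall success probability at least $0.9$, and the running time $n^{3/2}\poly(r,\log n,\sigma_1/\sigma_r,\mu,1/c)$ follows from Section~\ref{sec:lineartime}: sparse top-$r$ SVD for \textsc{Initialization} and for computing $T'$, structured iterative solves for the $O(r^2 n)$-variable least-squares problems each with $\tilde O(n^{3/2})$ constraints, a $\poly(r)$-sized Jennrich call, and an $O(rn)$-variable strongly convex program.
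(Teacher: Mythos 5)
Your proposal is correct and follows essentially the same route as the paper: phase (I) is Theorem~\ref{thm:init} (spectral estimate plus the row-truncation analysis via Claims~\ref{claim:row-incoherence} and \ref{claim:operator-anglebound}), phase (II) is exactly the one-step contraction and incoherence-preservation argument of Theorem~\ref{thm:alternatingmin} (Lemmas~\ref{lemma:frobeniusbound} and \ref{lemma:rowbound}, Claim~\ref{claim:nondegenerate2}, Corollaries~\ref{corollary:angle-monovariant} and \ref{corollary:incoherence-monovariant}), and phase (III) matches Lemma~\ref{lem:projectionstep}, Theorem~\ref{thm:parameterestimates}, and Lemmas~\ref{lem:correctness}--\ref{lem:smoothness}, with the union bound and the Section~\ref{sec:lineartime} implementation closing the argument just as in the paper's proof of Theorem~\ref{thm:main}. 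The step you flag as the main obstacle is indeed where the paper does its real work, and your sketch of it (error proportional to the current principal angle times a sampling factor, plus self-sustaining incoherence) is the paper's argument.
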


\subsection{``Exact" Completion and Bit Complexity}\label{sec:bitcomplexity}

Technically, exact completion only makes sense when the entries of the hidden tensor have bounded bit complexity.  Our algorithm achieves exact completion in the sense that, if we assume that all of the entries of the hidden tensor have bit complexity $B$, then the number of observations that our algorithm requires \emph{does not} depend on $B$ while the runtime of our algorithm depends polynomially on $B$.  This is the strongest possible guarantee one could hope for in the Word RAM model.  Note that the last step of our algorithm involves solving a convex program.  If we assume that the entries of the original tensor have bounded bit complexity then it suffices to solve the convex program to sufficiently high precision \cite{grotschel2012geometric} and then round the solution.

\section{Outline of Proof of Theorem \ref{thm:main}}
Here we give an outline of the proof of Theorem \ref{thm:main} as many of the parts will be deferred to the appendix. The first step involves proving that with high probability, the {\sc Initialization} algorithm outputs subspaces that are incoherent and  have constant principal angle with the true subspaces spanned by the unknown factors.  The proof of the following theorem is deferred to Section \ref{sec:initialization}.

\begin{theorem}\label{thm:init}
With probability $1 - \frac{1}{n^{10}}$, when the {\sc Initialization} algorithm is run with
\[
p_1 = \left(\frac{2\mu r \log n}{c^2} \cdot \frac{\sigma_1^2}{\sigma_r^2} \right)^{10}\frac{1}{n^{3/2}}, 
\]
the output subspaces $V_x^0, V_y^0, V_z^0$ satisfy
\begin{itemize}
    \item $\max\left(\alpha(V_x, V_x^{0}),\alpha(V_y, V_y^{0}), \alpha(V_z, V_z^{0}) \right) \leq 0.1$
    \item The subspaces $V_x^0, V_y^0, V_z^0$ are $\mu'$-incoherent where $\mu' = \left(\frac{2\mu r}{c^2} \cdot \frac{\sigma_1^2}{\sigma_r^2}\right)^{10}$
\end{itemize}
\end{theorem}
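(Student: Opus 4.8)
The plan is to follow the standard recipe for analyzing a spectral initialization: show that $\wh B$ concentrates around a rank-$r$ matrix whose column space is exactly $V_x$, deduce that the top-$r$ eigenspace of $\wh B$ is close to $V_x$ via a perturbation bound, and then analyze the row-zeroing step by itself to recover both the angle guarantee and the incoherence bound. Throughout, set $M = U_x(T)$, so that $MM^T$ has rank $r$, column space exactly $V_x$, and, by Claim~\ref{claim:nondegenerate}, $\lambda_r(MM^T) = \sigma_r(M)^2 \ge c^6\sigma_r^2$, while $\lambda_{r+1}(MM^T) = 0$.

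First I would record the two deterministic ingredients. A one-line computation using independence of the observations and the rescalings $\tfrac1{p_1},\tfrac1{p_1^2}$ gives $\E[\wh B] = \Pi(MM^T) + \Pi_\perp(MM^T) = MM^T$ exactly. The incoherence assumption gives the pointwise bound $|T_{abc}| \le \sigma_1\mu^{3/2}r^{5/2}/n^{3/2}$ and hence $(MM^T)_{aa} = \sum_{b,c}T_{abc}^2 \le \sigma_1^2\mu r^3/n$, the entrywise and row-norm estimates on $M$ we need below. The crux is then the concentration claim: with probability $\ge 1 - \tfrac1{3n^{10}}$ (and likewise for the $y$- and $z$-unfoldings) one has $\norm{\wh B - MM^T}_{\op} \le \tfrac{c^6\sigma_r^2}{100\sqrt r}$. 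For the diagonal term $\tfrac1{p_1}\Pi(UU^T)$ this is a Bernstein bound on each of the $n$ diagonal entries followed by a union bound, using the two displays above. The off-diagonal term $\tfrac1{p_1^2}\Pi_\perp(UU^T)$ is the hard part: since $p_1 \sim n^{-3/2}$ up to polynomial factors, most columns of $U$ are identically zero and a single coincident pair of observed entries produces an $\Theta(1/p_1^2)$ spike, so naive matrix Bernstein fails; this is precisely the regime handled by Montanari and Sun \cite{spectral}, and I would import (or re-derive by truncation plus a combinatorial trace-moment estimate) their operator-norm bound — the choice $p_1 \propto n^{-3/2}$ is exactly the threshold at which it holds, and the large exponent in $p_1$ leaves room to push the right-hand side below $\tfrac{c^6\sigma_r^2}{100\sqrt r}$.

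Given the concentration bound, write $X$ for the matrix of the top $r$ singular vectors of $\wh B$ (equivalently, by Weyl and the bound above, its $r$ largest eigenvectors, whose eigenvalues are all at least $c^6\sigma_r^2 - \norm{\wh B - MM^T}_{\op} > 0$), $\wh V = \mathrm{col}(X)$, and $\wh B_r$ for the rank-$r$ truncation of $\wh B$; then $\mathrm{col}(\wh B_r) = \wh V$ and, since $MM^T$ has rank $r$, $\norm{\wh B_r - MM^T}_{\op} \le 2\norm{\wh B - MM^T}_{\op}$. Applying Claim~\ref{claim:operator-anglebound} to the pair $\wh B_r,MM^T$ gives $\alpha(V_x,\wh V) \le 2\norm{\wh B - MM^T}_{\op}/(c^6\sigma_r^2) =: \e_0 \le \tfrac1{50\sqrt r}$. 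Next I would analyze the zeroing. With $\ov M_x$ an orthonormal basis of $V_x$ and $W = P_{V_x^\perp}X$ we have $\norm{W}_{\op} = \alpha(V_x,\wh V) = \e_0$, $\norm{W}_F \le \sqrt r\,\e_0$, and $\norm{\ov M_x^T e_a} \le \sqrt{\mu r/n}$ by incoherence of $V_x$. Since $\tau^2 \gg \mu$, a row $a$ is zeroed only if $\norm{W^T e_a} \ge \tfrac\tau2\sqrt{r/n}$, in which case $\norm{X^T e_a} \le 2\norm{W^T e_a}$; summing over the zeroed rows yields $\norm{X - X_0}_{\op} \le \norm{X - X_0}_F \le 2\norm{W}_F \le 2\sqrt r\,\e_0 \le \tfrac1{25}$. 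Hence $\sigma_r(X_0) \ge \tfrac{24}{25}$, Claim~\ref{claim:operator-anglebound} applied to $X_0,X$ gives $\alpha(V_x^0,\wh V) \le \tfrac1{25}$, and the triangle inequality for the largest-principal-angle metric gives $\alpha(V_x,V_x^0) \le \e_0 + \tfrac1{25} \le 0.1$. Finally, every nonzero row of $X_0$ has norm $< \tau\sqrt{r/n}$ while $\sigma_r(X_0) \ge \tfrac{24}{25}$, so Claim~\ref{claim:row-incoherence} shows $V_x^0$ is $O(r\tau^2)$-incoherent, matching the claimed $\mu'$ for the exponents as set in the theorem. Repeating verbatim for the $y$- and $z$-unfoldings and union-bounding over the three concentration events finishes the proof.

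I expect the main obstacle to be the off-diagonal operator-norm concentration, $\norm{\tfrac1{p_1^2}\Pi_\perp(UU^T) - \Pi_\perp(MM^T)}_{\op}$ at the sampling rate $p_1 \sim n^{-3/2}$: this is where the heavy spikes from the $1/p_1^2$ rescaling must be controlled, it is the reason $n^{3/2}$ is the relevant threshold, and it is the ingredient genuinely inherited from \cite{spectral} rather than routine. The only remaining (bookkeeping) subtlety is tuning the constants so that the single zeroing step simultaneously keeps $\alpha(V_x,V_x^0) \le 0.1$ and delivers the advertised incoherence parameter $\mu'$.
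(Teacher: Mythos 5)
Your proposal is correct and follows the paper's overall skeleton — the hard concentration step is imported from Montanari and Sun exactly as in the paper (the paper invokes their Corollary 3 to get $\norm{\wh{B}-U_xU_x^T}_{\op}$ small, with the huge exponent in $p_1$ providing ample slack, just as you anticipate), and both arguments then run through the rank-$r$ truncation trick ($\norm{\wh{B}_r - U_xU_x^T}_{\op} \le 2\norm{\wh{B}-U_xU_x^T}_{\op}$), Claim~\ref{claim:operator-anglebound}, and Claim~\ref{claim:row-incoherence}. Where you genuinely diverge is the treatment of the row-zeroing step. The paper works globally with the symmetric matrices: it notes that at most $n/\tau^2$ rows are zeroed, bounds $(\Pi U_x)(\Pi U_x)^T - U_xU_x^T$ entrywise in Frobenius norm, and deduces $\norm{X_0DX_0^T - U_xU_x^T}_{\op}$ is small; from this single bound it extracts both the angle bound (applying Claim~\ref{claim:operator-anglebound} to $X_0DX_0^T$ vs.\ $U_xU_x^T$) and a lower bound $\sigma_r(X_0) \ge \frac{c^6}{10r^3}(\sigma_r/\sigma_1)^2$ for the incoherence step. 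You instead argue intrinsically about $X$: using the incoherence of the \emph{true} subspace $V_x$ and $\tau \gg \sqrt{\mu}$, only rows whose component orthogonal to $V_x$ is large can be zeroed, giving $\norm{X - X_0}_F \le 2\sqrt{r}\,\e_0$, hence $\sigma_r(X_0) \ge 24/25$, an angle bound via Claim~\ref{claim:operator-anglebound} applied to $X_0, X$, and a triangle inequality for the sine of the largest principal angle (a standard fact the paper never needs, since it compares $V_x^0$ to $V_x$ directly). Your route buys a condition-number-free lower bound on $\sigma_r(X_0)$ and hence a tighter incoherence estimate of order $r\tau^2$, versus the paper's $\tau^2 r \cdot \frac{100 r^6\sigma_1^4}{c^{12}\sigma_r^4}$; the paper's route is shorter and needs no metric property of $\alpha$ nor the incoherence of $V_x$ in this step, only the count of zeroed rows. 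Note that neither your final bound ($\approx r\tau^2$) nor the paper's literally sits below the stated $\mu' = \tau^2$ with $\tau$ as defined in the algorithm — this is the parameter-bookkeeping slack you flag at the end (fixable by taking $\tau$ with a slightly smaller exponent than $\mu'$), and your version is in fact off by only a factor $\approx r$ while the paper's displayed inequality carries much larger slack, so this is a tuning matter rather than a gap in your argument.
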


Next, we prove that each iteration of alternating minimization decreases the principal angles between our subspace estimates and the true subspaces. This is our main contribution. 

\begin{theorem}\label{thm:alternatingmin}
Consider the {\sc Kronecker Alternating Minimization} algorithm. Fix a timestep $t$ and assume that the subspaces $V_x^t, V_y^t, V_z^t$ corresponding to the current estimate are $\mu'$-incoherent where $\mu' = \left(\frac{2\mu r}{c^2} \cdot \frac{\sigma_1^2}{\sigma_r^2}\right)^{10}$.  Also assume 
\[
\max\left(\alpha(V_x, V_x^{t}),\alpha(V_y, V_y^{t}), \alpha(V_z, V_z^{t}) \right) \leq 0.1.
\]
If $p' \geq \frac{\log^2((n\sigma_1)/(c\sigma_r))}{n^2} \left(\frac{10r\mu'}{ c} \cdot  \frac{\sigma_1}{\sigma_r}\right)^{10}$ then after the next step, with probability $1 - \frac{1}{10^4 \log ((n\sigma_1)/(c\sigma_r))}$
\begin{itemize}
\item $\max\left(\alpha(V_x, V_x^{t+1}),\alpha(V_y, V_y^{t+1}), \alpha(V_z, V_z^{t+1}) \right) \leq 0.2 \max\left(\alpha(V_x, V_x^t),\alpha(V_y, V_y^t), \alpha(V_z, V_z^t)\right)$
\item  The subspaces $V_x^{t+1}, V_y^{t+1}, V_z^{t+1}$ are $\mu'$-incoherent 
\end{itemize}
\end{theorem}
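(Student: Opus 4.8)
The plan is to analyze a single alternating minimization step — say the one that produces $V_x^{t+1}$ from $V_y^t, V_z^t$ — and show it contracts the error by a factor $0.2$, while preserving incoherence; by symmetry the same argument applies to the $y$ and $z$ updates, and a union bound over the three updates gives the stated failure probability. The starting point is Observation~\ref{obs:tensorprincipalangles} (restated with the exact product identity): since $\alpha(V_y, V_y^t), \alpha(V_z, V_z^t) \le \gamma_t := \max(\alpha(V_x,V_x^t),\alpha(V_y,V_y^t),\alpha(V_z,V_z^t))$, we get $\alpha(V_y \otimes V_z,\, V_y^t \otimes V_z^t) \le 2\gamma_t$. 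Thus the true unfolding $U_x(T) = M_x D N$ (notation as in Claim~\ref{claim:nondegenerate}), whose row space lies in $V_y \otimes V_z$, can be written as $H^\ast B_t + E$ where $H^\ast B_t$ is its projection onto the row space $V_y^t \otimes V_z^t$ and $\norm{E}_{\op} \le 2\gamma_t \norm{U_x(T)}_{\op} \le 2\gamma_t \sigma_1 \sqrt{r}$ (say). So \emph{if we could solve the least squares problem on all entries}, the minimizer $H$ would satisfy $\norm{H - H^\ast}$ small, $H^\ast$ has $r$-th singular value $\ge c^3\sigma_r$ by Claim~\ref{claim:nondegenerate}, and then Claim~\ref{claim:operator-anglebound} would give $\alpha(V_x, \text{colspace}(H)) \lesssim \gamma_t \sigma_1\sqrt r/(c^3\sigma_r)$ — which, with the constants in the hypothesis $\max(r,\sigma_1/\sigma_r,1/c,\mu)\le n^\delta$, is $\le 0.2\gamma_t$ after absorbing. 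The real content is handling the fact that we only observe a $p'$ fraction of entries.

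The key step is therefore a concentration argument showing that the \emph{partially observed} least squares solution $H_{t+1}$ is close to the fully observed one. This is where the analogy to Hardt's noisy power method \cite{matrixaltmin} enters. Concretely, I would write $H_{t+1} = U_x(\wh T_{t+1})|_{S_{t+1}} B_t^+(\text{weighted})$ — i.e. express the solution of $\min_H \norm{(U_x(\wh T_{t+1}) - HB_t)|_{S_{t+1}}}_2^2$ row by row, since the rows of $H$ decouple: for each row index $a$, $H_{t+1}[a,:]$ solves a least squares problem against the columns of $B_t^T$ restricted to the observed entries in row $a$. I would compare the normal equations $(B_t \mathrm{diag}(\one_{S})B_t^T) H[a,:]^T = B_t \mathrm{diag}(\one_S) U_x(\wh T)[a,:]^T$ to their expectations (which scale by $p'$ and recover the full-observation problem). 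The two things to control are: (i) the Gram matrix $\frac{1}{p'}B_t \mathrm{diag}(\one_{S_a}) B_t^T$ is close to $I_{r^2}$ in operator norm — this needs $B_t$ to be \emph{incoherent}, which is exactly why the theorem carries the $\mu'$-incoherence hypothesis on $V_y^t, V_z^t$, via Claim~\ref{claim:incoherencebound} giving $V_y^t \otimes V_z^t$ is $(\mu')^2$-incoherent; a matrix Bernstein / matrix Chernoff bound then gives the needed closeness provided $p' \gtrsim (\mu')^2 r^2 \log(\cdot)/n^2$ per row, which matches the stated lower bound on $p'$ up to the polynomial slack; and (ii) the "noise" term $\frac{1}{p'}B_t \mathrm{diag}(\one_S)(\text{residual of }U_x(T)\text{ against its }V_y^t\otimes V_z^t\text{ projection})$ concentrates around its mean, again by Bernstein, with the incoherence of both $B_t$ and the true subspaces (the latter used to bound entries of the residual) keeping the variance proxy small. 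Combining (i) and (ii) over all $n$ rows with a union bound yields $\norm{H_{t+1} - H^\ast}_{\op} \le O(\gamma_t \sigma_1 \sqrt r) + (\text{tiny})$, and then Claim~\ref{claim:operator-anglebound} closes the principal-angle bound.

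Finally, for the incoherence of $V_x^{t+1}$: $V_x^{t+1}$ is the top-$r$ left singular space of $H_{t+1}$, and the rows of $H_{t+1}$ are explicitly $H_{t+1}[a,:]^T = (\text{Gram}^{-1}) B_t \mathrm{diag}(\one_{S_a}) U_x(\wh T)[a,:]^T$; since the Gram inverse is $\approx \frac{1}{p'}I$, $B_t$ has incoherent rows, and row $a$ of $U_x(\wh T)$ has only about $p' n^2$ nonzeros each bounded by the magnitude of entries of $T$ (controlled by incoherence of the \emph{true} subspaces and $\norm{\sigma}$), one shows each row of $H_{t+1}$ has norm $\lesssim \sqrt{r/n}$ times a polynomial factor; together with the lower bound $\sigma_r(H_{t+1}) \gtrsim c^3\sigma_r$ inherited from $H^\ast$, Claim~\ref{claim:row-incoherence} gives that $V_x^{t+1}$ is $\mu'$-incoherent, where the exponent $10$ in $\mu' = (2\mu r \sigma_1^2/(c^2\sigma_r^2))^{10}$ is chosen precisely to absorb all these polynomial losses. \textbf{The main obstacle} I anticipate is item (i)/(ii) above — getting the concentration of the per-row least squares solution to hold \emph{simultaneously for all $n$ rows} with only the claimed $p'$, which forces one to be careful that the incoherence of $B_t$ enters with the right power and that the union bound over rows is affordable; this is the step where the sample complexity $n^{-3/2}$ versus the naive $n^{-2}$ per-slice bound is won or lost, and where the entire point of the Kronecker twist (keeping $B_t$ a genuine orthonormal, incoherent basis of a tensor-product space rather than a Khatri–Rao product) pays off.
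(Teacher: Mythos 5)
Your proposal follows essentially the same route as the paper: row-by-row decoupling of the least squares step with explicit normal equations, a matrix Chernoff bound showing each per-row Gram matrix $\frac{1}{p'}B_t\Pi_jB_t^T$ is close to the identity (using incoherence of $V_y^t\otimes V_z^t$ via Claim~\ref{claim:incoherencebound}), concentration of the noise term built from the residual $u_j(I-B_t^TB_t)$ whose norm is controlled by Observation~\ref{obs:tensorprincipalangles}, then Claims~\ref{claim:nondegenerate}/\ref{claim:operator-anglebound} for the angle contraction and Claim~\ref{claim:row-incoherence} with a row-norm bound on the error for incoherence preservation. The only cosmetic difference is that the paper bounds the Frobenius norm of the error by a second-moment computation plus Markov (which is why the stated failure probability is only $1-1/(10^4\log(\cdot))$) and uses a scalar Chernoff bound per entry for the row-norm estimate, whereas you propose per-row Bernstein bounds with a union bound; both work and yield the same conclusion.
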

\noindent This theorem is proved in Section \ref{sec:altmin}

In light of the previous two theorems, by running a logarithmic number of iterations of alternating minimization, we can estimate the subspaces $V_x,V_y,V_z$ to within any inverse polynomial accuracy.  This implies that we can estimate the entries of the true tensor $T$ to within any inverse polynomial accuracy.  A robust analysis of Jennrich's algorithm implies that we can then estimate the rank one components of the true tensor to within any inverse polynomial accuracy.  Finally, since our estimates for the parameters $\wh{\sigma_i},\wh{x_i},\wh{y_i},\wh{z_i}$ are close to the true parameters, we can prove that the optimization problem formulated in the {\sc Post-Processing via Convex Optimization} algorithm is smooth and strongly convex.  See Section \ref{sec:projectionanddecomp} and Section \ref{sec:convexopt} for details.

\section{Alternating Minimization}\label{sec:altmin}

This section is devoted to the proof of Theorem \ref{thm:alternatingmin}. 


\subsection{Least Squares Optimization}

We will need to analyze the least squares optimization in the alternating minimization step of our {\sc Kronecker Alternating Minimization} algorithm.  We use the following notation.
\begin{itemize}
    \item Let the rows of $H_{t+1}$ be $r_1, \dots , r_n$. 
    \item We will abbreviate  $U_x(T)$ with $U_x$.  Let the rows of $U_x(T)$ be $u_1, \dots , u_n$.
    \item For each $1 \leq j \leq n$, let $P_j \subset [n^2]$ be the set of indices that are revealed in the $j$\ts{th} row of $U_x(\widehat{T}_{t+1})$.
    \item Let $\Pi_j$ be the $n^2 \times n^2$ matrix with one in the diagonal entries corresponding to elements of $P_j$ and zero everywhere else.
\end{itemize}
Let $E = H_{t+1} - U_xB_t^T$ and let its rows be $s_1, \dots , s_n$.  Note $U_xB_t^T$ is the solution to the least squares optimization problem if we were able to observe all the entries i.e. if the optimization problem were not restricted to the set $S_{t+1}$.  Thus, we can think of $E$ as the error that comes from only observing a subset of the entries.  The end goal of this section is to bound the Frobenius norm of $E$.  
\\\\
First we will need a technical claim that follows from a matrix Chernoff bound.

\begin{claim}\label{claim:matrix_inv}
If $p' \geq \frac{\log^2 n}{n^2} \left(\frac{10r\mu'}{ c} \cdot \frac{\sigma_1}{\sigma_r}\right)^{10}$ then With at least $1 -  \left(\frac{c\sigma_r}{n\sigma_1} \right)^{20}$ probability 
\[
||(B_t\Pi_jB_t^T)^{-1}|| \leq \frac{2}{ p'}
\]
for all $1 \leq j \leq n$.
\end{claim}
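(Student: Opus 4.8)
The plan is to fix a row index $j$ and analyze the random matrix $B_t \Pi_j B_t^T = \sum_{\ell \in P_j} (B_t)_{\cdot \ell} (B_t)_{\cdot \ell}^T$, where $(B_t)_{\cdot \ell}$ denotes the $\ell$-th column of $B_t$ and the sum ranges over the random set $P_j$ of observed indices in row $j$. Each index $\ell \in [n^2]$ is included in $P_j$ independently with probability $p'$, so this is a sum of independent PSD rank-one random matrices with expectation $\E[B_t \Pi_j B_t^T] = p' B_t B_t^T = p' I_{r^2}$ (since the rows of $B_t$ are orthonormal). The goal is to show that with high probability this sum is at least $\frac{p'}{2} I_{r^2}$ in the PSD order, which immediately gives $\|(B_t \Pi_j B_t^T)^{-1}\| \le \frac{2}{p'}$.

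First I would bound the "norm" of each summand: $\|(B_t)_{\cdot \ell} (B_t)_{\cdot \ell}^T\|_{\op} = \|(B_t)_{\cdot \ell}\|_2^2$, which is exactly the squared norm of the $\ell$-th row of the matrix whose columns are an orthonormal basis of $V_y^t \otimes V_z^t$ — i.e., the incoherence parameter of $V_y^t \otimes V_z^t$. By Claim \ref{claim:incoherencebound} applied to the $\mu'$-incoherent subspaces $V_y^t, V_z^t$, this subspace is $(\mu')^2$-incoherent, so each column of $B_t$ has squared norm at most $\frac{(\mu')^2 r^2}{n^2}$. Then I would apply a matrix Chernoff bound (e.g.\ the matrix Bernstein / Tropp-style bound for sums of independent PSD matrices): for a sum $W = \sum_\ell W_\ell$ of independent PSD matrices with $W_\ell \preceq R \cdot I$ and $\E[W] = p' I$ in dimension $r^2$, we have $\Pr[\lambda_{\min}(W) \le \frac{p'}{2}] \le r^2 \exp(-c p' / R)$ for an absolute constant. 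Plugging $R = \frac{(\mu')^2 r^2}{n^2}$ and the hypothesis $p' \ge \frac{\log^2 n}{n^2}\left(\frac{10 r \mu'}{c} \cdot \frac{\sigma_1}{\sigma_r}\right)^{10}$ makes the exponent $-cp'/R \gtrsim -\log^2 n \cdot \left(\frac{\sigma_1}{\sigma_r}\right)^{10}\cdot (\text{stuff} \ge 1)$, which comfortably dominates both the $r^2$ prefactor and a union bound over all $n$ rows $j$, yielding a final failure probability at most $\left(\frac{c\sigma_r}{n\sigma_1}\right)^{20}$ as claimed. (One should check the exact constants in the exponent line up; the polynomial slack built into the $(\cdot)^{10}$ exponent is generous enough that this is routine.)

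The main obstacle — really the only delicate point — is making sure the matrix Chernoff bound is applied in the right form (a lower tail bound on the minimum eigenvalue of a sum of bounded PSD matrices, not a two-sided operator-norm bound), and carefully tracking how the dimension $r^2$, the per-term bound $R$, and the union bound over $n$ values of $j$ interact in the exponent. Once the bound $\lambda_{\min}(B_t \Pi_j B_t^T) \ge \frac{p'}{2}$ holds for a fixed $j$ with probability $1 - \frac{1}{n}\left(\frac{c\sigma_r}{n\sigma_1}\right)^{20}$, a union bound over $j \in [n]$ finishes it. Everything else — the identification of the summand norm with the incoherence of $V_y^t \otimes V_z^t$, and the inversion of the PSD lower bound — is immediate.
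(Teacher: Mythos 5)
Your proposal is correct and matches the paper's own argument: both bound the column norms of $B_t$ via Claim \ref{claim:incoherencebound}, apply the matrix Chernoff lower-tail bound (Claim \ref{claim:matrix-chernoff}) to $\frac{1}{p'}B_t\Pi_jB_t^T$ whose expectation is the identity, and finish with a union bound over the $n$ rows. The only differences are cosmetic (you track the $r^2$ dimension prefactor explicitly and invoke the generic Tropp-style bound rather than the paper's stated constant in the exponent).
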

\begin{proof}
Note by Claim \ref{claim:incoherencebound}, the columns of $B_t$ have norm at most $\frac{\mu' r}{n}$.  Consider $\frac{1}{p'}B_t\Pi_jB_t^T$.  This is a sum of independent rank $1$ matrices with norm bounded by $\frac{\mu'^2r^2}{p'n^2}$ and the expected value of the sum is $I$, the identity matrix.  Thus by Claim \ref{claim:matrix-chernoff}
\[
\Pr\left[\lambda_{\min}\left(\frac{1}{p'}B_t\Pi_jB_t^T \right) \leq \frac{1}{2}  \right] \leq ne^{-\frac{p'n^2}{8\mu'^2r^2}} \leq \left(\frac{c \sigma_r}{n\sigma_1}\right)^{25}
\]
\end{proof}

The claim below follows directly from writing down the explicit formula for the solution to the least squares optimization problem.
\begin{claim}
We have for all $1 \leq j \leq n$
\[
r_j = u_j\Pi_jB_t^T(B_t \Pi_j B_t^T)^{-1}
\]
\end{claim}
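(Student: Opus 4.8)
The plan is to derive the stated formula directly from the first-order optimality conditions (the normal equations) of the least squares problem defining $H_{t+1}$. Recall that $H_{t+1}$ is the minimizer of $\norm{(U_x(\widehat T_{t+1}) - H B_t)\big|_{S_{t+1}}}_2^2$, where the rows of $B_t$ are an orthonormal basis for $V_y^t \otimes V_z^t$. The key point is that this objective decouples across the $n$ rows of $H$: the $j$th row $r_j$ of $H_{t+1}$ only interacts with the $j$th row of $U_x(\widehat T_{t+1})$ and the mask $\Pi_j$ that selects the observed entries $P_j$ in that row. So for each $j$ independently, $r_j$ minimizes $\norm{\Pi_j (u_j^{\text{obs}} - B_t^T r_j^T)}_2^2$ over $r_j$, where $u_j^{\text{obs}}$ denotes the $j$th row of $U_x(\widehat T_{t+1})$; but since $\Pi_j$ kills the unobserved coordinates anyway and $\widehat T_{t+1}$ agrees with $T$ on observed entries, we may replace $u_j^{\text{obs}}$ by $u_j$ (the true $j$th row of $U_x(T)$) inside the expression $\Pi_j(\cdot)$.

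First I would write the single-row least squares problem as $\min_{v \in \R^r} \norm{\Pi_j(u_j^T - B_t^T v)}_2^2$ with $v = r_j^T$, and take the gradient: the normal equations give $B_t \Pi_j B_t^T v = B_t \Pi_j u_j^T$ (using $\Pi_j^2 = \Pi_j$ and $\Pi_j$ symmetric). Then I would invoke Claim~\ref{claim:matrix_inv}, which (under the stated lower bound on $p'$) guarantees that $B_t \Pi_j B_t^T$ is invertible for every $j$ with high probability, so we may solve $v = (B_t \Pi_j B_t^T)^{-1} B_t \Pi_j u_j^T$. Transposing and using symmetry of $\Pi_j$ and of $(B_t\Pi_j B_t^T)^{-1}$ yields exactly $r_j = u_j \Pi_j B_t^T (B_t \Pi_j B_t^T)^{-1}$, as claimed.

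There isn't really a hard obstacle here — this claim is essentially bookkeeping — but the one point that deserves care is the reduction from the observed tensor $\widehat T_{t+1}$ to the true tensor $T$ on the right-hand side, and the fact that the row-wise decoupling is legitimate (i.e. that the global least squares problem over all of $H$ is genuinely separable into $n$ independent problems, which holds because $\norm{(U_x(\widehat T_{t+1}) - HB_t)\big|_{S_{t+1}}}_2^2 = \sum_{j=1}^n \norm{\Pi_j(u_j^{\text{obs},T} - B_t^T r_j^T)}_2^2$ and the $j$th summand depends only on $r_j$). The other minor point is that we need invertibility of $B_t \Pi_j B_t^T$ to write down "the" solution uniquely; this is exactly what Claim~\ref{claim:matrix_inv} supplies, so the statement should be read as holding on the same high-probability event.
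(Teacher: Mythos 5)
Your proposal is correct and follows essentially the same route as the paper: both derive the formula from the normal equations, i.e. the orthogonality of the row-wise residual (restricted to the observed entries $P_j$) to the rows of $B_t\Pi_j$, and then solve using the invertibility of $B_t\Pi_jB_t^T$. The extra remarks on row-wise separability and replacing $\widehat{T}_{t+1}$ by $T$ on observed entries are points the paper leaves implicit, and they are handled correctly.
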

\begin{proof}
Note $r_j$ must have the property that the vector $u_j - r_jB_t$ restricted to the entries indexed by $P_j$ is orthogonal to the space spanned by the rows of $B_t$ restricted to the entries indexed by $P_j$.  This means
\[
(u_j - r_jB_t)(B_t\Pi_j)^T = 0
\]
The above rearranges as 
\[
r_j (B_t\Pi_jB_t^T) = u_j\Pi_jB_t^T
\]
from which we immediately obtain the desired.
\end{proof}

After some direct computations, the previous claim implies:
\begin{claim}
We have for all $1 \leq j \leq n$
\[
s_j = u_j(I - B_t^TB_t)\Pi_jB_t^T(B_t \Pi_j B_t^T)^{-1}
\]
\end{claim}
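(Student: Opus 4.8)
The plan is to simply substitute the closed form for $r_j$ obtained in the previous claim into the definition of $s_j$ and rearrange. By definition $E = H_{t+1} - U_x B_t^T$, and since the rows of $H_{t+1}$ are $r_1,\dots,r_n$ and the rows of $U_x(T)$ are $u_1,\dots,u_n$, the $j$\ts{th} row of $E$ is
\[
s_j = r_j - u_j B_t^T = u_j\Pi_j B_t^T(B_t\Pi_j B_t^T)^{-1} - u_j B_t^T,
\]
using the formula $r_j = u_j\Pi_j B_t^T(B_t\Pi_j B_t^T)^{-1}$. (Here $B_t\Pi_j B_t^T$ is invertible with high probability by Claim \ref{claim:matrix_inv}, so the expression is well defined.)

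Next I would write the trailing term as $u_j B_t^T = u_j B_t^T (B_t\Pi_j B_t^T)^{-1}(B_t\Pi_j B_t^T)$, inserting the identity factor $(B_t\Pi_j B_t^T)^{-1}(B_t\Pi_j B_t^T)$ on the right. Then both terms of $s_j$ have the common right factor $(B_t\Pi_j B_t^T)^{-1}$, and pulling it out gives
\[
s_j = \bigl(u_j\Pi_j B_t^T - u_j B_t^T B_t \Pi_j B_t^T\bigr)(B_t\Pi_j B_t^T)^{-1} = u_j(I - B_t^T B_t)\Pi_j B_t^T(B_t\Pi_j B_t^T)^{-1},
\]
which is exactly the claimed identity.

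This is a purely algebraic manipulation, so I do not anticipate any genuine obstacle; the only point requiring care is that the formula for $r_j$ (and hence this one) presumes $B_t\Pi_j B_t^T$ is invertible, which is precisely what Claim \ref{claim:matrix_inv} provides under the stated lower bound on $p'$. Note that orthonormality of the rows of $B_t$ (i.e. $B_tB_t^T = I$) is not actually needed for this identity, only for subsequent norm bounds on $E$.
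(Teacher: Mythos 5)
Your proof is correct and is essentially the paper's own one-line computation: substitute the formula $r_j = u_j\Pi_j B_t^T(B_t\Pi_j B_t^T)^{-1}$ into $s_j = r_j - u_j B_t^T$ and factor $(B_t\Pi_j B_t^T)^{-1}$ out on the right. (One cosmetic nit: to exhibit that common right factor you should insert the identity as $(B_t\Pi_j B_t^T)(B_t\Pi_j B_t^T)^{-1}$, in that order, though your final displayed identity is correct as written.)
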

\begin{proof}
Note
\[
s_j = r_j - u_jB_t^T = u_j\Pi_jB_t^T(B_t \Pi_j B_t^T)^{-1} - u_jB_t^T = u_j(I - B_t^TB_t)\Pi_jB_t^T(B_t \Pi_j B_t^T)^{-1}
\]
\end{proof}

Now we are ready to prove the main result of this section.

\begin{lemma}\label{lemma:frobeniusbound}
With probability at least $1 - \frac{1}{10^5 \log ((n\sigma_1)/ (c\sigma_r))}$
\[
||E||_2 \leq \frac{10^3  \mu' r^3 \sigma_1 \sqrt{\log((n\sigma_1)/ (c\sigma_r))}}{n\sqrt{p'}} \left(\alpha(V_y, V_y^t) + \alpha(V_z, V_z^t) \right)
\]
\end{lemma}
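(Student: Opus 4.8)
The plan is to bound $\|E\|_2$ row by row, using the explicit formula $s_j = u_j(I - B_t^TB_t)\Pi_j B_t^T(B_t\Pi_j B_t^T)^{-1}$ established above. First I would condition on the event from Claim~\ref{claim:matrix_inv}, so that $\|(B_t\Pi_j B_t^T)^{-1}\| \leq 2/p'$ for all $j$; this costs a failure probability of at most $n\cdot(c\sigma_r/(n\sigma_1))^{20}$, which is negligible compared to the target $1/(10^5\log((n\sigma_1)/(c\sigma_r)))$. Conditioned on this, $\|s_j\|_2 \leq \frac{2}{p'}\|u_j(I - B_t^TB_t)\Pi_j B_t^T\|_2$, so it suffices to control $\sum_j \|u_j(I-B_t^TB_t)\Pi_j B_t^T\|_2^2$ and take a square root; then $\|E\|_2 \leq \|E\|_F$ gives the operator-norm bound stated.

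The key structural input is that $w_j := u_j(I - B_t^TB_t)$ is small. Here $u_j$ is the $j$th row of $U_x(T) = M_x D N$, and $I - B_t^TB_t$ is the projection onto the orthogonal complement of $V_y^t\otimes V_z^t$. Since the rows of $N$ are the $y_i\otimes z_i$, which lie in $V_y\otimes V_z$, the vector $u_j$ lies (essentially) in $V_y\otimes V_z$, and the amount it can have outside $V_y^t\otimes V_z^t$ is governed by the principal angle $\alpha(V_y\otimes V_z, V_y^t\otimes V_z^t)$, which by Observation~\ref{obs:tensorprincipalangles} is at most $\alpha(V_y,V_y^t)+\alpha(V_z,V_z^t)$. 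Quantitatively I would write $u_j = e_j^T M_x D N$, note $\|e_j^T M_x\|_2 \leq \sigma_1$-type bounds are not needed — rather $\|u_j\|_2 \leq \sigma_1 \|e_j^T M_x\|_2 \cdot \|N\|_{\op}$, and incoherence of $V_x$ gives $\|e_j^T M_x\|_2 \lesssim \sqrt{\mu r/n}$ — and combine with the angle bound to get $\|w_j\|_2 \lesssim \frac{\sigma_1 r \sqrt{\mu r}}{\sqrt n}(\alpha(V_y,V_y^t)+\alpha(V_z,V_z^t))$, absorbing factors of $r$ and the smallest singular value of $M_y\otimes M_z$ as in Claim~\ref{claim:nondegenerate}. (The precise power of $r$ and $\mu'$ in the statement suggests several such crude bounds are being stacked; I would not optimize these.)

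The remaining and genuinely probabilistic step is to bound $\sum_j \|w_j \Pi_j B_t^T\|_2^2$ in expectation and then concentrate. For fixed $j$, $\E\big[\|w_j\Pi_j B_t^T\|_2^2\big] = \E\big[w_j \Pi_j B_t^T B_t \Pi_j w_j^T\big]$; expanding $\Pi_j = \sum_{\ell \in P_j}\Pi_{j,\ell}$ over revealed coordinates and using that coordinates are revealed independently with probability $p'$, the cross terms contribute $p'^2 \|w_j B_t^T B_t\|^2 \leq p'^2\|w_j\|_2^2$ (small, since $w_j$ is already small and $p'\ll 1$) and the diagonal terms contribute $p'\sum_\ell (w_j)_\ell^2 \|(\text{$\ell$th column of }B_t)\|_2^2 \leq p' \|w_j\|_\infty^2 \cdot \frac{\mu'^2 r^2}{n}$-type bounds, or more simply $p'\|w_j\|_2^2 \max_\ell \|B_t^T e_\ell\|_2^2$, where the column norm bound $\frac{\mu' r}{n}$ from Claim~\ref{claim:incoherencebound} enters. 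So $\E\big[\sum_j \|w_j\Pi_j B_t^T\|_2^2\big] \lesssim p' \cdot \frac{\mu' r}{n}\sum_j \|w_j\|_2^2 \lesssim p'\cdot\frac{\mu' r}{n}\cdot \sigma_1^2 r^{O(1)}\mu^{O(1)}(\alpha(V_y,V_y^t)+\alpha(V_z,V_z^t))^2$ (using $\sum_j\|w_j\|_2^2 = \|U_x(T)(I-B_t^TB_t)\|_F^2$, again controlled by the angle and by $\|U_x(T)\|_F \leq \sqrt r\,\sigma_1$). Dividing by $p'^2$ from the inverse bound gives the $1/(n\sqrt{p'})$ scaling and the $\sqrt{\log}$ factor comes from the concentration/union bound. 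The main obstacle is exactly this last step: getting a high-probability bound (not just in expectation) on $\sum_j \|w_j\Pi_j B_t^T\|_2^2$ with failure probability as small as $1/(10^5\log(\cdots))$. I expect one needs a Bernstein-type inequality for the sum of the independent (across $j$, and across revealed entries within each row) nonnegative contributions, using the per-row bound $\|w_j\Pi_j B_t^T\|_2^2 \leq \|w_j\|_\infty^2 \cdot |P_j| \cdot \max_\ell\|B_t^T e_\ell\|^2$ together with $|P_j| \lesssim p' n^2 + \log n$ with high probability, and the lower bound on $p'$ in the hypothesis is precisely what makes the Bernstein variance term dominate and yields the stated probability. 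I would also need to verify that the choice $p' \geq \frac{\log^2 n}{n^2}(10r\mu'/c \cdot \sigma_1/\sigma_r)^{10}$ is enough to make the concentration kick in — this is the same threshold appearing in Claim~\ref{claim:matrix_inv}, which is reassuring.
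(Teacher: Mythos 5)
Your overall skeleton matches the paper's proof (row-wise least-squares formula for $s_j$, conditioning on Claim~\ref{claim:matrix_inv}, a second-moment computation using the incoherence of the columns of $B_t$ from Claim~\ref{claim:incoherencebound} and the angle bound from Observation~\ref{obs:tensorprincipalangles}), but there is a genuine quantitative gap in the expectation step. You expand $\E\big[\norm{w_j\Pi_j B_t^T}_2^2\big]$ into a diagonal part of order $p'\norm{w_j}_2^2\mu'^2r^4/n^2$ and a cross part $p'^2\norm{w_jB_t^T}_2^2$, and dismiss the latter as ``small since $p'\ll 1$.'' That dismissal fails: after multiplying by the $(2/p')^2$ factor from the inverse bound, the diagonal part contributes $\approx \frac{\mu'^2r^4}{p'n^2}\norm{w_j}_2^2$ while the cross part (bounded only by $\norm{w_j}_2^2$) contributes $\approx \norm{w_j}_2^2$; since $p' \geq \frac{\log^2(\cdot)}{n^2}\left(\frac{10r\mu'}{c}\cdot\frac{\sigma_1}{\sigma_r}\right)^{10}$, we have $\frac{\mu'^2r^4}{p'n^2}\ll 1$, so the cross term would \emph{dominate}, and the resulting bound would be of order $r\sigma_1\big(\alpha(V_y,V_y^t)+\alpha(V_z,V_z^t)\big)$, missing the crucial $\frac{1}{n\sqrt{p'}}$ gain in the lemma by a large polynomial factor. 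The missing observation is that this cross term is identically zero: $w_jB_t^T = u_j(I-B_t^TB_t)B_t^T = 0$ because $B_tB_t^T = I$. The paper exploits exactly this by first centering, writing $s_j = u_j(I-B_t^TB_t)(\Pi_j - p'I)B_t^T(B_t\Pi_jB_t^T)^{-1}$, so that only the variance (diagonal) term survives the expectation; with that fix your computation reproduces the paper's bound $\E_{\Gamma}[\norm{E}_2^2] \leq \frac{4\mu'^2r^6\sigma_1^2}{p'n^2}\big(\alpha(V_y,V_y^t)+\alpha(V_z,V_z^t)\big)^2$.

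The second deviation is not an error but an overcomplication: the concentration step you flag as the ``main obstacle'' (a Bernstein-type inequality over rows and revealed entries) is unnecessary. The lemma only demands failure probability $\frac{1}{10^5\log((n\sigma_1)/(c\sigma_r))}$, which is weak enough that Markov's inequality applied to the conditional second moment $\E_{\Gamma}[\norm{E}_2^2]$ (together with the failure probability of $\Gamma$ from Claim~\ref{claim:matrix_inv}) already gives the stated bound; the $\sqrt{\log((n\sigma_1)/(c\sigma_r))}$ factor in the lemma is precisely the slack that makes Markov suffice. So once the cross-term issue is repaired, no per-row boundedness bounds, $|P_j|$ estimates, or Bernstein machinery are needed.
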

\begin{proof}
Note $B_tB_t^T = I$ since its rows form an orthonormal basis so $u_j (I - B_t^TB_t)B_t^T = 0$.  Thus
\[
s_j = u_j(I - B_t^TB_t)(\Pi_j - p'I)B_t^T(B_t \Pi_j B_t^T)^{-1}
\]
Let $q_j = u_j(I - B_t^TB_t)$.  By Observation \ref{obs:tensorprincipalangles},
\begin{equation}\label{eq:anglebound}
||q_j||_2 \leq ||u_j||_2\alpha(V_y \otimes V_z, V_y^t \otimes V_z^t ) \leq ||u_j||_2\left(\alpha(V_y, V_y^t) + \alpha(V_z, V_z^t) \right)
\end{equation}

Now we upper bound $||q_j(\Pi_j - p'I)B_t^T ||$.  Let the columns of $B_t^T$ be $c_1, \dots , c_{r^2}$ respectively.  Note that since the entries of $\Pi_j - p'I$ have expectation $0$ and variance $p'(1-p')$
\[
\E[||q_j(\Pi_j - p'I)c_i||_2^2] \leq ||q_j||_2^2\cdot p'(1-p') \cdot ||c_i||_{\infty}^2 \leq \frac{p'||q_j||_2^2 \mu'^2 r^2}{n^2}
\]
Thus
\begin{equation}\label{eq:expectationbound}
\E[||q_j(\Pi_j - p'I)B_t^T||_2^2] \leq \frac{p'||q_j||_2^2 \mu'^2 r^4}{n^2}
\end{equation}
Now by Claim \ref{claim:matrix_inv}, with probability $1 - \left(\frac{c\sigma_r}{n\sigma_1} \right)^{20}$ we have 
\[
||(B_t\Pi_jB_t^T)^{-1}|| \leq \frac{2}{ p'}
\]
for all $1 \leq j \leq n$.  Denote this event by $\Gamma$.  Let 
\[
\E_{\Gamma}\left[||E||_2^2\right] = \Pr\left[\Gamma\right]\E\left[||E||_2^2 \: \big| \: \Gamma\right]
\]
In other words, $\E_{\Gamma}[||E||_2^2]$ is the sum of the contribution to $\E\left[||E||_2^2\right]$  from events where $\Gamma$ happens. 
\\\\
Using (\ref{eq:anglebound}) and (\ref{eq:expectationbound}) we have
\begin{align*}
\E_{\Gamma}\left[||E||_2^2 \right] = \sum_{j=1}^n \E_{\Gamma}\left[||s_j||_2^2\right] \leq \frac{4}{p'^2}\frac{p' \mu'^2 r^4}{n^2}\left(\sum_{j=1}^n ||q_j||_2^2\right) \\ \leq \frac{4\mu'^2 r^4}{p'n^2}\left(\alpha(V_y, V_y^t) + \alpha(V_z, V_z^t) \right)^2\sum_{j=1}^n ||u_j||_2^2 \\ \leq \frac{4\mu'^2 r^6 \sigma_1^2}{p'n^2}\left(\alpha(V_y, V_y^t) + \alpha(V_z, V_z^t) \right)^2
\end{align*}

By Markov's inequality, the probability that $\Gamma$ occurs and \[
||E||_2 \geq \frac{10^3 \mu' r^3 \sigma_1\sqrt{\log((n\sigma_1 )/ (c\sigma_r))}}{n\sqrt{p'}} \left(\alpha(V_y, V_y^t) + \alpha(V_z, V_z^t) \right)
\]
is at most $\frac{1}{2\cdot 10^5 \log ((n\sigma_1)/ (c\sigma_r))}$.  Thus with probability at least
\[
1 - \frac{1}{2 \cdot 10^5 \log((n\sigma_1)/ (c\sigma_r))} -  \left(\frac{c\sigma_r}{n\sigma_1} \right)^{20} \geq 1 - \frac{1}{10^5 \log ( (n\sigma_1)/ (c\sigma_r))}
\]
we have
\[
||E||_2 \leq \frac{10^3 \mu' r^3\sigma_1 \sqrt{\log((n\sigma_1)/ (c\sigma_r))}}{n\sqrt{p'}} \left(\alpha(V_y, V_y^t) + \alpha(V_z, V_z^t) \right)
\]
\end{proof}

We will need one additional lemma to show that the incoherence of the subspaces is preserved.  This lemma is an upper bound on the norm of the rows of $E$.  Note this upper bound is much weaker than the one in the previous lemma and does not capture the progress made by alternating minimization but rather is a fixed upper bound that holds for all iterations.
\begin{lemma}\label{lemma:rowbound}
If $p' \geq \frac{\log^2 ((n\sigma_1)/ (c\sigma_r))}{n^2} \left(\frac{10r\mu'}{ c} \cdot \frac{\sigma_1}{\sigma_r}\right)^{10}$ then with probability at least $ 1- \left(\frac{c\sigma_r}{n\sigma_1} \right)^{15}$ all rows of $E$ have norm at most $\frac{0.1c^3\sigma_r}{\sqrt{n}}$
\end{lemma}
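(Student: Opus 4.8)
The strategy is to bound each row $s_j$ of $E$ separately using the explicit formula $s_j = u_j(I - B_t^TB_t)(\Pi_j - p'I)B_t^T(B_t\Pi_jB_t^T)^{-1}$ derived in the previous claims, together with the high-probability bound $\|(B_t\Pi_jB_t^T)^{-1}\| \le 2/p'$ from Claim~\ref{claim:matrix_inv}. Since $\|u_j\|_2 \le \sqrt{\mu r/n}\cdot\|U_x\|_{\op}$-type bounds are too weak, I would instead use the incoherence of $V_y \otimes V_z$ (Claim~\ref{claim:incoherencebound}) to bound $\|u_j\|_2$: each row $u_j$ of $U_x(T)$ lies in $V_y\otimes V_z$ scaled appropriately, so $\|u_j\|_2 \le \sqrt{\mu^2 r^2/n^2}\cdot(\text{something like } \sigma_1\sqrt{r})$, i.e. $\|u_j\|_2 \le \poly(\mu,r)\sigma_1/n$. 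Combined with $\|q_j\|_2 = \|u_j(I-B_t^TB_t)\|_2 \le \|u_j\|_2$, this gives a deterministic per-row bound on $\|q_j\|_2$.

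The crux is controlling $\|q_j(\Pi_j - p'I)B_t^T\|_2$ with high enough probability that a union bound over all $n$ rows (and the $k = \Theta(\log(n\sigma_1/c\sigma_r))$ iterations) still succeeds — this is the main obstacle. The expectation bound $\E\|q_j(\Pi_j-p'I)B_t^T\|_2^2 \le p'\|q_j\|_2^2\mu'^2 r^4/n^2$ from Equation~\ref{eq:expectationbound} is not by itself enough, since Markov only gives a constant-probability-type tail, whereas we need failure probability $\le (c\sigma_r/(n\sigma_1))^{15}$. The plan is to instead apply a Bernstein/matrix-Chernoff concentration inequality (Claim~\ref{claim:matrix-chernoff}, referenced above) to the sum $q_j(\Pi_j-p'I)B_t^T = \sum_{\ell \in [n^2]} (\mathbf{1}[\ell \in P_j] - p')(q_j)_\ell (B_t^T)_\ell$: this is a sum of independent mean-zero vector-valued terms, each of $\ell_2$-norm at most $\|q_j\|_\infty \|(B_t^T)_\ell\|_2 \le \|q_j\|_2 \cdot \sqrt{\mu'^2 r^2/n^2}$ (using that each column of $B_t^T$ — i.e. row of $B_t$ — has norm at most $\sqrt{\mu' r/n}\cdot\sqrt{\mu' r/n}$ entrywise via incoherence of $V_y^t\otimes V_z^t$), with total variance proxy $\le p'\|q_j\|_2^2\mu'^2 r^4/n^2$. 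Bernstein then yields, for a deviation of order $\sqrt{\log(\cdot)}$ times the standard deviation, a tail of $(c\sigma_r/(n\sigma_1))^{O(1)}$ provided $p'n^2 \gtrsim \poly(r,\mu')\log(\cdot)$, which is exactly the hypothesis on $p'$.

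Putting the pieces together: on the event $\Gamma$ from Claim~\ref{claim:matrix_inv} (which holds with probability $1-(c\sigma_r/(n\sigma_1))^{20}$), for each $j$ we get
\[
\|s_j\|_2 \le \frac{2}{p'}\cdot \frac{C\sqrt{\log((n\sigma_1)/(c\sigma_r))}\cdot \sqrt{p'}\,\|q_j\|_2\,\mu' r^2}{n}
\le \frac{C'\mu'^2 r^3 \sigma_1\sqrt{\log((n\sigma_1)/(c\sigma_r))}}{n^2\sqrt{p'}}
\]
with failure probability $(c\sigma_r/(n\sigma_1))^{O(1)}$ per row; a union bound over the $n$ rows still leaves failure probability $\le (c\sigma_r/(n\sigma_1))^{15}$ (absorbing the polynomial factors using the assumption $\max(r,\sigma_1/\sigma_r,1/c,\mu)\le n^\delta$). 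Finally, plugging in the lower bound $p' \ge \frac{\log^2((n\sigma_1)/(c\sigma_r))}{n^2}\left(\frac{10r\mu'}{c}\cdot\frac{\sigma_1}{\sigma_r}\right)^{10}$ makes the right-hand side at most $\frac{0.1 c^3\sigma_r}{\sqrt n}$ after a routine check that the power of $10$ in the exponent dominates the $\poly(\mu',r,\sigma_1/\sigma_r,1/c)$ numerator — this is where the large constant exponents in the sampling rate are consumed. I would remark that the bound is deliberately crude (it does not shrink with the principal angle, unlike Lemma~\ref{lemma:frobeniusbound}) because all we need downstream is a uniform-in-$t$ row-norm bound to feed into Claim~\ref{claim:row-incoherence} and conclude that $V_x^{t+1}$ stays $\mu'$-incoherent.
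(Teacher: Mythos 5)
Your overall skeleton matches the paper's proof: the same formula $s_j = q_j(\Pi_j - p'I)B_t^T(B_t\Pi_jB_t^T)^{-1}$, conditioning on the event of Claim~\ref{claim:matrix_inv}, a concentration bound for $q_j(\Pi_j - p'I)B_t^T$, a union bound over rows, and then plugging in the hypothesis on $p'$. The gap is in the concentration step. After correctly identifying the per-term magnitude $\|q_j\|_\infty\,\norm{(B_t^T)_\ell}_2$, you immediately relax $\|q_j\|_\infty$ to $\|q_j\|_2$, and then assert that Bernstein gives a deviation of $\sqrt{\log(\cdot)}$ standard deviations as soon as $p'n^2 \gtrsim \poly(r,\mu')\log(\cdot)$. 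That is not true with this term bound: the subgaussian regime at deviation $\sqrt{V\log(\cdot)}$ requires roughly $\|q_j\|_\infty\sqrt{\log(\cdot)} \lesssim \sqrt{p'}\,r\,\|q_j\|_2$, i.e. $p'\gtrsim \log(\cdot)/r^2$, whereas here $p'$ is of order $n^{-3/2+o(1)}$. So the max-term (subexponential) part of Bernstein dominates; it contributes about $\|q_j\|_2\cdot(\mu' r/n)\cdot\log(\cdot)$, and after multiplying by $2/p'$ and using the correct bound $\|q_j\|_2\le \|u_j\|_2\le r\sigma_1\sqrt{\mu r/n}$, the resulting row bound is of order $r^{2.5}\sqrt{\mu}\,\mu'\,\sigma_1\sqrt{n}\big/\big(\log(\cdot)\,(\tfrac{10r\mu'}{c}\tfrac{\sigma_1}{\sigma_r})^{10}\big)$, which exceeds the target $0.1c^3\sigma_r/\sqrt{n}$ by roughly a factor $n^{1-O(\delta)}$. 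Relatedly, your intermediate claim $\|u_j\|_2\le \poly(\mu,r)\,\sigma_1/n$ is off by $\sqrt{n}$: incoherence of $V_y\otimes V_z$ controls the \emph{entries} of $u_j$, not its $\ell_2$ norm; the correct bound is $\|u_j\|_2\le r\sigma_1\sqrt{\mu r/n}$, coming from incoherence of $V_x$ alone.

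The missing ingredient — and the crux of the paper's argument — is an \emph{entrywise} bound on $q_j$ exploiting incoherence of all three subspaces: every entry of $u_j$ has magnitude at most $r\sigma_1(\mu r/n)^{3/2}$, and every entry of $u_jB_t^TB_t$ has magnitude at most $\|u_j\|_2\cdot\mu' r/n \le r\sigma_1\sqrt{\mu r/n}\cdot\mu' r/n$, so $\|q_j\|_\infty \le 2r\sigma_1\sqrt{\mu r/n}\cdot\mu' r/n$, which is smaller than the trivial bound $\|q_j\|_2$ by a factor of order $n$. With this, the max-term contribution is harmless and the argument closes: the paper applies the scalar Chernoff bound (Claim~\ref{claim:modifiedchernoff}) to each of the $r^2$ scalars $q_j(\Pi_j-p'I)c_i$, with $\gamma$ equal to the entrywise product bound $\|q_j\|_\infty\|c_i\|_\infty$, and union bounds over $i\in[r^2]$ and $j\in[n]$; your vector-Bernstein route would also work once $R$ is computed with the entrywise bound. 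One further caveat: Claim~\ref{claim:matrix-chernoff} is a minimum-eigenvalue bound for sums of PSD matrices and does not apply to mean-zero vector sums, so you would need to import a vector Bernstein inequality not stated in the paper — the paper sidesteps this by working coordinate-by-coordinate. Your closing remark about why a crude, angle-independent bound suffices (it only feeds Claim~\ref{claim:row-incoherence} to preserve $\mu'$-incoherence) is correct and matches the paper's intent.
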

\begin{proof}
We use the same notation as the proof of the previous claim.  Fix indices $1 \leq j \leq n,1 \leq i \leq r^2$.  It will suffice to obtain a high probability bound for 
\[
|q_j(\Pi_j - p'I)c_i |
\]
and then union bound over all indices.  Recall
\[
q_j = u_j - u_jB_t^TB_t
\]
and by our incoherence assumptions, all entries of $u_j$  have magnitude at most $r\sigma_1\left(\frac{\mu r}{n}\right)^{3/2}$.  By Claim \ref{claim:incoherencebound}, all entries of $u_jB_t^TB_t$ have magnitude at most 
\[
||u_jB_t^T||_2\frac{\mu' r}{n} \leq ||u_j||_2\frac{\mu' r}{n} \leq r\sigma_1\sqrt{\frac{\mu r}{n}}\frac{\mu' r}{n}
\]
Thus all entries of $q_j$ have magnitude at most $2r\sigma_1\sqrt{\frac{\mu r}{n}}\frac{\mu' r}{n}$.  Let $\tau$ be the vector obtained by taking the entrywise product of $q_j$ and $c_i$ and say its entries are $\tau_1, \dots , \tau_{n^2}$.  Note that by Claim \ref{claim:incoherencebound} the entries of $c_i$ are all at most $\frac{\mu' r}{n}$.  Thus the entries of $\tau$ are all at most
\[
\gamma = 2r\sigma_1\sqrt{\frac{\mu r}{n}}\frac{\mu'^2 r^2}{n^2}
\]

Next observe that $q_j(\Pi_j - p'I)c_i$ is obtained by taking a sum $\epsilon_1\tau_1 + \dots + \epsilon_{n^2}\tau_{n^2}$ where $\epsilon_1, \dots , \epsilon_{n^2}$ are sampled independently and are equal to $-p'$ with probability $1 - p'$ and equal to $1-p'$ with probability $p'$.  We now use Claim \ref{claim:modifiedchernoff}.  Note $p' \geq \frac{1}{n^2}$ clearly.  Thus
\[
\Pr\left[ \: |q_j(\Pi_j - p'I)c_i| \geq  10^3 \log ((n\sigma_1)/(c\sigma_r)) n\gamma\sqrt{p'} \right] \leq \left(\frac{c\sigma_r}{n\sigma_1}\right)^{25}.
\]
Note
\[
n\gamma \sqrt{p'} = \frac{2r^{3.5}\mu'^2\mu^{0.5}\sqrt{p'}\sigma_1}{n^{3/2}}
\]
Thus
\[
\Pr\left[ \: |q_j(\Pi_j - p'I)c_i| \leq \frac{2\cdot 10^3 \log ((n\sigma_1)/(c\sigma_r))r^{3.5}\mu'^2\mu^{0.5}\sqrt{p'}\sigma_1}{n^{3/2}} \right] \geq 1 - \left(\frac{c\sigma_r}{n\sigma_1}\right)^{25}.
\]

Union bounding over $1 \leq i \leq r^2$ implies that with probability at least $1 -  \left(\frac{c\sigma_r}{n\sigma_1} \right)^{23}$
\[
||q_j(\Pi_j - p'I)B_t^T|| \leq  \frac{2 \cdot 10^3 \log ((n\sigma_1)/(c\sigma_r))r^{4.5}\mu'^2\mu^{0.5}\sqrt{p'}\sigma_1}{n^{3/2}}
\]
Finally, since
\[
s_j = u_j(I - B_t^TB_t)(\Pi_j - p'I)B_t^T(B_t \Pi_j B_t^T)^{-1} = q_j(\Pi_j - p'I)B_t^T(B_t \Pi_j B_t^T)^{-1}
\]
combining with Claim \ref{claim:matrix_inv} and union bounding over all $1 \leq j \leq n$, we see that with at least $1 -  \left(\frac{c\sigma_r}{n\sigma_1} \right)^{15}$ probability, all rows of $E$ have norm at most
\[
 \frac{4 \cdot 10^3 \log ((n\sigma_1)/(c\sigma_r))r^{4.5}\mu'^2\mu^{0.5}\sigma_1}{n^{3/2}\sqrt{p'}} \leq \frac{0.1c^3\sigma_r}{\sqrt{n}}
\]
which completes the proof. 
\end{proof}

\subsection{Progress Measure}
Now we will use the bounds in Lemma \ref{lemma:frobeniusbound} and Lemma \ref{lemma:rowbound} on the error term $E$ to bound the principal angle with respect to $V_x$ and the incoherence of the new subspace estimate $V_x^{t+1}$.  This will then complete the proof of Theorem \ref{thm:alternatingmin}.  We first need a preliminary  result controlling the $r$\ts{th} singular value  of  $U_xB_t^T$.  Note this is necessary because if the $r$\ts{th} singular value of $U_xB_t^T$ were too small, then it could be erased by the error term $E$.

\begin{claim}\label{claim:nondegenerate2}
The $r$\ts{th} largest singular value of $U_xB_t^T$ is at least $\left(1 - \alpha(V_y, V_y^t) - \alpha(V_z, V_z^t)\right)c^3\sigma_r$
\end{claim}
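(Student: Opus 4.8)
The plan is to combine the variational (min--max) characterization of singular values with the tensor principal-angle bound from Observation~\ref{obs:tensorprincipalangles}, essentially pushing Claim~\ref{claim:nondegenerate} through the Kronecker projection $B_t^T$. Recall from the proof of Claim~\ref{claim:nondegenerate} that $U_x = M_x D N$ with $D = \mathrm{diag}(\sigma_1,\dots,\sigma_r)$ and $N$ the $r \times n^2$ matrix whose rows are $y_1 \otimes z_1, \dots, y_r \otimes z_r$; moreover for every unit vector $v \in V_x$ one has $\norm{v^T U_x}_2 \geq c^3\sigma_r$. Since $U_x B_t^T$ has at least $r$ rows and columns, the min--max principle for singular values applied with the $r$-dimensional subspace $V_x$ gives
\[
\sigma_r\!\left(U_x B_t^T\right) \;\geq\; \min_{v \in V_x,\ \norm{v}_2 = 1} \norm{B_t U_x^T v}_2,
\]
so it is enough to lower bound $\norm{B_t U_x^T v}_2$ for a fixed unit vector $v \in V_x$.

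For such a $v$, set $u := U_x^T v$. Expanding, $u = N^T D\, M_x^T v = \sum_{i=1}^r \sigma_i (M_x^T v)_i\,(y_i \otimes z_i)$, so $u$ lies in $V_y \otimes V_z$, and $\norm{u}_2 = \norm{v^T U_x}_2 \geq c^3\sigma_r$. Next, since the rows of $B_t$ form an orthonormal basis of $V_y^t \otimes V_z^t$, the matrix $B_t^T B_t$ is the orthogonal projection onto $V_y^t \otimes V_z^t$, and hence $\norm{B_t u}_2$ equals the norm of the projection of $u$ onto $V_y^t \otimes V_z^t$. Now $V_y \otimes V_z$ and $V_y^t \otimes V_z^t$ both have dimension $r^2$ (the factors all span $r$-dimensional spaces, by robust linear independence and the incoherence hypothesis), so for the vector $u$ lying in the first of these subspaces, the norm of its projection onto the second is at least $\cos(\theta_{\max})\norm{u}_2$, where $\theta_{\max}$ is the largest principal angle between the two subspaces; this is immediate since $\cos\theta_{\max}$ is the smallest singular value of the Gram matrix of the two orthonormal bases (cf.\ the proof of Observation~\ref{obs:tensorprincipalangles}). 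Using $\cos\theta_{\max} = \sqrt{1 - \alpha(V_y \otimes V_z, V_y^t \otimes V_z^t)^2} \geq 1 - \alpha(V_y \otimes V_z, V_y^t \otimes V_z^t)$ together with Observation~\ref{obs:tensorprincipalangles},
\[
\norm{B_t u}_2 \;\geq\; \bigl(1 - \alpha(V_y, V_y^t) - \alpha(V_z, V_z^t)\bigr)\norm{u}_2 \;\geq\; \bigl(1 - \alpha(V_y, V_y^t) - \alpha(V_z, V_z^t)\bigr) c^3\sigma_r.
\]

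Taking the minimum over unit $v \in V_x$ and combining with the first display yields the claim. I do not anticipate any real obstacle; the only points requiring a little care are the direction of the projection inequality (a unit vector in one subspace has a \emph{large}, not small, projection onto a nearby subspace) and checking that the two Kronecker subspaces have equal dimension so that the principal-angle apparatus and Observation~\ref{obs:tensorprincipalangles} apply --- both are routine.
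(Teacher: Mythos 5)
Your proposal is correct and follows essentially the same route as the paper: restrict to unit vectors in an $r$-dimensional subspace where $\norm{v^TU_x}_2 \geq c^3\sigma_r$ (via Claim~\ref{claim:nondegenerate}), note that $U_x^Tv$ lies in $V_y \otimes V_z$, and lower bound its image under $B_t$ by $\sqrt{1-\alpha(V_y \otimes V_z, V_y^t \otimes V_z^t)^2}$ combined with Observation~\ref{obs:tensorprincipalangles}. The only cosmetic difference is that you invoke the min--max characterization explicitly and spell out the projection/Gram-matrix justification, which the paper leaves implicit.
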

\begin{proof}
By Claim \ref{claim:nondegenerate}, the $r$\ts{th} largest singular value of $U_x$ is at least $c^3\sigma_r$.  Therefore there exists an $r$-dimensional subspace of $\R^n$, say $V$ such that for any unit vector $v \in V$, $||vU_x|| \geq c^3\sigma_r$.  Now for any vector $u$ in $V_y \otimes V_z$, 
\[
||uB_t^T|| \geq \sqrt{1 - \alpha(V_y \otimes V_z, V_y^t \otimes V_z^t)^2}||u||
\]
Next $vU_x$ is contained in the row span of $U_x$ which is contained in $V_y \otimes V_z$.  Thus for any unit vector $v \in V$
\[
||vU_xB_t^T|| \geq  c^3\sigma_r\sqrt{1 - \alpha(V_y \otimes V_z, V_y^t \otimes V_z^t)^2} \geq \left(1 - \alpha(V_y, V_y^t) - \alpha(V_z, V_z^t)\right)c^3\sigma_r
\]
\end{proof}

Now we can upper bound the principal angle between $V_x$ and $V_x^{t+1}$ in terms of the principal angles for the previous iterates.
\begin{corollary}\label{corollary:angle-monovariant}
If $p' \geq \frac{\log^2 ((n\sigma_1)/(c\sigma_r))}{n^2} \left(\frac{10r\mu'}{ c} \cdot \frac{\sigma_1}{\sigma_r}\right)^{10}$ then $\alpha(V_x^{t+1}, V_x) \leq 0.1\left(\alpha(V_y, V_y^t) + \alpha(V_z, V_z^t) \right)$ with at least  $1 - \frac{1}{10^5 \log ((n\sigma_1)/ (c\sigma_r))}$ probability.
\end{corollary}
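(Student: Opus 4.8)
The plan is to compare the column space of $H_{t+1}$, whose top $r$ left singular vectors span $V_x^{t+1}$ by definition of the algorithm, against the idealized ``fully observed'' matrix $U_xB_t^T$, and to control the discrepancy $E = H_{t+1} - U_xB_t^T$ using Lemma~\ref{lemma:frobeniusbound} together with the singular-value lower bound of Claim~\ref{claim:nondegenerate2}. In this sense the corollary is essentially a packaging of those two results.

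First I would identify the column space of $U_xB_t^T$ with $V_x$ exactly. Writing $U_x = M_xDN$ as in the proof of Claim~\ref{claim:nondegenerate}, we have $U_xB_t^T = M_xDNB_t^T$, so its column space is contained in $V_x$, the column space of $M_x$, which has dimension $r$. By Claim~\ref{claim:nondegenerate2} its $r$\ts{th} singular value is at least $\left(1-\alpha(V_y,V_y^t)-\alpha(V_z,V_z^t)\right)c^3\sigma_r \geq 0.8\,c^3\sigma_r$, using the hypotheses $\alpha(V_y,V_y^t),\alpha(V_z,V_z^t)\leq 0.1$ of Theorem~\ref{thm:alternatingmin}. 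Hence $U_xB_t^T$ has rank exactly $r$ and its column space is all of $V_x$.

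Next I would run a perturbation argument. Let $H'$ be the best rank-$r$ approximation of $H_{t+1}$, so that the column space of $H'$ is $V_x^{t+1}$. Since $U_xB_t^T$ has rank at most $r$, the best rank-$r$ approximation error satisfies $\norm{H'-H_{t+1}}_{\op}\leq\norm{H_{t+1}-U_xB_t^T}_{\op}=\norm{E}_{\op}$, so $\norm{H'-U_xB_t^T}_{\op}\leq 2\norm{E}_{\op}\leq 2\norm{E}_2$, where I used that the Frobenius norm dominates the operator norm and that Lemma~\ref{lemma:frobeniusbound} bounds $\norm{E}_2$. I would then condition on the event of Lemma~\ref{lemma:frobeniusbound}, which holds with probability at least $1-\frac{1}{10^5\log((n\sigma_1)/(c\sigma_r))}$ --- exactly the probability claimed here. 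On this event the bound below shows $\norm{E}_2$ is far smaller than $0.8\,c^3\sigma_r$, so $H'$ still has rank exactly $r$, and Claim~\ref{claim:operator-anglebound} applies with $A=H'$, $B=U_xB_t^T$, $\delta=2\norm{E}_2$, $\rho=0.8\,c^3\sigma_r$. Combined with Lemma~\ref{lemma:frobeniusbound} this gives
\[
\alpha(V_x^{t+1},V_x)\;\leq\;\frac{2\norm{E}_2}{0.8\,c^3\sigma_r}\;\leq\;\frac{2500\,\mu' r^3\sigma_1\sqrt{\log((n\sigma_1)/(c\sigma_r))}}{c^3\sigma_r\,n\sqrt{p'}}\left(\alpha(V_y,V_y^t)+\alpha(V_z,V_z^t)\right).
\]

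Finally I would check the numerology: the hypothesis $p'\geq\frac{\log^2((n\sigma_1)/(c\sigma_r))}{n^2}\left(\frac{10r\mu'}{c}\cdot\frac{\sigma_1}{\sigma_r}\right)^{10}$ forces the coefficient above to be at most $0.1$. After squaring, this reduces to $10^{10}r^{10}\mu'^{10}\sigma_1^{10}\log((n\sigma_1)/(c\sigma_r))\geq (2.5\times 10^4)^2\,\mu'^2 r^6\sigma_1^2\, c^4\sigma_r^8$, which is immediate from $c\leq 1$, $\mu'\geq 1$, $\sigma_r\leq\sigma_1$, $\log((n\sigma_1)/(c\sigma_r))\geq 1$ and $10^{10}>6.25\times 10^8$. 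The only mildly delicate points are the exact identification of the column space of $U_xB_t^T$ with $V_x$ and the factor-of-two loss from passing to the best rank-$r$ approximation; I do not expect a substantive obstacle here, since the real analytic work has already been done in Lemma~\ref{lemma:frobeniusbound} and Claim~\ref{claim:nondegenerate2}.
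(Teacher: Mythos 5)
Your proposal is correct and follows essentially the same route as the paper's proof: write $H_{t+1}=U_xB_t^T+E$, invoke Lemma~\ref{lemma:frobeniusbound} for $\norm{E}_2$ and Claim~\ref{claim:nondegenerate2} for the $r$\ts{th} singular value, lose a factor of two by passing to the best rank-$r$ approximation, and apply Claim~\ref{claim:operator-anglebound} before checking the parameter choice. Your version is in fact slightly more explicit than the paper's (identifying the column space of $U_xB_t^T$ with $V_x$ and verifying the final numerology), but it is the same argument.
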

\begin{proof}
Note $H = U_xB_t^T + E$.  

By Lemma \ref{lemma:frobeniusbound}, with probability at least $1 - \frac{1}{10^5 \log ((n\sigma_1)/ (c\sigma_r))}$
\[
\norm{E}_2 \leq \frac{10^3  \mu' r^3 \sigma_1 \sqrt{\log(n/ (c\sigma_r))}}{n\sqrt{p'}} \left(\alpha(V_y, V_y^t) + \alpha(V_z, V_z^t) \right)
\]
If this happens, the largest singular value of $E$ is at most
\[
\sigma \leq \norm{E}_2 \leq \frac{10^3  \mu' r^3 \sigma_1 \sqrt{\log(n/ (c\sigma_r))}}{n\sqrt{p'}} \left(\alpha(V_y, V_y^t) + \alpha(V_z, V_z^t) \right)
\]
Let $\rho_1 \geq \dots \geq \rho_r$ be the singular values of $U_xB_t^T$.  By Claim \ref{claim:nondegenerate2} 
\[
\rho_r \geq \left(1 - \alpha(V_y, V_y^t) - \alpha(V_z, V_z^t)\right)c^3\sigma_r
\]

Let $H_r$ be the rank-$r$ approximation of $H$ given by the top $r$ singular components.  $U_xB_t^T$ has rank $r$ and since $H_r$ is the best rank $r$ approximation of $H$ in Frobenius norm we have $H_r = H + E' = U_xB_t^T + E + E'$ where $||E'||_2 \leq ||E||_2$.  Now note
\[
\norm{H_r - U_xB_t^T}_{\op} \leq \norm{E + E'}_{2} \leq 2\norm{E}_2
\]
Thus by Claim \ref{claim:operator-anglebound} (applied to the matrices $H_r$, $U_xB_t^T$)
\[
\alpha(V_x, V_x^{t+1}) \leq \frac{2||E||_2}{\rho_r} \leq \frac{4 ||E||_2}{\sigma_r c^3} \leq 0.1\left(\alpha(V_y, V_y^t) + \alpha(V_z, V_z^t) \right)
\]

\end{proof}

We also upper bound the incoherence of $V_x^{t+1}$, relying on Lemma \ref{lemma:rowbound}.

\begin{corollary}\label{corollary:incoherence-monovariant}
If $p' \geq \frac{\log^2 ((n\sigma_1)/(c\sigma_r))}{n^2} \left(\frac{10r\mu'}{ c} \cdot \frac{\sigma_1}{\sigma_r}\right)^{10}$ then the subspace $V_x^{t+1}$ is $\left(\frac{2\mu r}{c^2}\cdot \frac{\sigma_1^2}{\sigma_r^2}\right)^{10}$-incoherent with at least $1 - \left( \frac{c\sigma_r}{n \sigma_1}\right)^{10}$ probability.
\end{corollary}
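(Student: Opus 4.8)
The plan is to apply Claim~\ref{claim:row-incoherence} to the matrix $H = H_{t+1}$, whose column space is $V_x^{t+1}$. To do so I need two ingredients: a uniform bound on the norms of the rows of $H$, and a lower bound on the $r$th singular value of $H$. For the row-norm bound, I would write $H = U_x B_t^T + E$ and bound each quantity separately. The rows of $U_x B_t^T$ have norm at most the norms of the corresponding rows of $U_x$ (since $B_t^T$ has orthonormal columns, $\norm{u_j B_t^T}_2 \le \norm{u_j}_2$), and by the incoherence assumptions on $V_x, V_y, V_z$ together with Claim~\ref{claim:incoherencebound}, $\norm{u_j}_2 \le r\sigma_1 \sqrt{\mu r/n} \cdot \sqrt{\mu^2 r^2/n^2}$ — i.e.\ a bound of the form $C \sqrt{r/n}$ with $C$ polynomial in $r, \mu, \sigma_1$. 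For the rows of $E$, I invoke Lemma~\ref{lemma:rowbound}, which (under the stated lower bound on $p'$) guarantees with probability at least $1 - (c\sigma_r/(n\sigma_1))^{15}$ that every row of $E$ has norm at most $0.1 c^3 \sigma_r/\sqrt{n}$. Combining these gives a uniform row-norm bound on $H$ of the form $C' \sqrt{r/n}$ where $C'$ is polynomial in $r, \mu, \sigma_1/\sigma_r, 1/c$.

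Next, for the singular value lower bound: by Claim~\ref{claim:nondegenerate2}, the $r$th singular value of $U_x B_t^T$ is at least $(1 - \alpha(V_y,V_y^t) - \alpha(V_z,V_z^t)) c^3 \sigma_r \ge 0.8 c^3 \sigma_r$ using the hypothesis that the principal angles are at most $0.1$. On the event from Lemma~\ref{lemma:rowbound} (which also controls $\norm{E}_2$, or alternatively on the event from Lemma~\ref{lemma:frobeniusbound}), $\norm{E}_2$ is small — in particular at most $0.1 c^3 \sigma_r$, again using that the principal angles are small and that $p'$ is large enough to make the prefactor negligible. By Weyl's inequality the $r$th singular value of $H = U_x B_t^T + E$ is then at least $0.8 c^3 \sigma_r - 0.1 c^3 \sigma_r \ge \rho := 0.7 c^3 \sigma_r$. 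Now apply Claim~\ref{claim:row-incoherence} with parameters $c \leftarrow C'$ and $\rho$: the column space $V_x^{t+1}$ is $(C'^2 r / \rho^2)$-incoherent. A direct calculation shows $C'^2 r/\rho^2 \le \left(\frac{2\mu r}{c^2} \cdot \frac{\sigma_1^2}{\sigma_r^2}\right)^{10}$ provided $C'$ has the polynomial form claimed; the constant $10$ in the exponent gives plenty of slack to absorb the universal constants. Finally, union-bound the two failure events (from Lemma~\ref{lemma:rowbound} and, if used, Lemma~\ref{lemma:frobeniusbound}) to conclude the whole argument holds with probability at least $1 - (c\sigma_r/(n\sigma_1))^{10}$.

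The routine but slightly delicate part is the bookkeeping of exponents: one must check that the worst-case row-norm bound $C'\sqrt{r/n}$, squared and divided by $\rho^2 = \Theta(c^6 \sigma_r^2)$, really does fit inside $\left(\frac{2\mu r}{c^2}\cdot\frac{\sigma_1^2}{\sigma_r^2}\right)^{10}$. Tracking through, $C'$ is dominated by the $U_x B_t^T$ contribution $r\sigma_1 \sqrt{\mu r/n}\cdot \mu^2 r^2/n^2 \cdot \sqrt{n/r}$ wait — more carefully, $\norm{u_j}_2 \le r\sigma_1 (\mu r/n)^{3/2}$ from the three incoherent factors, so the row norm of $U_xB_t^T$ is at most $r\sigma_1(\mu r/n)^{3/2} = r\sigma_1 \mu^{3/2} r^{3/2} n^{-3/2}$, which is $o(\sqrt{r/n})$; hence the binding constraint on the row norm is actually the $E$ term, bounded by $0.1 c^3 \sigma_r/\sqrt n$, so effectively $C' = 0.1 c^3 \sigma_r$ (up to the $U_x$ term which is lower order for large $n$). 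Then $C'^2 r/\rho^2 = \Theta(c^6 \sigma_r^2 \cdot r / (c^6 \sigma_r^2)) = \Theta(r)$, comfortably below the target. So the main obstacle is not conceptual but purely a matter of carefully verifying that the stated lower bound on $p'$ is strong enough to drive both $\norm{E}_2$ and the individual row norms of $E$ below the thresholds needed — and this is exactly what Lemmas~\ref{lemma:frobeniusbound} and~\ref{lemma:rowbound} were set up to deliver, so the proof should be short.
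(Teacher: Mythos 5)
Your overall strategy is exactly the paper's: bound the rows of $E$ via Lemma~\ref{lemma:rowbound}, get $\norm{E}_2\le 0.1c^3\sigma_r$ from that, lower bound the $r$th singular value of $H_{t+1}=U_xB_t^T+E$ by Claim~\ref{claim:nondegenerate2} plus Weyl, and then feed a uniform row-norm bound on $H_{t+1}$ into Claim~\ref{claim:row-incoherence} (and note that applying that claim to $H_{t+1}$ is legitimate because it speaks of the top $r$ left singular vectors, which is precisely $V_x^{t+1}$, not the full column space as you wrote). However, your bookkeeping contains a genuine error: you bound $\norm{u_j}_2$ by $r\sigma_1(\mu r/n)^{3/2}$, but that is the \emph{entrywise} bound on $u_j$; the row $u_j$ of $U_x(T)$ has $n^2$ entries, and the correct norm bound coming from incoherence is $\norm{u_j}_2\le \sum_i \sigma_i\,\abs{(x_i)_j}\le r\sigma_1\sqrt{\mu r/n}$, which scales as $n^{-1/2}$, not $n^{-3/2}$. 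Consequently your conclusion that the $E$ term dominates the rows of $H_{t+1}$ and that one may take $C'=0.1c^3\sigma_r$, yielding an incoherence of order $\Theta(r)$, is wrong: the dominant contribution is $U_xB_t^T$, the correct row bound is $2r\sigma_1\sqrt{\mu r/n}$, and Claim~\ref{claim:row-incoherence} then gives incoherence $\frac{16\mu r^3\sigma_1^2}{c^6\sigma_r^2}$ (as in the paper), not $\Theta(r)$. This slip does not invalidate the corollary, since $\frac{16\mu r^3\sigma_1^2}{c^6\sigma_r^2}\le\left(\frac{2\mu r}{c^2}\cdot\frac{\sigma_1^2}{\sigma_r^2}\right)^{10}$ still holds with ample slack, but the "purely routine" part of your argument is precisely where the constant is decided, and as written it certifies the wrong constant for the wrong reason.
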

\begin{proof}
By Lemma \ref{lemma:rowbound}, with $1 - \left( \frac{c\sigma_r}{n\sigma_1}\right)^{10}$ probability, each row of $E$ has norm at most $\frac{0.1c^3\sigma_r}{\sqrt{n}}$.  This implies $\norm{E}_2 \leq 0.1c^3\sigma_r$.
\\\\
Let the top $r$ singular values of $H$ be $\rho_1', \dots , \rho_r'$ and let the top $r$ singular values of $U_xB_t^T$ be $\rho_1, \dots , \rho_r$. Note $\rho_r' \geq \rho_r - \norm{E}_{\op}$.  Also by Claim \ref{claim:nondegenerate2}
\[
\rho_r \geq \left(1 - \alpha(V_y, V_y^t) - \alpha(V_z, V_z^t)\right)c^3\sigma_r
\]
Thus
\[
\rho_r' \geq \rho_r - \norm{E}_{\op} \geq \left(1 - \alpha(V_y, V_y^t) - \alpha(V_z, V_z^t)\right)c^3\sigma_r - \norm{E}_2 \geq \frac{c^3\sigma_r}{2}
\]
Next observe that each row of $U_xB_t^T$ has norm at most $r\sigma_1\sqrt{\frac{\mu r}{n}}$ and since each row of $E$ has norm at most $\frac{0.1c^3\sigma_r}{\sqrt{n}}$, we deduce that each row of $H$ has norm at most $2r\sigma_1\sqrt{\frac{\mu r}{n}}$.  Now by Claim \ref{claim:row-incoherence}, $V_x^{t+1}$ is incoherent with incoherence parameter
\[
\frac{4r^3\mu\sigma_1^2}{\left(\frac{c^3 \sigma_r}{2} \right)^2} = \frac{16\mu r^3\sigma_1^2}{c^6\sigma_r^2}.
\]
Clearly 
\[
\frac{16\mu r^3\sigma_1^2}{c^6\sigma_r^2} \leq \left(\frac{2\mu r}{c^2} \cdot \frac{\sigma_1^2}{\sigma_r^2}\right)^{10}
\]
so we are done.
\end{proof}

We can now complete the proof of the main theorem of this section, Theorem \ref{thm:alternatingmin}.
\begin{proof}[Proof of Theorem \ref{thm:alternatingmin}]
Combining Corollary \ref{corollary:angle-monovariant} and Corollary \ref{corollary:incoherence-monovariant}, we immediately get the desired.
\end{proof}

Theorem \ref{thm:alternatingmin} immediately gives us the following corollary.
\begin{corollary}\label{coro:altminfinal}
Assume that $V_x^0, V_y^0, V_z^0$ satisfy
\begin{itemize}
    \item $\max\left(\alpha(V_x, V_x^{0}),\alpha(V_y, V_y^{0}), \alpha(V_z, V_z^{0}) \right) \leq 0.1$
    \item The subspaces $V_x^0, V_y^0, V_z^0$ are $\mu'$-incoherent where $\mu' = \left(\frac{2\mu r}{c^2} \cdot \frac{\sigma_1^2}{\sigma_r^2}\right)^{10}$
\end{itemize}
Then with probability $0.99$, when {\sc Kronecker Alternating Minimization} is run with parameters
\begin{align*}
    &p_2 = \left(\frac{\mu r \log n  }{c} \cdot \frac{\sigma_1}{\sigma_r} \right)^{300}\frac{1}{n^{3/2}} \\
    &p' = \frac{\log^2((n\sigma_1)/(c\sigma_r))}{n^2} \left(\frac{10r\mu'}{ c} \cdot  \frac{\sigma_1}{\sigma_r}\right)^{10}
\end{align*}
we have 
\[
\max\left(\alpha(V_x, V_x^k), \alpha(V_y, V_y^k), \alpha(V_z, V_z^k)\right) \leq \left(\frac{\sigma_r c}{10\sigma_1 n} \right)^{10^2}.
\]
\end{corollary}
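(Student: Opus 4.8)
The plan is to apply Theorem \ref{thm:alternatingmin} iteratively across all $k$ timesteps of {\sc Kronecker Alternating Minimization}, using a union bound to control the failure probability, and then verify that the geometric decay of the principal angle reaches the target accuracy within $k = 10^2 \log\frac{n\sigma_1}{c\sigma_r}$ steps. The hypotheses of the corollary — that $V_x^0, V_y^0, V_z^0$ are $\mu'$-incoherent and have principal angle at most $0.1$ with the true subspaces — are exactly the hypotheses needed to invoke Theorem \ref{thm:alternatingmin} at timestep $t=0$. The key structural point is that the conclusion of Theorem \ref{thm:alternatingmin} is \emph{self-reproducing}: it outputs subspaces $V_x^{t+1}, V_y^{t+1}, V_z^{t+1}$ that are again $\mu'$-incoherent, and whose principal angle with the true subspaces is at most $0.2$ times the previous maximum angle, hence in particular still at most $0.1 \le 0.2 \cdot 0.1$. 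So the hypotheses propagate from timestep $t$ to timestep $t+1$.

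First I would check that the parameter choices in the corollary meet the requirements of Theorem \ref{thm:alternatingmin}. The theorem requires $p' \geq \frac{\log^2((n\sigma_1)/(c\sigma_r))}{n^2} \left(\frac{10r\mu'}{c} \cdot \frac{\sigma_1}{\sigma_r}\right)^{10}$; the corollary sets $p'$ equal to exactly this quantity, so this is satisfied. We also need $k p' \le p_2$ so that the sample-splitting in Claim \ref{claim:samplesplitting} is valid: with $k = 10^2 \log\frac{n\sigma_1}{c\sigma_r}$, $\mu' = \left(\frac{2\mu r}{c^2}\cdot\frac{\sigma_1^2}{\sigma_r^2}\right)^{10}$, and $p_2 = \left(\frac{\mu r \log n}{c}\cdot\frac{\sigma_1}{\sigma_r}\right)^{300}\frac{1}{n^{3/2}}$, one checks that $k p'$ is of order $\frac{\polylog}{n^2}\cdot\left(\frac{r\mu}{c}\cdot\frac{\sigma_1}{\sigma_r}\right)^{O(1)}$, which is far smaller than $p_2$ (note the extra factor of $n^{1/2}$ and the much larger exponent in $p_2$). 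So the splitting produces the required $k$ independent samples each observed with probability $p'$.

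Next I would set up the induction. Define $A_t = \max\left(\alpha(V_x, V_x^t), \alpha(V_y, V_y^t), \alpha(V_z, V_z^t)\right)$. By hypothesis $A_0 \le 0.1$ and the initial subspaces are $\mu'$-incoherent. Inductively, assuming $A_t \le 0.1$ and $V_x^t, V_y^t, V_z^t$ are $\mu'$-incoherent, Theorem \ref{thm:alternatingmin} gives, with probability at least $1 - \frac{1}{10^4 \log((n\sigma_1)/(c\sigma_r))}$, both $A_{t+1} \le 0.2 A_t$ and that $V_x^{t+1}, V_y^{t+1}, V_z^{t+1}$ are $\mu'$-incoherent; in particular $A_{t+1} \le 0.2 \cdot 0.1 < 0.1$, so the inductive hypothesis is restored. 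Union bounding over all $k$ timesteps, the probability that every step succeeds is at least $1 - k \cdot \frac{1}{10^4 \log((n\sigma_1)/(c\sigma_r))} = 1 - \frac{10^2 \log((n\sigma_1)/(c\sigma_r))}{10^4 \log((n\sigma_1)/(c\sigma_r))} = 1 - \frac{1}{100} = 0.99$. On this event, $A_k \le 0.2^k A_0 \le 0.2^k$. Since $k = 10^2 \log\frac{n\sigma_1}{c\sigma_r}$, we have $0.2^k = \left(\frac{1}{5}\right)^{10^2 \log(n\sigma_1/(c\sigma_r))} \le e^{-10^2 \log(n\sigma_1/(c\sigma_r))} = \left(\frac{c\sigma_r}{n\sigma_1}\right)^{100} \le \left(\frac{\sigma_r c}{10 \sigma_1 n}\right)^{10^2}$, which is the claimed bound.

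The only genuine subtlety — and the part I would be most careful about — is the union bound over timesteps combined with the self-reproducing structure: one must be sure that the failure events at different timesteps are being handled correctly given that the samples $\wh{T_1}, \dots, \wh{T_k}$ are independent (so Theorem \ref{thm:alternatingmin} applies afresh at each step conditioned on the current iterate), and that the quantifier ordering is right, i.e. we condition on the success of steps $0, \dots, t$ before invoking the theorem at step $t+1$. Beyond that, everything reduces to the arithmetic of comparing the exponents and polylogarithmic factors in the stated parameter values, which is routine. Combining Corollary \ref{corollary:angle-monovariant} with Corollary \ref{corollary:incoherence-monovariant} — as done in the proof of Theorem \ref{thm:alternatingmin} — is what makes each individual step go through, so no new ingredients beyond Theorem \ref{thm:alternatingmin} itself are needed.
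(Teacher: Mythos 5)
Your approach is exactly the paper's: the paper proves Theorem \ref{thm:alternatingmin} precisely so that this corollary follows by iterating it $k$ times with a union bound, and your induction (hypotheses are self-reproducing since $0.2\cdot 0.1 < 0.1$), your check that $kp' \leq p_2$, and your failure-probability accounting $k \cdot \frac{1}{10^4\log((n\sigma_1)/(c\sigma_r))} = \frac{1}{100}$ all match what the paper intends when it says the corollary is immediate.

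One small slip in your final arithmetic: the chain $\left(\frac{c\sigma_r}{n\sigma_1}\right)^{100} \leq \left(\frac{c\sigma_r}{10\sigma_1 n}\right)^{10^2}$ is backwards, since the left side is larger by a factor of $10^{100}$. The conclusion still holds if you do not throw away the base: $0.2^{k} = \left(\frac{n\sigma_1}{c\sigma_r}\right)^{-10^2\ln 5}$, and since $\ln 5 > 1.6$ the extra exponent of roughly $60$ absorbs the factor $10^{100}$ as soon as $\frac{n\sigma_1}{c\sigma_r} \geq 10^{2}$, say, which holds because $\sigma_1 \geq \sigma_r$, $c \leq 1$, and $n$ is large under the paper's standing assumption $\max(r, \sigma_1/\sigma_r, 1/c, \mu) \leq n^{\delta}$. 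With that correction the argument is complete.
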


\section{Experiments}\label{sec:experiments}

In this section, we describe our experimental results and in particular how our algorithm compares to existing algorithms. The code for the experiments can be found at \url{https://github.com/cliu568/tensor_completion}. 

First, it is important to notice that how well an algorithm performs can sometimes depend a lot on properties of the low-rank tensor. In our experiments, we find that there are many algorithms that succeed when the factors are nearly orthogonal but degrade substantially when the factors are correlated. To this end, we ran experiments in two different setups:

\begin{itemize}
    \item Uncorrelated tensors: generated by taking $T = \sum_{i=1}^4 x_i \otimes y_i \otimes z_i$ where $x_i,y_i,z_i$ are random unit vectors.
    \item  Correlated tensors: generated by taking $T = \sum_{i=1}^4 0.5^{i-1} x_i \otimes y_i \otimes z_i$ where $x_1,y_1,z_1$ are random unit vectors and for $i > 1$, $x_i,y_i,z_i$ are random unit vectors that have covariance $\sim 0.88$ with $x_1,y_1,z_1$ respectively.
\end{itemize} 

Unfortunately many algorithms in the literature either cannot be run on real data, such as ones based on solving large semidefinite programs \cite{barak2016noisy, potechin2017exact} and for others no existing code was available. Moreover some algorithms need to solve optimization problems with as many as $n^3$ constraints \cite{xia2019polynomial} and do not seem to be able to scale to the sizes of the problems we consider here. Instead, we primarily consider two algorithms: a variant of our algorithm, which we call {\sc Kronecker Completion}, and {\sc Standard Alternating Minimization}.  For {\sc Kronecker Completion}, we randomly initialize $\{\wh{x_i}\}, \{\wh{y_i}\}, \{\wh{z_i}\}$ and then run alternating minimization with updates given by Equation \ref{eq:subspacealtmin}. For given subspace estimates, we execute the projection step of {\sc Post-Processing via Convex Optimization} to estimate the true tensor (omitting the decomposition and convex optimization). It seems that neither the initialization nor the post-processing steps are needed in practice.  For {\sc Standard Alternating Minimization}, we randomly initialize $\{\wh{x_i}\}, \{\wh{y_i}\}, \{\wh{z_i}\}$ and then run alternating minimization with updates given by Equation \ref{eq:naivealtmin}.  To estimate the true tensor, we simply take $\sum_{i=1}^r \wh{x_i} \otimes \wh{y_i} \otimes \wh{z_i}$

For alternating minimization steps, we use a random subset consisting of half of the observations.    We call this subsampling.  Subsampling appears to improve the performance, particularly of {\sc Standard Alternating Minimization}.  We discuss this in more detail below.  

\begin{figure*}[h!]
\centering
    \textbf{Error over time for {\sc Kronecker Completion} and {\sc Standard Alternating Minimization}}
    \subfloat{%
        \includegraphics[scale = 0.4]{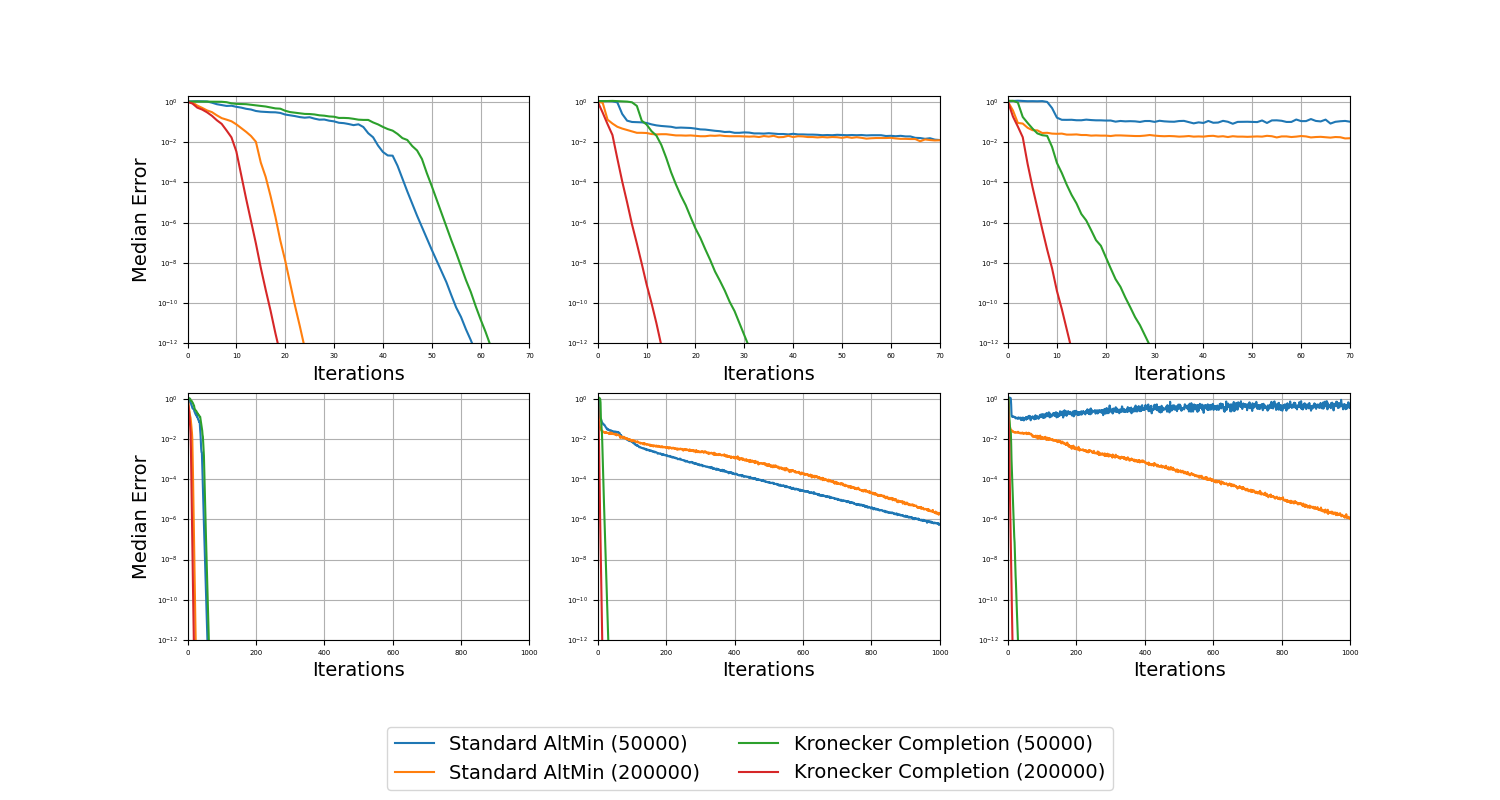}%
        }%
    \caption{Top Row: The plot on the left is for uncorrelated tensors.  The plot in the middle is for correlated tensors and with subsampling.  The plot on the right is for correlated tensors and no subsampling.  Bottom Row: The plots are the same as for the top row, but zoomed out so that it is easier to see the behavior of alternating minimization after a large number of iterations. }
\end{figure*}

We ran {\sc Kronecker Completion} and {\sc Standard Alternating Minimization} for $n=200,r=4$ and either $50000$ or $200000$ observations.  We ran $100$ trials and took the median normalized MSE i.e. $\frac{\norm{T_{\textsf{est}} - T}_2}{\norm{T}_2}$.  For both algorithms, the error converges to zero rapidly for uncorrelated tensors.  However, for correlated tensors, the error for {\sc Kronecker Completion} converges to zero at a substantially faster rate for {\sc Standard Alternating Minimization}.  Compared to {\sc Standard Alternating Minimization}, the runtime of each iteration of {\sc Kronecker Completion} is larger by a factor of roughly $r$ (here $r = 4$).  However, the error for {\sc Kronecker Completion} converges to zero in around $30$ iterations while the error for {\sc Standard Alternating Minimization} fails to converge to zero after $1000$ iterations, despite running for nearly $10$ times as long.  Naturally, we expect the convergence rate to be faster when we have more observations.  Our algorithm exhibits this behavior.  On the other hand, for {\sc Standard Alternating Minimization}, the convergence rate is actually slower with $200000$ observations than with $50000$.

In the rightmost plot, we run the same experiments with correlated tensors without subsampling.  Note the large oscillations and drastically different behavior of {\sc Standard Alternating Minimization} with $50000$ observations. 

\begin{figure*}[h!]
    \centering
    \textbf{{\sc Kronecker Completion} with noisy observations }
    \subfloat{%
        \includegraphics[scale = 0.5]{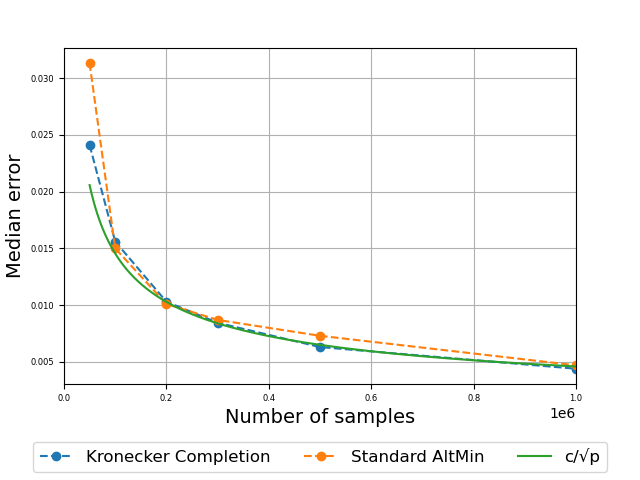}%
        }%
\caption{$n=200,r = 4$, we ran $100$ iterations of {\sc Kronecker Completion} and $400$ iterations of {\sc Standard Alternating Minimization}}
\end{figure*}
We also ran {\sc Kronecker Completion} with noisy observations ($10\%$ noise entrywise).  The error is measured with respect to the true tensor.  Note that the error achieved by both estimators is smaller than the noise that was added.  Furthermore, the error decays with the square root of the number of samples, which is essentially optimal (see \cite{cai2019nonconvex}).  

\begin{figure*}[h!]
    \centering
    \textbf{Error over time with noisy observations}
    \subfloat{%
        \includegraphics[scale = 0.4]{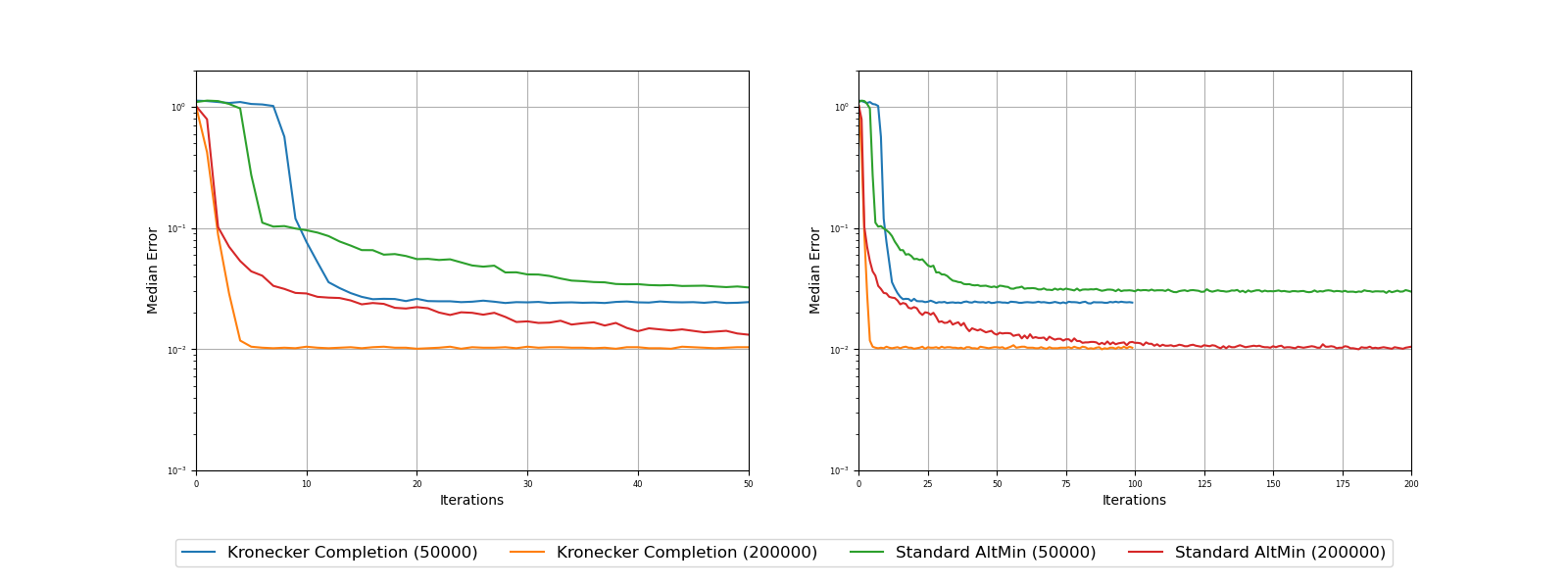}%
        }%
\caption{$n=200,r = 4$, the plot on the right is a zoomed out version of the plot on the left}
\end{figure*}

\noindent Although both algorithms converge to roughly the same error, similar to the non-noisy case, the error of our algorithm converges at a significantly faster rate in this setting as well. 

\newpage

\begin{figure*}[h!]
    \centering
    \textbf{Sample complexities of various algorithms}
    
    \includegraphics[scale = 0.6]{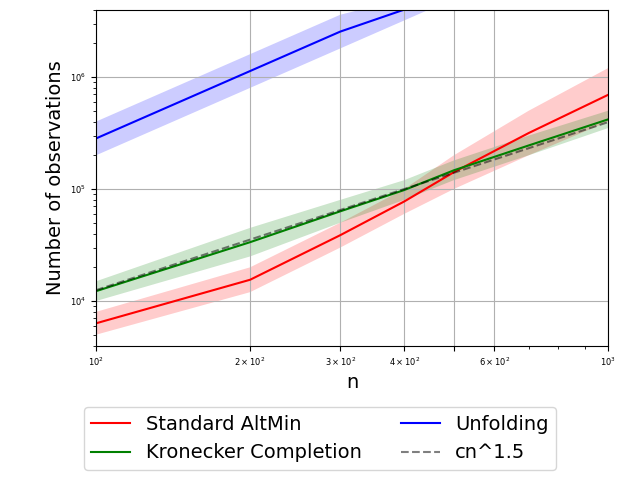}
    \caption{$r = 4$, success is defined as achieving $<1\%$ normalized MSE in at least half of the trials.  Shaded areas represent between $20\%$ and $80\%$ success rate.}
\end{figure*}

Finally, we ran various tensor completion algorithms on correlated tensors for varying values of $n$ and numbers of observations.  {\sc Unfolding} involves unfolding the tensor and running alternating minimization for matrix completion on the resulting $n \times n^2$ matrix.  We ran $100$ iterations for {\sc Kronecker Completion} and {\sc Unfolding} and $400$ iterations for {\sc Standard Alternating Minimization}.  Note how the sample complexity of {\sc Kronecker Completion} appears to grow as $n^{3/2}$ whereas even for fairly large error tolerance ($1\%$), {\sc Standard Alternating Minimization} seems to have more difficulty scaling to larger tensors.

It is important to keep in mind that in many settings of practical interest we expect the factors to be correlated, such as in factor analysis \cite{moitra2018algorithmic}. Thus our experiments show that existing algorithms only work well in seriously restricted settings, and even then they exhibit various sorts of pathological behavior such as getting stuck without using subsampling, converging slower when there are more observations, etc. It appears that this sort of behavior only arises when completing tensors and not matrices. In contrast, our algorithm works well across a range of tensors (sometimes dramatically better) and fixes these issues.

\bibliographystyle{plain}
\bibliography{bibliography}

\begin{thebibliography}{10}

\bibitem{allman2009identifiability}
Elizabeth~S Allman, Catherine Matias, John~A Rhodes, et~al.
\newblock Identifiability of parameters in latent structure models with many
  observed variables.
\newblock {\em The Annals of Statistics}, 37(6A):3099--3132, 2009.

\bibitem{anandkumar2014tensor}
Animashree Anandkumar, Rong Ge, Daniel Hsu, Sham~M Kakade, and Matus Telgarsky.
\newblock Tensor decompositions for learning latent variable models.
\newblock {\em Journal of Machine Learning Research}, 15:2773--2832, 2014.

\bibitem{barak2016noisy}
Boaz Barak and Ankur Moitra.
\newblock Noisy tensor completion via the sum-of-squares hierarchy.
\newblock In {\em Conference on Learning Theory}, pages 417--445, 2016.

\bibitem{bengua2017efficient}
Johann~A Bengua, Ho~N Phien, Hoang~Duong Tuan, and Minh~N Do.
\newblock Efficient tensor completion for color image and video recovery:
  Low-rank tensor train.
\newblock {\em IEEE Transactions on Image Processing}, 26(5):2466--2479, 2017.

\bibitem{bhaskara2014smoothed}
Aditya Bhaskara, Moses Charikar, Ankur Moitra, and Aravindan Vijayaraghavan.
\newblock Smoothed analysis of tensor decompositions.
\newblock In {\em Proceedings of the forty-sixth annual ACM symposium on Theory
  of computing}, pages 594--603, 2014.

\bibitem{cai2019nonconvex}
Changxiao Cai, Gen Li, H~Vincent Poor, and Yuxin Chen.
\newblock Nonconvex low-rank tensor completion from noisy data.
\newblock In {\em Advances in Neural Information Processing Systems}, pages
  1861--1872, 2019.

\bibitem{candes2009exact}
Emmanuel~J Cand{\`e}s and Benjamin Recht.
\newblock Exact matrix completion via convex optimization.
\newblock {\em Foundations of Computational mathematics}, 9(6):717, 2009.

\bibitem{chandrasekaran2012convex}
Venkat Chandrasekaran, Benjamin Recht, Pablo~A Parrilo, and Alan~S Willsky.
\newblock The convex geometry of linear inverse problems.
\newblock {\em Foundations of Computational mathematics}, 12(6):805--849, 2012.

\bibitem{daniely2013more}
Amit Daniely, Nati Linial, and Shai Shalev-Shwartz.
\newblock More data speeds up training time in learning halfspaces over sparse
  vectors.
\newblock In {\em Advances in Neural Information Processing Systems}, pages
  145--153, 2013.

\bibitem{fazel2002matrix}
Maryam Fazel.
\newblock Matrix rank minimization with applications [ph. d. thesis].
\newblock {\em Elec. Eng. Dept, Stanford University}, 2002.

\bibitem{gandy2011tensor}
Silvia Gandy, Benjamin Recht, and Isao Yamada.
\newblock Tensor completion and low-n-rank tensor recovery via convex
  optimization.
\newblock {\em Inverse Problems}, 27(2):025010, 2011.

\bibitem{grotschel2012geometric}
Martin Gr{\"o}tschel, L{\'a}szl{\'o} Lov{\'a}sz, and Alexander Schrijver.
\newblock {\em Geometric algorithms and combinatorial optimization}, volume~2.
\newblock Springer Science \& Business Media, 2012.

\bibitem{gurvits2003classical}
Leonid Gurvits.
\newblock Classical deterministic complexity of edmonds' problem and quantum
  entanglement.
\newblock In {\em Proceedings of the thirty-fifth annual ACM symposium on
  Theory of computing}, pages 10--19, 2003.

\bibitem{matrixaltmin}
Moritz Hardt.
\newblock Understanding alternating minimization for matrix completion.
\newblock In {\em 2014 IEEE 55th Annual Symposium on Foundations of Computer
  Science}, pages 651--660. IEEE, 2014.

\bibitem{hillar2013most}
Christopher~J Hillar and Lek-Heng Lim.
\newblock Most tensor problems are np-hard.
\newblock {\em Journal of the ACM (JACM)}, 60(6):1--39, 2013.

\bibitem{jain2013low}
Prateek Jain, Praneeth Netrapalli, and Sujay Sanghavi.
\newblock Low-rank matrix completion using alternating minimization.
\newblock In {\em Proceedings of the forty-fifth annual ACM symposium on Theory
  of computing}, pages 665--674, 2013.

\bibitem{jain2014provable}
Prateek Jain and Sewoong Oh.
\newblock Provable tensor factorization with missing data.
\newblock In {\em Advances in Neural Information Processing Systems}, pages
  1431--1439, 2014.

\bibitem{kreimer2013tensor}
Nadia Kreimer, Aaron Stanton, and Mauricio~D Sacchi.
\newblock Tensor completion based on nuclear norm minimization for 5d seismic
  data reconstruction.
\newblock {\em Geophysics}, 78(6):V273--V284, 2013.

\bibitem{li2017low}
Xutao Li, Yunming Ye, and Xiaofei Xu.
\newblock Low-rank tensor completion with total variation for visual data
  inpainting.
\newblock In {\em Thirty-First AAAI Conference on Artificial Intelligence},
  2017.

\bibitem{liu2012tensor}
Ji~Liu, Przemyslaw Musialski, Peter Wonka, and Jieping Ye.
\newblock Tensor completion for estimating missing values in visual data.
\newblock {\em IEEE transactions on pattern analysis and machine intelligence},
  35(1):208--220, 2012.

\bibitem{moitra2018algorithmic}
Ankur Moitra.
\newblock {\em Algorithmic aspects of machine learning}.
\newblock Cambridge University Press, 2018.

\bibitem{spectral}
Andrea Montanari and Nike Sun.
\newblock Spectral algorithms for tensor completion.
\newblock {\em Communications on Pure and Applied Mathematics},
  71(11):2381--2425, 2018.

\bibitem{convexopt}
Yurii Nesterov.
\newblock Introductory lectures on convex programming volume i: Basic course.
\newblock 1998.

\bibitem{ng2017adaptive}
Michael Kwok-Po Ng, Qiangqiang Yuan, Li~Yan, and Jing Sun.
\newblock An adaptive weighted tensor completion method for the recovery of
  remote sensing images with missing data.
\newblock {\em IEEE Transactions on Geoscience and Remote Sensing},
  55(6):3367--3381, 2017.

\bibitem{potechin2017exact}
Aaron Potechin and David Steurer.
\newblock Exact tensor completion with sum-of-squares.
\newblock {\em arXiv preprint arXiv:1702.06237}, 2017.

\bibitem{tan2016short}
Huachun Tan, Yuankai Wu, Bin Shen, Peter~J Jin, and Bin Ran.
\newblock Short-term traffic prediction based on dynamic tensor completion.
\newblock {\em IEEE Transactions on Intelligent Transportation Systems},
  17(8):2123--2133, 2016.

\bibitem{trickett2013interpolation}
Stewart Trickett, Lynn Burroughs, Andrew Milton, et~al.
\newblock Interpolation using hankel tensor completion.
\newblock In {\em 2013 SEG Annual Meeting}. Society of Exploration
  Geophysicists, 2013.

\bibitem{matrixconcentration}
Joel~A Tropp.
\newblock An introduction to matrix concentration inequalities.
\newblock {\em arXiv preprint arXiv:1501.01571}, 2015.

\bibitem{wang2015rubik}
Yichen Wang, Robert Chen, Joydeep Ghosh, Joshua~C Denny, Abel Kho, You Chen,
  Bradley~A Malin, and Jimeng Sun.
\newblock Rubik: Knowledge guided tensor factorization and completion for
  health data analytics.
\newblock In {\em Proceedings of the 21th ACM SIGKDD International Conference
  on Knowledge Discovery and Data Mining}, pages 1265--1274, 2015.

\bibitem{xia2019polynomial}
Dong Xia and Ming Yuan.
\newblock On polynomial time methods for exact low-rank tensor completion.
\newblock {\em Foundations of Computational Mathematics}, 19(6):1265--1313,
  2019.

\bibitem{xie2016accurate}
Kun Xie, Lele Wang, Xin Wang, Gaogang Xie, Jigang Wen, and Guangxing Zhang.
\newblock Accurate recovery of internet traffic data: A tensor completion
  approach.
\newblock In {\em IEEE INFOCOM 2016-The 35th Annual IEEE International
  Conference on Computer Communications}, pages 1--9. IEEE, 2016.

\bibitem{yuan2016tensor}
Ming Yuan and Cun-Hui Zhang.
\newblock On tensor completion via nuclear norm minimization.
\newblock {\em Foundations of Computational Mathematics}, 16(4):1031--1068,
  2016.

\bibitem{zhu2018fairness}
Ziwei Zhu, Xia Hu, and James Caverlee.
\newblock Fairness-aware tensor-based recommendation.
\newblock In {\em Proceedings of the 27th ACM International Conference on
  Information and Knowledge Management}, pages 1153--1162, 2018.

\end{thebibliography}

\newpage
\appendix
\begin{center}
\Large{\textbf{Appendix}}
\end{center}
\section{Concentration Inequalities}

\begin{claim}\label{claim:modifiedchernoff}
Say we have real numbers $-\gamma \leq x_1, \dots , x_n \leq \gamma$.  Consider the sum
\[
X = \epsilon_1 x_1 + \dots + \epsilon_n x_n
\]
where the $\epsilon_i$ are independent random variables that are equal to $1-p$ with probability $p$ and equal to $-p$ with probability $1 - p$.  Assume $p \geq \frac{1}{n}$.  Then for any $t \geq 1$,
\[
\Pr\left[|X| \geq \sqrt{pn}\gamma t \right] \leq 2e^{-t/4}
\]

\end{claim}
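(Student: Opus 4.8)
The plan is to prove a Bernstein-type bound for the weighted sum $X = \sum_{i} \epsilon_i x_i$, where the $\epsilon_i$ are centered Bernoulli-like variables ($\E[\epsilon_i] = 0$, taking value $1-p$ with probability $p$ and $-p$ with probability $1-p$). First I would record the basic moment facts: $\E[\epsilon_i] = 0$ and $\mathrm{Var}(\epsilon_i) = p(1-p) \le p$, and $|\epsilon_i| \le \max(p, 1-p) \le 1$, hence $|\epsilon_i x_i| \le \gamma$ almost surely. The total variance is thus $\sigma^2 := \sum_i \mathrm{Var}(\epsilon_i x_i) \le p \sum_i x_i^2 \le p n \gamma^2$. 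These are exactly the ingredients Bernstein's inequality wants.

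Next I would apply Bernstein's inequality in the form
\[
\Pr\left[|X| \ge s\right] \le 2\exp\left(-\frac{s^2/2}{\sigma^2 + \gamma s/3}\right),
\]
valid since the summands are independent, mean zero, and bounded by $\gamma$ in absolute value. Setting $s = \sqrt{pn}\,\gamma t$ with $t \ge 1$, the denominator becomes $\sigma^2 + \gamma s/3 \le pn\gamma^2 + \sqrt{pn}\,\gamma^2 t/3$. Using the hypothesis $p \ge 1/n$, we have $pn \ge 1$, so $\sqrt{pn} \le pn$, giving $\sigma^2 + \gamma s /3 \le pn\gamma^2(1 + t/3) \le pn\gamma^2 \cdot \tfrac{4t}{3}$ (using $t \ge 1$ so $1 \le t$ and hence $1 + t/3 \le 4t/3$). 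Therefore
\[
\frac{s^2/2}{\sigma^2 + \gamma s/3} \ge \frac{pn\gamma^2 t^2 / 2}{pn\gamma^2 \cdot 4t/3} = \frac{3t}{8} \ge \frac{t}{4},
\]
which yields $\Pr[|X| \ge \sqrt{pn}\,\gamma t] \le 2e^{-t/4}$, as claimed.

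Alternatively, if one prefers not to invoke Bernstein as a black box, I would bound the moment generating function directly: for $|\lambda| \le c/\gamma$ with a suitable constant, $\E[e^{\lambda \epsilon_i x_i}] \le \exp(C\lambda^2 \mathrm{Var}(\epsilon_i x_i))$ using the standard estimate $e^u \le 1 + u + u^2$ for $|u|$ bounded, then multiply over $i$, optimize $\lambda$, and use $p \ge 1/n$ to clean up the constants. Either route is routine; the only place care is needed is in handling the regime split in the optimization of $\lambda$ (sub-Gaussian versus sub-exponential tail), but the hypothesis $t \ge 1$ together with $p \ge 1/n$ ensures we stay comfortably in the range where the stated clean bound holds. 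I do not expect a genuine obstacle here — the main "work" is simply choosing constants so that the final exponent is at least $t/4$.
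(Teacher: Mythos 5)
Your proposal is correct. The paper proves the claim from scratch via the exponential-moment method: it bounds $\E[e^{\beta \epsilon_i x_i}] \le e^{p\beta^2 x_i^2}$ using $e^u \le 1+u+u^2$ on $[-1,1]$, sets $\beta = \tfrac{1}{2\sqrt{pn}\,\gamma}$ (legal precisely because $p \ge 1/n$ gives $\sqrt{pn} \ge 1$), and applies Markov to $e^{\pm\beta X}$, obtaining $2e^{1/4 - t/2} \le 2e^{-t/4}$ for $t \ge 1$ --- this is exactly the ``alternative'' you sketch in your last paragraph. Your primary route instead invokes the two-sided Bernstein inequality as a black box, and the arithmetic checks out: the summands are independent, mean zero, bounded by $\gamma$, with total variance at most $pn\gamma^2$; the step $\sqrt{pn} \le pn$ (from $p \ge 1/n$) and $1 + t/3 \le 4t/3$ (from $t \ge 1$) give exponent at least $3t/8 \ge t/4$. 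The difference is cosmetic rather than structural --- both are Chernoff-type MGF arguments using the same three ingredients (variance $pn\gamma^2$, boundedness $\gamma$, and the hypotheses $p \ge 1/n$, $t \ge 1$). Citing Bernstein buys brevity and avoids re-deriving the MGF bound; the paper's computation buys self-containedness and explicit constants without appealing to an external inequality. Either version is a valid proof of the claim.
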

\begin{proof}
Let $\beta$ be a constant with $0 \leq \beta \leq \gamma^{-1}$.  Using the fact that $e^x \leq 1 + x + x^2$ between $-1$ and $1$, we have
\[
\E[e^{\beta x_i}] = pe^{(1-p)\beta x_i} + (1-p)e^{-p \beta x_i} \leq 1 + p(1-p)\beta^2 x_i^2 \leq e^{p\beta^2 x_i^2}
\]
Thus
\[
\E[e^{\beta X}] \leq e^{p\beta^2 (x_1^2 + \dots + x_n^2)}
\]
Similarly
\[
\E[e^{-\beta X}] \leq e^{p\beta^2 (x_1^2 + \dots + x_n^2)}
\]
Now set $ \beta = \frac{1}{2\sqrt{pn} \gamma}$.  Note that this is a valid assignment because we assumed $p \geq \frac{1}{n}$.  Thus
\[
\Pr\left[|X| \geq \sqrt{pn}\gamma t\right] \leq 2e^{p\beta^2 n\gamma^2 - \beta\sqrt{pn}\gamma t} \leq 2e^{-t/4}
\]

\end{proof}

\begin{claim}\label{claim:matrix-chernoff}[Matrix Chernoff (see \cite{matrixconcentration})]
Consider a finite sequence $\{X_k\}$ of independent self-adjoint $n \times n$ matrices.  Assume that for all $k$, $X_k$ is positive semidefinite and its largest eigenvalue is at most $R$.  Let $\mu_{\min} = \lambda_{\min}\left(\sum_{k}\E[X_k]\right)$ be the smallest eigenvalue of the expected sum.  Then
\[
\Pr \left[ \lambda_{\min}\left(\sum_{k}X_k\right) \leq (1 - \delta)\mu_{\min} \right] \leq ne^{-\frac{\delta^2 \mu_{\min}}{2R}}
\]
\end{claim}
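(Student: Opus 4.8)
The plan is to follow the matrix Laplace transform method of Ahlswede--Winter and Tropp. Write $Y=\sum_{k}X_k$ and fix a parameter $\theta>0$. Since $\lambda_{\min}(Y)\le (1-\delta)\mu_{\min}$ is equivalent to $\lambda_{\max}(-Y)\ge -(1-\delta)\mu_{\min}$, applying Markov's inequality to the nonnegative scalar random variable $e^{\theta\lambda_{\max}(-Y)}=\lambda_{\max}\!\big(e^{-\theta Y}\big)$ gives
\[
\Pr\big[\lambda_{\min}(Y)\le (1-\delta)\mu_{\min}\big]\ \le\ e^{\theta(1-\delta)\mu_{\min}}\,\E\,\lambda_{\max}\!\big(e^{-\theta Y}\big)\ \le\ e^{\theta(1-\delta)\mu_{\min}}\,\E\,\Tr\, e^{-\theta Y},
\]
where the last inequality uses that $e^{-\theta Y}\succeq 0$, so its largest eigenvalue is bounded by its trace. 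Everything then reduces to controlling $\E\,\Tr e^{-\theta Y}$ and optimizing over $\theta$.

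The one genuinely nontrivial ingredient — and the step I expect to be the main obstacle — is the subadditivity of the matrix cumulant generating function,
\[
\E\,\Tr\exp\!\Big(-\theta\sum_{k} X_k\Big)\ \le\ \Tr\exp\!\Big(\sum_{k}\log \E\, e^{-\theta X_k}\Big),
\]
which is Tropp's master tail inequality. Its proof peels off one independent summand at a time and invokes Lieb's concavity theorem at each step; I would take it as a black box, citing \cite{matrixconcentration}. (Using Golden--Thompson instead, as in Ahlswede--Winter, would also work, but costs a worse constant than the $1/(2R)$ appearing in the statement.)

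For the individual terms I would use a scalar-to-matrix transfer argument. On $[0,R]$ the convex function $x\mapsto e^{-\theta x}$ lies below its chord, so $e^{-\theta x}\le 1+f(\theta)x$ with $f(\theta):=\tfrac{e^{-\theta R}-1}{R}<0$; since the spectrum of each $X_k$ lies in $[0,R]$, the transfer rule gives $e^{-\theta X_k}\preceq I+f(\theta)X_k$, hence $\E\, e^{-\theta X_k}\preceq I+f(\theta)\,\E X_k$. Operator monotonicity of $\log$ together with $\log(I+M)\preceq M$ then yields $\log\E\, e^{-\theta X_k}\preceq f(\theta)\,\E X_k$, so $\sum_k\log\E\, e^{-\theta X_k}\preceq f(\theta)\sum_k \E X_k$. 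Because $A\preceq B$ implies $\Tr e^{A}\le \Tr e^{B}$ (Weyl monotonicity of the eigenvalues), and because every eigenvalue of $f(\theta)\sum_k\E X_k$ is at most $f(\theta)\mu_{\min}$ (as $f(\theta)<0$ and $\lambda_{\min}(\sum_k\E X_k)=\mu_{\min}$), we conclude $\E\,\Tr e^{-\theta Y}\le n\, e^{f(\theta)\mu_{\min}}$.

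Combining the displays gives $\Pr[\lambda_{\min}(Y)\le (1-\delta)\mu_{\min}]\le n\exp\!\big(\mu_{\min}\,[\,\theta(1-\delta)+f(\theta)\,]\big)$. Setting the $\theta$-derivative of the bracket to zero gives $e^{-\theta R}=1-\delta$, i.e. $\theta=-\tfrac1R\ln(1-\delta)>0$ for $\delta\in(0,1)$, and substituting collapses the exponent to $\tfrac{\mu_{\min}}{R}\,\big[-(1-\delta)\ln(1-\delta)-\delta\big]$. The final step is the elementary scalar inequality $\delta+(1-\delta)\ln(1-\delta)\ge \delta^2/2$ on $[0,1)$ (both sides vanish at $0$, and a derivative comparison finishes it), which turns the exponent into $-\delta^2\mu_{\min}/(2R)$ and yields the claimed bound; for $\delta\ge 1$ the inequality is trivial since $Y\succeq 0$ forces the event to be essentially empty.
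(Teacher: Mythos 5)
Your proof is correct. Note that the paper does not prove this claim at all --- it is imported verbatim from the cited reference \cite{matrixconcentration} --- and your argument is essentially Tropp's own: the Laplace-transform/Markov step, the master subadditivity bound via Lieb's concavity (reasonably taken as a black box), the chord bound $e^{-\theta x}\le 1+\frac{e^{-\theta R}-1}{R}x$ transferred to the matrix setting, optimization at $e^{-\theta R}=1-\delta$, and the scalar inequality $\delta+(1-\delta)\ln(1-\delta)\ge\delta^2/2$; all the computations check out (the only tiny nit is that at $\delta=1$ the event need not be empty, but the bound still follows by letting $\theta\to\infty$ in your display).
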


\begin{claim}\label{claim:matrix-chernoff2}[Matrix Chernoff (see \cite{matrixconcentration})]
Consider a finite sequence $\{X_k\}$ of independent self-adjoint $n \times n$ matrices.  Assume that for all $k$, $X_k$ is positive semidefinite and its largest eigenvalue is at most $R$.  Let $\mu_{\max} = \lambda_{\max}\left(\sum_{k}\E[X_k]\right)$ be the largest eigenvalue of the expected sum.  Then
\[
\Pr \left[ \lambda_{\max}\left(\sum_{k}X_k\right) \leq (1 + \delta)\mu_{\max} \right] \leq ne^{-\frac{\delta^2 \mu_{\max}}{2R}}
\]
\end{claim}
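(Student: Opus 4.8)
The plan is to prove this exactly as one proves Claim~\ref{claim:matrix-chernoff}, only tracking the upper tail rather than the lower one; the statement is the $\lambda_{\max}$-analogue and is a standard consequence of the matrix Laplace transform method, so I would present the argument for completeness. (As written, the tail event should be read as $\lambda_{\max}\big(\sum_k X_k\big) \ge (1+\delta)\mu_{\max}$.) The first step is a Chernoff--Markov bound: for any $\theta > 0$, since $\exp\big(\theta\lambda_{\max}(\sum_k X_k)\big) = \lambda_{\max}\big(\exp(\theta \sum_k X_k)\big) \le \Tr\exp\big(\theta\sum_k X_k\big)$ (using that $\theta\mapsto e^{\theta\cdot}$ is monotone and that the matrix exponential is positive semidefinite), Markov's inequality gives
\[
\Pr\Big[\lambda_{\max}\big(\textstyle\sum_k X_k\big) \ge (1+\delta)\mu_{\max}\Big] \le e^{-\theta(1+\delta)\mu_{\max}}\,\E\,\Tr\exp\Big(\theta\textstyle\sum_k X_k\Big).
\]

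The second step controls the trace exponential. I would invoke the subadditivity of matrix cumulant generating functions --- a consequence of Lieb's concavity theorem, stated in \cite{matrixconcentration} --- to get $\E\,\Tr\exp\big(\theta\sum_k X_k\big) \le \Tr\exp\big(\sum_k \log \E\, e^{\theta X_k}\big)$, and then bound each summand by an elementary operator-convexity estimate: since $0 \preceq X_k \preceq R\,I$ and the scalar inequality $e^{\theta x} \le 1 + (e^{\theta R}-1)x/R$ holds on $[0,R]$, the transfer rule yields $\E\, e^{\theta X_k} \preceq I + \tfrac{e^{\theta R}-1}{R}\E\, X_k \preceq \exp\!\big(\tfrac{e^{\theta R}-1}{R}\E\, X_k\big)$, hence $\log\E\, e^{\theta X_k} \preceq \tfrac{e^{\theta R}-1}{R}\E\, X_k$. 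Summing over $k$, then using monotonicity of $\Tr\exp(\cdot)$ together with $0 \preceq \sum_k\E X_k$ and $\lambda_{\max}\big(\sum_k \E X_k\big) = \mu_{\max}$, gives $\Tr\exp\big(\sum_k\log\E e^{\theta X_k}\big) \le n\exp\!\big(\tfrac{e^{\theta R}-1}{R}\mu_{\max}\big)$. I expect the only genuinely nontrivial ingredient here to be the subadditivity inequality (equivalently, Lieb's theorem), which I would cite rather than reprove; everything else is routine.

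Combining the two steps gives $\Pr[\cdots] \le n\exp\!\big(\tfrac{e^{\theta R}-1}{R}\mu_{\max} - \theta(1+\delta)\mu_{\max}\big)$ for every $\theta > 0$. Optimizing by taking $\theta = \tfrac1R\log(1+\delta)$ produces $n\big(e^{\delta}/(1+\delta)^{1+\delta}\big)^{\mu_{\max}/R}$, and the classical scalar Chernoff estimate for this quantity then yields the tail bound as stated; since that last step is identical to the scalar case, I would not belabor it.
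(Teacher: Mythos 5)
The paper does not prove this claim at all --- it is imported from \cite{matrixconcentration} --- and your matrix-Laplace-transform argument (Markov on the trace exponential, subadditivity of matrix cumulant generating functions via Lieb's concavity theorem, the transfer-rule bound $\log\E e^{\theta X_k}\preceq \tfrac{e^{\theta R}-1}{R}\E X_k$, then optimize $\theta=\tfrac1R\log(1+\delta)$) is exactly the proof in the cited source, and you were right to read the event as $\lambda_{\max}\bigl(\sum_k X_k\bigr)\ge(1+\delta)\mu_{\max}$. Everything up to and including the bound $n\bigl(e^{\delta}/(1+\delta)^{1+\delta}\bigr)^{\mu_{\max}/R}$ is sound.

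The one step that fails is the last sentence, which you chose not to belabor: the classical scalar estimate does \emph{not} turn this quantity into $n e^{-\delta^2\mu_{\max}/(2R)}$. For the upper tail one has $(1+\delta)\log(1+\delta)-\delta\le \delta^2/2$ for all $\delta\ge 0$ (check that $\frac{d}{d\delta}\bigl[(1+\delta)\log(1+\delta)-\delta-\delta^2/2\bigr]=\log(1+\delta)-\delta\le 0$), so in fact $e^{\delta}/(1+\delta)^{1+\delta}\ge e^{-\delta^2/2}$ and the inequality you need points the wrong way; the denominator $2$ is the lower-tail constant of Claim~\ref{claim:matrix-chernoff}. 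What the Chernoff quantity does give is the exponent $\delta^2/\bigl((2+\delta)R\bigr)$, or $\delta^2/(3R)$ when $\delta\le 1$, and indeed the bound with $2R$ as literally stated cannot be proved because it is false for large $\delta$ (already in the scalar Poisson-like regime the true tail decays like $e^{-\delta\log\delta\,\mu_{\max}/R}$, far slower than $e^{-\delta^2\mu_{\max}/(2R)}$). So the blemish is really in the paper's transcription of Tropp's bound rather than in your strategy; the honest conclusion of your argument is the $\delta^2/((2+\delta)R)$ (or $\delta^2/(3R)$) version, which is what the paper actually needs --- its only use is in Lemma~\ref{lem:smoothness} with $\delta=1$, where the constant in the exponent is irrelevant given the $\log^2 n$ slack in $p$. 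You should either state the claim with the corrected exponent or restrict to $\delta\le 1$ with the constant $3$; as written, your final step asserts an inequality that is reversed.
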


\section{Initialization}\label{sec:initialization}
The main purpose of this section is to prove that the {\sc Initialization} algorithm obtains good initial estimates for the subspaces.  In particular we will prove Theorem \ref{thm:init}. First note that
\begin{itemize}
    \item $||U_x(T)||_{\op} \geq ||U_x(T) (y_1 \otimes z_1)||_2 = \norm{\sum_{i=1}^r \sigma_i((y_i \cdot y_1) (z_i \cdot z_1)) x_i}_2 \geq  \sigma_1 c$
    \item The largest norm of a row of $U_x(T)$ is at most $r\sigma_1\sqrt{\frac{\mu r}{n}}$
    \item The largest norm of a column of $U_x(T)$ is at most $\sigma_1\frac{\mu r^2}{n}$
    \item The largest entry of $U_x(T)$ is at most $r\sigma_1 \left(\frac{\mu r}{n}\right)^{3/2}$
\end{itemize}
Thus $U_x(T)$ is $(\lambda, 1, \rho)$-incoherent for $\lambda = \frac{\mu r^3}{c^2}, \rho = \frac{\mu^2 r^4}{c^2}$ according to Assumption 3 in \cite{spectral}.  Below we use $U_x$ to denote $U_x(T)$.

The key ingredient in the proof of Theorem \ref{thm:init} is the following result from \cite{spectral}, which says that the matrix $\wh{B}$ is a good approximation for $U_xU_x^T$.  
\begin{lemma}[Restatement of Corollary $3$ in \cite{spectral}]\label{lem:spectral} When the {\sc Initialization} algorithm is run with \[
p_1 = \left(\frac{2\mu r \log n}{c^2} \cdot \frac{\sigma_1^2}{\sigma_r^2} \right)^{10}\frac{1}{n^{3/2}}, 
\] we have
\[
||\wh{B} - U_xU_x^T||_{\op} \leq \frac{10^3\lambda \rho (\log n)^4}{p_1^2 n^3}||U_x||_{\op}^2
\]
with probability at least $1 - \left(\frac{1}{n} \right)^{25}$.

\end{lemma}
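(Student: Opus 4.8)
The plan is to recognize this bound as a direct instance of Corollary~3 of \cite{spectral}, whose hypotheses we have already verified in the displayed computation preceding the statement: there we checked that $U_x(T)$ is $(\lambda,1,\rho)$-incoherent with $\lambda=\mu r^3/c^2$ and $\rho=\mu^2 r^4/c^2$ in the sense of Assumption~3 of \cite{spectral}. So the real work is (a) to make the correspondence between the estimator $\widehat B$ used here and the one analyzed there precise, and (b) to translate their normalization of the sampling rate and of the incoherence parameters into ours. The first preparatory step is to check that $\widehat B$ is an unbiased estimator of $U_xU_x^T$. Writing $U = U_x \circ \Delta$, where $\Delta$ is the $0/1$ matrix of observation indicators (each entry of $\Delta$ an independent $\mathrm{Bernoulli}(p_1)$), a diagonal entry $(UU^T)_{kk}=\sum_\ell \Delta_{k\ell}(U_x)_{k\ell}^2$ has expectation $p_1(U_xU_x^T)_{kk}$, while an off-diagonal entry $(UU^T)_{ab}=\sum_\ell \Delta_{a\ell}\Delta_{b\ell}(U_x)_{a\ell}(U_x)_{b\ell}$ has expectation $p_1^2(U_xU_x^T)_{ab}$; the two distinct rescalings in the definition of $\widehat B$ are exactly what corrects for this mismatch, so $\mathbb E[\widehat B]=U_xU_x^T$.

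Next I would split the deviation $\widehat B - U_xU_x^T = \Pi(\widehat B - U_xU_x^T) + \Pi_\perp(\widehat B - U_xU_x^T)$ and bound the operator norm of each piece. The diagonal part is a diagonal matrix whose $k$th entry is $\tfrac1{p_1}\sum_\ell(\Delta_{k\ell}-p_1)(U_x)_{k\ell}^2$, a sum of independent bounded mean-zero scalars; since the entries of $U_x$ are at most $r\sigma_1(\mu r/n)^{3/2}$ in magnitude and each row has norm at most $r\sigma_1\sqrt{\mu r/n}$, a Bernstein inequality together with a union bound over $k$ controls $\norm{\Pi(\widehat B - U_xU_x^T)}_{\op}$ comfortably below the target. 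The off-diagonal part is the crux: $\tfrac1{p_1^2}\Pi_\perp(UU^T - p_1^2 U_xU_x^T)$ is a centered, very sparse random matrix multiplied by the large prefactor $1/p_1^2$, which is of order $n^3$ up to polynomial factors, and plain matrix Bernstein loses too much in this sparsity regime. One controls its spectral norm by a trace/moment (or $\epsilon$-net) argument in which the incoherence parameters $\lambda$ and $\rho$ enter precisely when bounding the contribution of each column of $U$ and of each surviving entry of $U_x$; the outcome is a bound of the form $\tfrac{10^3\lambda\rho(\log n)^4}{p_1^2 n^3}\norm{U_x}_{\op}^2$ with failure probability $n^{-\Omega(1)}$, which is the stated conclusion.

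The main obstacle is exactly this last off-diagonal spectral estimate in the sparse regime $p_1 \asymp n^{-3/2}\cdot\mathrm{poly}$: isolating the $\Pi_\perp$ component and showing that it does not get amplified by the large factor $1/p_1^2$ is where all the technical difficulty lies, and this is precisely the content of Corollary~3 of \cite{spectral}. Since we have already established its incoherence hypothesis for $U_x(T)$ with the explicit parameters above, I would simply invoke that corollary to conclude, after one careful pass reconciling their parametrization with the quantities $p_1$, $\lambda$, $\rho$, $\norm{U_x}_{\op}$ appearing in the statement here.
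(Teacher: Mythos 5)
Your proposal matches the paper's treatment: the paper does not reprove this bound but simply verifies that $U_x(T)$ is $(\lambda,1,\rho)$-incoherent with $\lambda = \mu r^3/c^2$, $\rho = \mu^2 r^4/c^2$ in the sense of Assumption~3 of \cite{spectral} (via the operator-norm, row, column, and entrywise bounds stated just before the lemma) and then invokes Corollary~3 of \cite{spectral} directly, which is exactly your concluding step. Your additional sketch of the unbiasedness of $\wh{B}$ and of the diagonal/off-diagonal split is a reasonable account of what happens inside the cited result, but the argument as used here is the same citation-after-hypothesis-check as in the paper.
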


Let $D$ be the $r \times r$  diagonal matrix whose entries are the top $r$ signed eigenvalues of $\wh{B}$.  We first note that $\wh{B}$ can be approximated well by its top $r$ principal components.
\begin{claim}\label{claim:lowrank}
\[
||XDX^T - \wh{B}||_{\op} \leq ||\wh{B} - U_xU_x^T||_{\op}
\]
\end{claim}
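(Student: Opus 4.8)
The plan is to recognize $XDX^T$ as the best rank-$r$ approximation of $\wh{B}$ in operator norm, and then observe that $U_xU_x^T$ is a competing rank-$r$ matrix, so it cannot do better.

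First I would check that $\wh{B}$ is symmetric: it is built from $UU^T$ (which is symmetric) by applying $\Pi$ and $\Pi_\perp$, each of which keeps the matrix symmetric, and then taking a linear combination. Hence $\wh{B}$ has a spectral decomposition with real eigenvalues, and its singular values are the absolute values of its eigenvalues. By construction in the {\sc Initialization} algorithm, the columns of $X$ are the eigenvectors corresponding to the $r$ eigenvalues of largest absolute value and $D$ is the diagonal matrix of those (signed) eigenvalues, so $XDX^T = \sum_{i=1}^r \lambda_i v_i v_i^T$ is exactly the truncation of the spectral decomposition of $\wh{B}$ to its top $r$ eigenvalues in magnitude. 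By the Eckart--Young--Mirsky theorem (in its symmetric / operator-norm form), this truncation is a minimizer of $\norm{M - \wh{B}}_{\op}$ over all matrices $M$ of rank at most $r$.

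Next I would note that $U_xU_x^T$ has rank at most $r$: indeed $U_x = U_x(T)$ has rank $r$ (by Claim \ref{claim:nondegenerate}, or trivially rank at most $r$ from the factorization $U_x = M_x D N$), so $U_xU_x^T$ has rank at most $r$ as well. Therefore $U_xU_x^T$ is one of the candidate matrices in the optimization that $XDX^T$ solves, and
\[
\norm{XDX^T - \wh{B}}_{\op} \leq \norm{U_xU_x^T - \wh{B}}_{\op},
\]
which is exactly the claim.

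There is essentially no hard step here; the only things to be careful about are (i) confirming $\wh{B}$ is symmetric so that the relevant optimality statement is about truncating eigenvalues by magnitude (with signs retained in $D$), rather than only about PSD approximations, and (ii) making sure the columns of $X$ and the diagonal entries of $D$ are paired consistently, which is how they are produced by the algorithm. If one prefers to avoid invoking Eckart--Young--Mirsky as a black box, the same conclusion follows from a short direct argument using Weyl's inequality: for any rank-$\le r$ matrix $M$, $\norm{M-\wh{B}}_{\op} \geq \sigma_{r+1}(\wh{B}) = \norm{XDX^T - \wh{B}}_{\op}$, since the $(r+1)$-st singular value of $M-\wh B$ is at least $\sigma_{r+1}(\wh B)$ (as $M$ can shift at most $r$ singular values) and $M-\wh B$ has an $(r+1)$-st singular value only if... — but the cleanest route is simply to cite optimality of spectral truncation.
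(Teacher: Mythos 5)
Your proof is correct and is essentially the paper's argument: both reduce the claim to the fact that $\norm{XDX^T - \wh{B}}_{\op}$ equals the $(r+1)$st singular value of $\wh{B}$, while any rank-$\le r$ competitor such as $U_xU_x^T$ must incur at least that much error. The only difference is that you cite Eckart--Young--Mirsky, whereas the paper proves the needed lower bound inline by picking a unit vector in the top-$(r+1)$ singular subspace of $\wh{B}$ lying in the kernel of $U_x^T$.
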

\begin{proof}
Let $\sigma$ be the $r+1$\ts{st} largest singular value of $\wh{B}$.  Note $||XDX^T - \wh{B}||_{\op} = \sigma$.  Let $Y$ be the $r+1$-dimensional space spanned by the top $r+1$ singular vectors of $\wh{B}$.  Note that there is some unit vector $y \in Y$ such that $U_x^Ty = 0$ (since $U_x^T$ has rank $r$) so 
\[
||\wh{B} - U_xU_x^T||_{\op} \geq ||(\wh{B} - U_xU_x^T)y||_2 = ||\wh{B}y||_2 \geq \sigma
\]
\end{proof}

Let $\Pi$ be the $n \times n$ diagonal matrix with $0$ on diagonal entries corresponding to rows of $X$ with norm at least $\tau \sqrt{\frac{r}{n}}$ and $1$ on other diagonal entries.  Note $X_0 = \Pi X$.  

Lemma \ref{lem:spectral} and Claim \ref{claim:lowrank} imply that $XDX^T$ is a good approximation for $U_xU_x^T$.  To analyze the {\sc Initialization} algorithm, we will need to rewrite these bounds using $X_0$ in place of $X$.
\begin{claim}\label{claim:rowmod}
\[
||X_0  DX_0^T - U_xU_x^T||_{\op} \leq 0.1\left(\frac{c\sigma_r}{\sigma_1}\right)^{10}\sigma_1^2
\]
with probability at least $1 - \frac{1}{n^{25}}$
\end{claim}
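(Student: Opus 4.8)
The plan is to bound $\|X_0 D X_0^T - U_xU_x^T\|_{\op}$ by splitting it into two pieces via the triangle inequality: the difference $\|X D X^T - U_xU_x^T\|_{\op}$, which is already controlled by Lemma~\ref{lem:spectral} together with Claim~\ref{claim:lowrank}, and the difference $\|X_0 D X_0^T - X D X^T\|_{\op}$, which measures the damage done by zeroing out the large rows. For the first piece, combining Claim~\ref{claim:lowrank} with Lemma~\ref{lem:spectral} gives $\|XDX^T - U_xU_x^T\|_{\op} \leq 2\|\wh B - U_xU_x^T\|_{\op} \leq \frac{2\cdot 10^3 \lambda\rho(\log n)^4}{p_1^2 n^3}\|U_x\|_{\op}^2$; substituting $\lambda = \mu r^3/c^2$, $\rho = \mu^2 r^4/c^2$, the chosen value of $p_1$, and $\|U_x\|_{\op}^2 \leq r^2\sigma_1^2$ (or the crude bound from the row/column norms), this quantity is at most $\frac{1}{20}\left(\frac{c\sigma_r}{\sigma_1}\right)^{10}\sigma_1^2$ after checking the exponents line up — the power of $300$ or so built into $p_1$ is what makes this work, and the factor $1/n^3$ in the denominator combined with $p_1^2 \sim n^{-3}$ means the $n$'s cancel, leaving only a favorable power of the $\mu, r, c, \sigma_1/\sigma_r$ parameters.

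For the second piece, I would argue that very few rows actually get zeroed out, and those that do were small anyway. The key input here is that the true subspace $V_x$ is $\mu$-incoherent, so every row of $M_x$ (an orthonormal basis of $V_x$) has norm at most $\sqrt{\mu r/n}$, hence every row of the projector onto $V_x$ has norm at most $\sqrt{\mu r/n}$, and every row of $U_xU_x^T$ correspondingly has controlled norm. Since $XDX^T$ is close to $U_xU_x^T$ in operator norm, the rows of $X$ corresponding to a genuinely large coordinate of the true subspace cannot be too large; more precisely, if a row of $X$ has norm exceeding $\tau\sqrt{r/n}$ with $\tau$ chosen as in the algorithm (a large power of $\mu r \sigma_1^2/(c^2\sigma_r^2)$), then the first-piece bound forces the corresponding row of $U_x$ to contribute a bounded amount, so the total Frobenius mass removed by the zeroing step is small. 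Concretely, I expect to show $\|X_0 D X_0^T - X D X^T\|_{\op} \leq \|D\|_{\op}\cdot\big(\|(I-\Pi)X\|_{\op}\|X\|_{\op} + \|X\|_{\op}\|(I-\Pi)X\|_{\op}\big)$ and then bound $\|(I-\Pi)X\|_{\op}$ by the total Frobenius norm of the deleted rows, each of which must be $\lesssim \tau\sqrt{r/n}$ wait — that's the threshold, not a smallness bound — so the real argument is a counting argument: at most $O(n/\tau^2)$ rows can exceed the threshold (since $\|X\|_F^2 = r$), so the deleted Frobenius mass is at most $r/\tau \cdot$ (something), and with $\tau$ polynomially large this beats $\left(\frac{c\sigma_r}{\sigma_1}\right)^{10}$.

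The main obstacle I anticipate is the bookkeeping of exponents: one has to verify that the product of all the polynomial factors in $\lambda, \rho, 1/p_1^2, \|U_x\|_{\op}^2$ and in the row-counting/threshold argument for $\tau$ simultaneously stays below the target $0.1\left(\frac{c\sigma_r}{\sigma_1}\right)^{10}\sigma_1^2$, and that the choices of $p_1$ (power $\sim 10$ after the $1/n^{3/2}$) and $\tau$ (power $5$) are large enough — this is why $p_1$ carries such a large exponent. A secondary subtlety is that zeroing rows of $X$ destroys orthonormality of the columns, so $X_0$ is no longer an isometry; but since we only need an operator-norm bound on $X_0 D X_0^T$ and not a statement about $X_0$ being a basis, this is harmless — we just carry $\|X_0\|_{\op} \leq \|X\|_{\op} \leq 1$ (or $\sqrt 2$ after a crude bound) through the estimates. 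The probability bound $1 - n^{-25}$ is inherited directly from Lemma~\ref{lem:spectral}, so no new union bound is needed here.
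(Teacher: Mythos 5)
Your first piece is fine and matches the paper: $\norm{XDX^T - U_xU_x^T}_{\op} \leq 2\norm{\wh{B} - U_xU_x^T}_{\op}$ via Claim \ref{claim:lowrank} plus Lemma \ref{lem:spectral}, and the exponent bookkeeping with $p_1$ indeed makes this term far below the target. The gap is in your second piece, $\norm{X_0DX_0^T - XDX^T}_{\op}$. Your proposed mechanism is to bound this by $\norm{D}_{\op}\cdot\norm{(I-\Pi)X}_{\op}\cdot O(1)$ and then control $\norm{(I-\Pi)X}$ by a counting argument. But the counting argument only bounds the \emph{number} of deleted rows (at most $n/\tau^2$, since $\norm{X}_F^2 = r$); it says nothing about their total mass, and the only thing you know about a deleted row is that it is \emph{large} (norm at least $\tau\sqrt{r/n}$) --- which is exactly the wrong direction, as you yourself noticed mid-argument. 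The best per-row bound available from the known facts is $\norm{e_j^TX}_2^2 \leq \bigl(e_j^TU_xU_x^Te_j + \norm{XDX^T-U_xU_x^T}_{\op}\bigr)/\lambda_{\min}(D)$, and the operator-norm error from Lemma \ref{lem:spectral} is a fixed small multiple of $\sigma_1^2$ (it does not decay in $n$), so a deleted row can have norm of constant order. Summing over up to $n/\tau^2$ such rows gives $\norm{(I-\Pi)X}_F^2 \lesssim n\cdot\sigma_1^2/(\tau^2 c^6\sigma_r^2)$ up to parameter factors, which grows with $n$ and is only trivially capped at $r$; multiplied by $\norm{D}_{\op} \approx \sigma_1^2 r^2$ this does not come close to $0.1(c\sigma_r/\sigma_1)^{10}\sigma_1^2$. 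So the step "the deleted Frobenius mass of $X$ is small" has no proof, and the route through $\norm{(I-\Pi)X}$ fails.

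The paper sidesteps this entirely by choosing a different decomposition: it writes $X_0DX_0^T - U_xU_x^T = \Pi\left(XDX^T - U_xU_x^T\right)\Pi + \left(\Pi U_xU_x^T\Pi - U_xU_x^T\right)$, where $\Pi$ is the $0/1$ diagonal matrix implementing the zeroing. The first term is at most $\norm{XDX^T - U_xU_x^T}_{\op}$ because conjugation by a coordinate projection cannot increase operator norm --- no bound on the deleted rows of $X$ is ever needed. The second term is supported on the at most $2n^2/\tau^2$ entries lying in deleted rows or columns, and each entry of $U_xU_x^T$ is at most $\mu^3 r^5\sigma_1^2/n$ by incoherence of the \emph{true} subspace, giving Frobenius norm at most roughly $2\mu^3 r^5\sigma_1^2/\tau$; crucially the factors of $n$ cancel, and $\tau$ being a large power of the parameters finishes it. In other words, the zeroing damage must be measured against $U_xU_x^T$, whose rows are provably small, rather than against $X$, whose deleted rows are large by definition. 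If you want to keep your split, it can be repaired by bounding $\norm{X_0DX_0^T - XDX^T}_{\op} \leq 2\norm{(I-\Pi)XDX^T}_{\op} \leq 2\norm{XDX^T - U_xU_x^T}_{\op} + 2\norm{(I-\Pi)U_xU_x^T}_F$ and using incoherence on the last term --- but that is precisely the paper's idea, re-routed.
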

\begin{proof}
First note that the squared Frobenius norm of $X$ is $r$ so there are at most $\frac{n}{\tau^2}$ entries of $\Pi$ that are $0$.  Next note 
\[
X_0  DX_0^T - U_xU_x^T =  ((\Pi X)D(\Pi X)^T   - (\Pi U_x)(\Pi U_x)^T) +  (\Pi U_x)(\Pi U_x)^T - U_xU_x^T
\]
Note that $(\Pi U_x)(\Pi U_x)^T - U_xU_x^T$ is $0$ in all but at most $\frac{2n^2}{\tau^2}$ entries.  Furthermore, all entries of $U_xU_x^T$ are at most $\frac{\mu^3\sigma_1^2r^5}{n}$.  Thus
\[
|| (\Pi U_x)(\Pi U_x)^T - U_xU_x^T||_2 \leq \frac{\mu^3\sigma_1^2r^5}{n} \cdot \frac{2n}{\tau} = \frac{2\mu^3\sigma_1^2r^5}{\tau}
\]
Also
\[
||(\Pi X)D(\Pi X)^T   - (\Pi U_x)(\Pi U_x)^T)||_{\op} \leq ||XDX^T - U_xU_x^T||_{\op} \leq 2||\wh{B} - U_xU_x^T||_{\op} \leq \frac{20^3\lambda \rho (\log n)^4}{p_0^2 n^3}||U_x||_{\op}^2
\]
where we used Claim \ref{claim:lowrank} and Lemma \ref{lem:spectral}.  Combining the previous two equations and noting that $||U_x||_{\op} \leq (\sigma_1r)$
\[
||X_0  DX_0^T - U_xU_x^T||_{\op} \leq  \frac{2\mu^3\sigma_1^2r^5}{\tau} + \frac{20^3\lambda \rho (\log n)^4}{p_0^2 n^3}(\sigma_1r)^2 \leq   0.1\left(\frac{c\sigma_r}{\sigma_1}\right)^{10}\sigma_1^2
\]
\end{proof}

To prove that $V_x^0$, the space spanned by the columns of $X_0$ is incoherent, we will need a bound on the smallest singular value of $X_0$.  In other words, we need to ensure that zeroing out the rows of $X$ whose norm is too large doesn't degenerate the column space. 
\begin{claim}\label{claim:perturbedsvd}
With probability $1 - \frac{1}{n^{25}}$, the $r$\ts{th} singular value of $X_0$ is at least $\frac{c^6}{10r^3}\left(\frac{\sigma_r}{\sigma_1}\right)^2$
\end{claim}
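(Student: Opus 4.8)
The plan is to lower-bound $\sigma_r(X_0)$ by comparing the $r$-th eigenvalue of $X_0 X_0^T$ (which equals $\sigma_r(X_0)^2$) to that of the ``reweighted'' matrix $X_0 D X_0^T$, which Claim~\ref{claim:rowmod} already tells us is spectrally close to $U_xU_x^T$. Throughout I would work on the event of probability at least $1-1/n^{25}$ on which both Lemma~\ref{lem:spectral} and Claim~\ref{claim:rowmod} hold.

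First I would record two ingredients. (i) An \emph{upper} bound on $\|D\|_{\op}$: the diagonal entries of $D$ are among the top $r$ eigenvalues of $\widehat{B}$ in magnitude, so $\|D\|_{\op}\le\|\widehat{B}\|_{\op}\le\|U_xU_x^T\|_{\op}+\|\widehat{B}-U_xU_x^T\|_{\op}$; using $\|U_x\|_{\op}\le r\sigma_1$ and the (tiny) error bound of Lemma~\ref{lem:spectral}, this is at most $2r^2\sigma_1^2$. (ii) A \emph{lower} bound on $\lambda_r(X_0DX_0^T)$: by Claim~\ref{claim:nondegenerate}, $\lambda_r(U_xU_x^T)=\sigma_r(U_x)^2\ge c^6\sigma_r^2$, while by Claim~\ref{claim:rowmod} the perturbation $\|X_0DX_0^T-U_xU_x^T\|_{\op}\le 0.1(c\sigma_r/\sigma_1)^{10}\sigma_1^2$, which is at most $\tfrac12 c^6\sigma_r^2$ since $c\le 1$ and $\sigma_r\le\sigma_1$ (the exponent $10$ provides ample slack); Weyl's inequality then yields $\lambda_r(X_0DX_0^T)\ge\tfrac12 c^6\sigma_r^2$.

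Next I would combine these via positive-semidefinite monotonicity: since $\|D\|_{\op}I-D\succeq 0$ we have $X_0DX_0^T\preceq\|D\|_{\op}X_0X_0^T$, hence $\lambda_r(X_0DX_0^T)\le\|D\|_{\op}\lambda_r(X_0X_0^T)=\|D\|_{\op}\sigma_r(X_0)^2$. Rearranging and plugging in (i) and (ii) gives $\sigma_r(X_0)^2\ge \tfrac12 c^6\sigma_r^2/(2r^2\sigma_1^2)=c^6\sigma_r^2/(4r^2\sigma_1^2)$, i.e. $\sigma_r(X_0)\ge \tfrac{c^3\sigma_r}{2r\sigma_1}$. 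A one-line comparison then shows this is at least the claimed $\tfrac{c^6}{10r^3}(\sigma_r/\sigma_1)^2$, because $c\le 1$, $r\ge 1$ and $\sigma_1\ge\sigma_r$, completing the proof.

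The computation is entirely routine, and I do not expect a genuine obstacle; the only thing to be careful about is the bookkeeping of \emph{which} approximation bound feeds into which estimate --- the bound on $\|D\|_{\op}$ must come from Lemma~\ref{lem:spectral} (a statement about $\widehat{B}$), whereas the lower bound on $\lambda_r$ uses Claim~\ref{claim:rowmod} (a statement about $X_0DX_0^T$, which already accounts for the zeroed-out rows) --- together with confirming that the $(c\sigma_r/\sigma_1)^{10}$ error term is truly negligible against $c^6\sigma_r^2$.
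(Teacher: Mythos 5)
Your proof is correct and follows essentially the same route as the paper: lower-bound $\lambda_r(X_0DX_0^T)$ via Claim~\ref{claim:nondegenerate} and Claim~\ref{claim:rowmod}, upper-bound $\|D\|_{\op}$ by $2r^2\sigma_1^2$ via Lemma~\ref{lem:spectral}, and convert this into a lower bound on $\sigma_r(X_0)$. The only (harmless) difference is the final step: the paper uses the crude bound $\|X_0\|_{\op}\leq r$ to write $\lambda_r(X_0DX_0^T)\leq \sigma_r(X_0)\,\|D\|_{\op}\,\|X_0\|_{\op}$, whereas your PSD-monotonicity step $X_0DX_0^T\preceq \|D\|_{\op}X_0X_0^T$ yields the slightly stronger conclusion $\sigma_r(X_0)\geq \frac{c^3\sigma_r}{2r\sigma_1}$, which of course still implies the stated bound.
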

\begin{proof}
By Claim \ref{claim:nondegenerate}, the $r$\ts{th} singular value of $U_xU_x^T$ is at least $c^6\sigma_r^2$.  Thus by Claim \ref{claim:rowmod}, the $r$\ts{th} singular value of $X_0DX_0^T$ is at least $0.9c^6\sigma_r^2$. 
\\\\
Also note that the operator norm of $U_xU_x^T$ is at most $(\sigma_1 r)^2$ so by Lemma \ref{lem:spectral}, all entries of $D$ are at most $2(\sigma_1 r)^2$.  Also, the largest singular value of $X_0$ is clearly at most $r$.  Thus the smallest singular value of $X_0$ is at least 
\[
\frac{0.9c^6\sigma_r^2}{2(\sigma_1 r)^2 r} \geq \frac{c^6}{10r^3}\left(\frac{\sigma_r}{\sigma_1}\right)^2
\]

\end{proof}

Now we can use the results from Section \ref{sec:general-ineqs} to complete the proof of Theorem \ref{thm:init}.
\begin{corollary}\label{corollary:initialincoherence}
With probability at least $1 - \left(\frac{1}{n} \right)^{25}$, the subspace spanned by the columns of $X_0$ is $\left(\frac{2\mu r}{c^2} \cdot \frac{\sigma_1^2}{\sigma_r^2}\right)^{10}$-incoherent.
\end{corollary}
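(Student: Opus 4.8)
The plan is to read this off almost immediately from Claim~\ref{claim:row-incoherence}, applied to the matrix $X_0$ built inside {\sc Initialization} (the one whose column space is declared to be $V_x^0$). Claim~\ref{claim:row-incoherence} needs exactly two facts about $X_0$: a uniform upper bound on its row norms, and a lower bound on its $r$\ts{th} singular value. The first is handed to us by the row-zeroing step of {\sc Initialization}, and the second is precisely the content of Claim~\ref{claim:perturbedsvd}. So the whole proof is: quote these two inputs, invoke Claim~\ref{claim:row-incoherence}, and do the bookkeeping.

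Concretely, I would first note that, by definition, $X_0$ is obtained from $X$ by zeroing every row of norm at least $\tau\sqrt{r/n}$, so \emph{every} row of $X_0$ has norm at most $\tau\sqrt{r/n}$ with $\tau = \left(\frac{2\mu r}{c^2}\cdot\frac{\sigma_1^2}{\sigma_r^2}\right)^{5}$; this supplies the first bullet of Claim~\ref{claim:row-incoherence} with its generic constant taken to be $\tau$. Next I would invoke Claim~\ref{claim:perturbedsvd}: on an event $\mathcal E$ of probability at least $1 - n^{-25}$, the $r$\ts{th} singular value of $X_0$ is at least $\rho := \frac{c^6}{10 r^3}\left(\frac{\sigma_r}{\sigma_1}\right)^2$, which supplies the second bullet. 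Claim~\ref{claim:row-incoherence} then gives, on $\mathcal E$, that the column space $V_x^0$ of $X_0$ is $\frac{\tau^2 r}{\rho^2}$-incoherent. The remaining work is purely arithmetic: substitute the values of $\tau$ and $\rho$ and check $\frac{\tau^2 r}{\rho^2} \le \left(\frac{2\mu r}{c^2}\cdot\frac{\sigma_1^2}{\sigma_r^2}\right)^{10}$, using $\mu, r, \sigma_1/\sigma_r, 1/c \ge 1$ to fold the low-order $r$, $\sigma_1/\sigma_r$, $1/c$ factors into the gap between the exponent $5$ in $\tau$ and the exponent $10$ in the target.

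The step I would be most careful about is exactly this last arithmetic check: since $\rho \le \sigma_r(X) = 1$, the factor $r/\rho^2$ in $\frac{\tau^2 r}{\rho^2}$ is genuinely $\ge 1$, so the inequality is not slack to spare and must be tracked honestly — if the stated exponents turn out not to leave quite enough room, the clean fix is to enlarge $\tau$ (equivalently, the claimed incoherence parameter $\mu'$) by a fixed further power, which only changes constants downstream and propagates harmlessly through Theorem~\ref{thm:init} and the alternating-minimization analysis. As for probability, nothing further is needed: the $1 - n^{-25}$ in the statement is exactly $\Pr[\mathcal E]$, and Claim~\ref{claim:perturbedsvd} already absorbs the failure probability of Lemma~\ref{lem:spectral}, so there is no extra union bound. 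Finally, the only conceptually nontrivial ingredient — showing that zeroing a (necessarily small, since $\|X\|_F^2 = r$) set of rows of $X$ cannot collapse its $r$-dimensional column space, i.e. Claim~\ref{claim:perturbedsvd} — is already in hand, so here there is literally nothing to do but assemble the pieces, and by symmetry the identical argument yields the corresponding statements for $V_y^0$ and $V_z^0$.
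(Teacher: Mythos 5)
Your proposal matches the paper's proof essentially verbatim: the paper likewise takes the row-norm bound $\tau\sqrt{r/n}$ guaranteed by the zeroing step, combines it with the singular-value lower bound of Claim~\ref{claim:perturbedsvd}, and feeds both into Claim~\ref{claim:row-incoherence} to conclude that the column space of $X_0$ is $\tau^2 r/\rho^2$-incoherent with probability $1-n^{-25}$. Your caution about the final arithmetic is well placed: since $\tau^2$ already equals the target $\left(\frac{2\mu r}{c^2}\cdot\frac{\sigma_1^2}{\sigma_r^2}\right)^{10}$ and $r/\rho^2 \geq 1$, the paper's displayed inequality is in fact off by an extra $\poly(r,\sigma_1/\sigma_r,1/c)$ factor, and the repair you sketch (raising the exponent in the claimed incoherence parameter $\mu'$, which only changes polynomial factors downstream) is the right one.
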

\begin{proof}
Note all rows of $X_0$ have norm at most $\tau \sqrt{\frac{r}{n}}$.  By Claim \ref{claim:row-incoherence} and Claim \ref{claim:perturbedsvd}, the column space of $X_0$ is incoherent with incoherence parameter
\[
\frac{\tau^2 r}{\left(\frac{c^6}{10r^3}\left(\frac{\sigma_r}{\sigma_1}\right)^2\right)^2} \leq \left(\frac{2\mu r}{c^2} \cdot \frac{\sigma_1^2}{\sigma_r^2}\right)^{10}.
\]
\end{proof}

\begin{corollary}\label{corollary:initialangle}
With probability at least $1 - \frac{1}{n^{25}}$ we have $\alpha(V_x, V_x^{0}) \leq 0.1$.
\end{corollary}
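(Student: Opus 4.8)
The plan is to compare $V_x^0$, the column space of $X_0$, with $V_x$, the column space of $U_x$, by applying Claim~\ref{claim:operator-anglebound} to the two rank-$r$ matrices $A = X_0 D X_0^T$ and $B = U_x U_x^T$. First I would verify these are legitimate inputs to that claim. Since $U_x$ has rank $r$, the column space of $U_x U_x^T$ is exactly $V_x$. For $A$, note that on the good event of Lemma~\ref{lem:spectral} the top $r$ eigenvalues of $\wh B$ lie within $\|\wh B - U_x U_x^T\|_{\op}$ of the top $r$ eigenvalues of $U_x U_x^T$, which by Claim~\ref{claim:nondegenerate} are all at least $c^6\sigma_r^2$; since this error is far smaller than $c^6\sigma_r^2$ (it is bounded as in Claim~\ref{claim:rowmod}, and $c \le 1$, $\sigma_r \le \sigma_1$), the diagonal matrix $D$ is positive definite. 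Combined with Claim~\ref{claim:perturbedsvd}, which lower bounds the $r$th singular value of $X_0$, this shows $X_0 D X_0^T$ has rank exactly $r$ and column space exactly $V_x^0$.

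Next I would substitute the two bounds already proved. Claim~\ref{claim:rowmod} gives
\[
\delta := \norm{X_0 D X_0^T - U_x U_x^T}_{\op} \le 0.1\left(\frac{c\sigma_r}{\sigma_1}\right)^{10}\sigma_1^2,
\]
and Claim~\ref{claim:nondegenerate} gives that the $r$th singular value of $U_x$ is at least $c^3\sigma_r$, hence the $r$th singular value of $B = U_x U_x^T$ is at least $\rho := c^6\sigma_r^2$. Claim~\ref{claim:operator-anglebound} then yields
\[
\alpha(V_x, V_x^0) \le \frac{\delta}{\rho} \le \frac{0.1\,(c\sigma_r/\sigma_1)^{10}\sigma_1^2}{c^6\sigma_r^2} = 0.1\cdot \frac{c^4\sigma_r^8}{\sigma_1^8} \le 0.1,
\]
using $c \le 1$ and $\sigma_r \le \sigma_1$. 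For the probability bound, observe that all the events invoked (Lemma~\ref{lem:spectral}, Claim~\ref{claim:rowmod}, Claim~\ref{claim:perturbedsvd}) hold deterministically on the single high-probability event of Lemma~\ref{lem:spectral}, which has probability at least $1 - 1/n^{25}$, so no genuine union bound is needed.

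The argument is essentially substitution into Claim~\ref{claim:operator-anglebound}; the only step requiring a little care is the bookkeeping that makes that claim applicable verbatim, namely checking that $X_0 D X_0^T$ has rank exactly $r$ and column space equal to $V_x^0$ — this is precisely what the positivity of $D$ and the singular value lower bound of Claim~\ref{claim:perturbedsvd} provide. I do not anticipate any other obstacle.
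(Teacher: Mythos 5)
Your proposal is correct and follows essentially the same route as the paper: apply Claim~\ref{claim:operator-anglebound} to $X_0DX_0^T$ and $U_xU_x^T$, using Claim~\ref{claim:rowmod} for the operator-norm error and Claim~\ref{claim:nondegenerate} for the lower bound $c^6\sigma_r^2$ on the $r$th singular value of $U_xU_x^T$. The only difference is that you spell out the rank-$r$/column-space verification and the probability accounting, which the paper asserts without detail.
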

\begin{proof}
Note that $X_0DX_0^T$ and $U_xU_x^T$ are both $n \times n$ matrices with rank $r$.  By Claim \ref{claim:nondegenerate}, the $r$\ts{th} singular value of $U_x$ is at least $c^3\sigma_r$ and thus the $r$\ts{th} singular value of $U_xU_x^T$ is at least $c^6\sigma_r^2$.  Now using Claim \ref{claim:rowmod} and Claim \ref{claim:operator-anglebound}, we have with probability at least $1 - \frac{1}{n^{25}}$
\[
\alpha(V_x, V_x^0) \leq \frac{0.1\left(\frac{c\sigma_r}{\sigma_1}\right)^{10}\sigma_1^2}{c^6\sigma_r^2} \leq 0.1.
\]
\end{proof}

\begin{proof}[Proof of Theorem \ref{thm:init}]
Combining Corollary \ref{corollary:initialincoherence} and Corollary \ref{corollary:initialangle} we immediately get the desired.
\end{proof}

\section{Projection and Decomposition}\label{sec:projectionanddecomp}
After computing estimates for $V_x,V_y,V_z$, say $\widehat{V_x}, \widehat{V_y}, \widehat{V_z}$, we estimate the original tensor by projecting onto the space spanned by $\widehat{V_x} \otimes \widehat{V_y} \otimes \widehat{V_z}$.  Our first goal is to show that our error in estimating $T$ by projecting onto these estimated subspaces depends polynomially on the principal angles $\alpha(V_x, \wh{V_x}), \alpha(V_y, \wh{V_y}), \alpha(V_z, \wh{V_z})$.  This will then imply (due to Theorem \ref{thm:alternatingmin}) that we can estimate the entries of the original tensor to any inverse polynomial accuracy.

\subsection{Projection Step}\label{sec:projectionstep}

We will slightly abuse notation and use $\wh{V_x}, \wh{V_y}, \wh{V_z}$ to denote $n \times r$ matrices whose columns form orthonormal bases of the respective subspaces.  Let
\[
\delta = \max\left(\alpha(V_x, \wh{V_x}), \alpha(V_y, \wh{V_y}), \alpha(V_z, \wh{V_z})\right)
\]
Let $M = V_x \otimes V_y \otimes V_z$ be an $r^3 \times n^3$ matrix.  Let $S$ be a subset of the rows of $M$ where each row is chosen with probability $ p' = \frac{\log^2 n}{n^2} \left(\frac{10r\mu}{ c} \cdot \frac{\sigma_1}{\sigma_r}\right)^{10}$.  Let $M_S$ be the matrix obtained by taking only the rows of $M$ in $S$.
\\\\
The main lemma of this section is:
\begin{lemma}\label{lem:projectionstep}
Assume that $ \delta \leq \left(\frac{c\sigma_r}{n\sigma_1}\right)^{10}$.  When the {\sc Post-Processing via Convex Optimization} algorithm is run with $p_3  = 2\frac{\log^2 n}{n^2} \left(\frac{10r\mu}{ c} \cdot \frac{\sigma_1}{\sigma_r}\right)^{10}$, the tensor $T'$ satisfies
\[
||T - T'||_2 \leq 10\sigma_1\delta r n 
\]
with probability at least $1 - \frac{1}{n^{20}}$.
\end{lemma}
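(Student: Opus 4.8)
The plan is to show that the true tensor $T$ already lies within distance $O(\delta r\sigma_1)$ of the Kronecker subspace $\wh{V_x}\otimes\wh{V_y}\otimes\wh{V_z}$, and that least squares against the (noiseless) observations in $S$ cannot pull us more than an $O(1/\sqrt{p'})$ factor further away; since $1/\sqrt{p'}\le n$, this gives the claimed $O(\sigma_1\delta rn)$ bound. Write $\wh V:=\wh{V_x}\otimes\wh{V_y}\otimes\wh{V_z}$, viewed as an $r^3$-dimensional subspace of $\R^{n^3}$, and let $W$ be the $n^3\times r^3$ matrix whose columns form an orthonormal basis of $\wh V$ (concretely, the Kronecker product of orthonormal bases of the three factor subspaces). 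Decompose $T=T^*+T^\perp$, where $T^*$ is the orthogonal projection of $T$ onto $\wh V$. Applying Observation~\ref{obs:tensorprincipalangles} twice gives $\alpha(V_x\otimes V_y\otimes V_z,\wh V)\le \alpha(V_x,\wh{V_x})+\alpha(V_y,\wh{V_y})+\alpha(V_z,\wh{V_z})\le 3\delta$, and since $T\in V_x\otimes V_y\otimes V_z$ with $\norm{T}_2\le\sum_i\sigma_i\le r\sigma_1$, I get $\norm{T^\perp}_2\le 3\delta r\sigma_1$.

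Next I would use optimality of the least squares solution together with a conditioning estimate for the restriction of $\wh V$ to the coordinates in $S$. Because the sample is noiseless, $\wh{T_-}$ agrees with $T$ on $S$, so $T'$ minimizes $\norm{(T''-T)|_S}_2^2$ over $T''\in\wh V$; comparing against the feasible point $T^*\in\wh V$ yields $\norm{(T'-T)|_S}_2\le\norm{T^\perp|_S}_2\le\norm{T^\perp}_2$, and hence $\norm{(T'-T^*)|_S}_2\le 2\norm{T^\perp}_2$ by the triangle inequality. Now write $T'-T^*=W\theta$ with $\theta\in\R^{r^3}$ and $\norm{W\theta}_2=\norm{\theta}_2$, and note $\norm{(W\theta)|_S}_2^2=\theta^{T}\big(\sum_{e\in S}w_ew_e^{T}\big)\theta$ where $w_e$ is the $e$-th row of $W$. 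By Claim~\ref{claim:incoherencebound} the subspace $\wh V$ is $\mu'^{3}$-incoherent (using that $\wh{V_x},\wh{V_y},\wh{V_z}$ are $\mu'$-incoherent as guaranteed by the alternating minimization phase), so $\norm{w_e}_2^2\le\mu'^{3}r^{3}/n^{3}$ for every $e$; since $\sum_e w_ew_e^{T}=I$ and each $w_ew_e^{T}$ is PSD with operator norm at most $\mu'^{3}r^{3}/n^{3}$, the matrix Chernoff bound (Claim~\ref{claim:matrix-chernoff}) with deviation parameter $1/2$ gives $\lambda_{\min}\big(\sum_{e\in S}w_ew_e^{T}\big)\ge p'/2$ with probability at least $1-\tfrac{1}{n^{20}}$, provided $p'n^{3}\gg \mu'^{3}r^{3}\log n$, which holds since $p'\ge\tfrac{\log^2 n}{n^2}$ while $\mu'^{3}r^{3}$ is a fixed polynomial in $r,\mu,1/c,\sigma_1/\sigma_r$, all bounded by $n^{\delta}$ for a small constant. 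On this event $\tfrac{p'}{2}\norm{\theta}_2^2\le\norm{(W\theta)|_S}_2^2\le 4\norm{T^\perp}_2^2$, so $\norm{T'-T^*}_2=\norm{\theta}_2\le\tfrac{3}{\sqrt{p'}}\norm{T^\perp}_2$.

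Finally I would combine via Pythagoras: since $T'-T^*\in\wh V$ while $T^\perp\perp\wh V$, we have $\norm{T-T'}_2^2=\norm{T^\perp}_2^2+\norm{T^*-T'}_2^2\le\big(1+\tfrac{8}{p'}\big)\norm{T^\perp}_2^2\le\tfrac{9}{p'}\norm{T^\perp}_2^2$, hence $\norm{T-T'}_2\le\tfrac{3}{\sqrt{p'}}\cdot 3\delta r\sigma_1=\tfrac{9\delta r\sigma_1}{\sqrt{p'}}$. With $p'=\tfrac{\log^2 n}{n^2}\big(\tfrac{10r\mu}{c}\cdot\tfrac{\sigma_1}{\sigma_r}\big)^{10}$ we have $\sqrt{p'}\ge\tfrac{\log n}{n}\ge\tfrac1n$ for $n$ large, so $\norm{T-T'}_2\le 9\delta r\sigma_1 n\le 10\sigma_1\delta rn$. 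The only nontrivial step is the matrix Chernoff conditioning bound and verifying that the sampling rate $p'$ dominates the incoherence $\mu'^{3}r^{3}$ of the Kronecker subspace; the rest is an optimality comparison against $T^*$ plus orthogonality, which is what lets us avoid a direct manipulation of the normal equations.
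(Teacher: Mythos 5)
Your proposal is correct and follows essentially the same skeleton as the paper's proof: compare $T'$ against a surrogate tensor lying in $\wh{V_x} \otimes \wh{V_y} \otimes \wh{V_z}$ that is within $3\delta r\sigma_1$ of $T$, use least-squares optimality to control the error on the observed entries $S$, and then use a matrix Chernoff bound together with incoherence of the estimated Kronecker basis to show that a small error on $S$ forces a small error on the whole tensor (this is exactly the role of Claim~\ref{claim:unfoldedcondition}, which lower bounds the smallest singular value of $M_S$ by $\sqrt{p'/2} \geq 1/n$). The two differences are cosmetic: you take the orthogonal projection $T^*$ of $T$ onto the Kronecker subspace and bound $\norm{T - T^*}_2$ via Observation~\ref{obs:tensorprincipalangles} plus Pythagoras, whereas the paper projects each factor $x_i, y_i, z_i$ separately to build $\ov{T}$; both give the same $3\delta r \sigma_1$ bound. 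One small caution: you justify the incoherence of $\wh{V_x}, \wh{V_y}, \wh{V_z}$ by appealing to the alternating-minimization guarantee (Theorem~\ref{thm:alternatingmin}), which is not among the lemma's stated hypotheses; the paper instead derives the needed entrywise bounds on the estimated bases directly from the hypothesis $\delta \leq \left(\frac{c\sigma_r}{n\sigma_1}\right)^{10}$ together with the $\mu$-incoherence of the true subspaces (any entry of an orthonormal basis of $\wh{V_x}$ exceeding $2\sqrt{\mu r/n}$ would force $\alpha(V_x,\wh{V_x}) \geq \sqrt{\mu r/n} \gg \delta$). Your route is fine in the context where the lemma is invoked, and as you note the slack in $p_3$ easily absorbs the larger $\mu'$, but the self-contained derivation keeps the lemma's hypotheses minimal.
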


We first prove a preliminary claim.

\begin{claim}\label{claim:unfoldedcondition}
Assume that $ \delta \leq \left(\frac{c\sigma_r}{n\sigma_1}\right)^{10}$.  Then with probability at least $1 - \frac{1}{n^{20}}$, the smallest singular value of $M_S$ is at least $\frac{1}{n}$
\end{claim}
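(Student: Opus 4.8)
The plan is to compare $M_S$ to a rescaled version of the "ideal" sampled matrix whose rows are the rows of $M = V_x \otimes V_y \otimes V_z$ selected with probability $p'$, and to use a matrix Chernoff bound to show the Gram matrix $M_S^T M_S$ does not collapse. Concretely, since the columns of $V_x, V_y, V_z$ are orthonormal, the columns of $M$ are orthonormal, so $M^T M = \E[\tfrac{1}{p'} M_S^T M_S] = I_{r^3}$ (the expectation being over the random choice of $S$). Writing $M_S^T M_S = \sum_{\ell \in S} m_\ell m_\ell^T$ where the $m_\ell$ are the rows of $M$, the first step is to bound $\norm{m_\ell}_2^2$. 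By Claim~\ref{claim:incoherencebound}, $V_x \otimes V_y \otimes V_z$ is $\mu^3$-incoherent (with total dimension $r^3$ in ambient dimension $n^3$), so each row $m_\ell$ satisfies $\norm{m_\ell}_2^2 \leq \tfrac{\mu^3 r^3}{n^3}$.

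The second step is to apply the matrix Chernoff bound, Claim~\ref{claim:matrix-chernoff}, to the sum $\sum_{\ell} \mathbf{1}[\ell \in S]\, m_\ell m_\ell^T$: each summand is PSD with largest eigenvalue at most $R = \tfrac{\mu^3 r^3}{n^3}$, and $\mu_{\min} = \lambda_{\min}(\E[\sum_\ell \mathbf 1[\ell\in S] m_\ell m_\ell^T]) = p'$. Taking $\delta = 1/2$ in that claim gives
\[
\Pr\left[\lambda_{\min}(M_S^T M_S) \leq \tfrac{p'}{2}\right] \leq n^3 \exp\!\left(-\tfrac{p' n^3}{8 \mu^3 r^3}\right).
\]
Plugging in $p' = \tfrac{\log^2 n}{n^2}\left(\tfrac{10 r \mu}{c}\cdot\tfrac{\sigma_1}{\sigma_r}\right)^{10}$, the exponent is $-\tfrac{n \log^2 n}{8\mu^3 r^3}\left(\tfrac{10 r \mu}{c}\cdot\tfrac{\sigma_1}{\sigma_r}\right)^{10}$, which under the standing assumption $\max(r, \sigma_1/\sigma_r, 1/c, \mu) \leq n^\delta$ for small $\delta$ dominates $25\log n + 3\log n$, so the failure probability is at most $1/n^{20}$ (in fact much smaller). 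On the complementary event, $\lambda_{\min}(M_S^T M_S) \geq p'/2$, hence the smallest singular value of $M_S$ is at least $\sqrt{p'/2}$. Finally one checks $\sqrt{p'/2} \geq 1/n$: indeed $p'/2 \geq \tfrac{\log^2 n}{2n^2} \geq \tfrac{1}{n^2}$, so $\sqrt{p'/2} \geq 1/n$ as required.

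I do not expect a genuine obstacle here; the only point requiring mild care is bookkeeping the parameter dependencies — making sure the exponent in the Chernoff bound genuinely beats the $\log(n^3)$ union-bound factor and that the hypothesis $\delta \leq (c\sigma_r/(n\sigma_1))^{10}$, which is not actually used in this particular claim (it is needed for the subsequent Lemma~\ref{lem:projectionstep}), is simply carried along. One subtlety worth flagging: the sampling here is over rows of the $r^3 \times n^3$ matrix $M$, which corresponds to observing entries of the tensor, and $p'$ must be consistent with the splitting $p_3 = 2p'$ used in {\sc Post-Processing via Convex Optimization}; this is exactly why the lemma is stated with that value of $p_3$. All of this is routine once the matrix Chernoff bound is in hand.
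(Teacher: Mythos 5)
Your probabilistic skeleton is exactly the paper's: write $\frac{1}{p'}M_S^T M_S$ as a sum of independent rank-one PSD terms with identity expectation, bound the row norms of $M$, apply the matrix Chernoff bound (Claim \ref{claim:matrix-chernoff}) with $\delta = 1/2$, and conclude $\sigma_{\min}(M_S) \geq \sqrt{p'/2} \geq 1/n$. However, there is a genuine gap in how you obtain the row-norm bound, and it is precisely at the point you flag as ``not actually used.'' The matrix $M$ relevant to this claim is formed from the \emph{estimated} subspaces $\wh{V_x}, \wh{V_y}, \wh{V_z}$ (this is forced by how the claim is used in the proof of Lemma \ref{lem:projectionstep}, where $T'$ and $\ov{T}$ lie in $\wh{V_x} \otimes \wh{V_y} \otimes \wh{V_z}$ and are written as $Mv'$ and $M\ov{v}$; the displayed definition ``$M = V_x \otimes V_y \otimes V_z$'' is a notational slip). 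The estimated subspaces are \emph{not} assumed to be $\mu$-incoherent, so you cannot invoke Claim \ref{claim:incoherencebound} on them directly, and the bound $\norm{m_\ell}_2^2 \leq \mu^3 r^3/n^3$ does not follow as you state.

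This is where the hypothesis $\delta \leq \left(\frac{c\sigma_r}{n\sigma_1}\right)^{10}$ enters: the paper first shows every entry of $\wh{V_x}$ has magnitude at most $2\sqrt{\mu r/n}$, by contradiction \--- if some entry of a column $u$ of $\wh{V_x}$ exceeded $2\sqrt{\mu r / n}$, then since every entry of the projection of $u$ onto $V_x$ is at most $\sqrt{\mu r/n}$ (by $\mu$-incoherence of the true subspace), the residual satisfies $\norm{u - \Pi(u)}_2^2 \geq \mu r/n$, giving $\alpha(V_x, \wh{V_x}) \geq \sqrt{\mu r/n}$, which contradicts the smallness of $\delta$. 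The same holds for $\wh{V_y}, \wh{V_z}$, yielding a row-norm bound of $8\mu^{3/2} r^3/n^{3/2}$ for $M$, after which your Chernoff computation goes through verbatim (the constant and $r$-power differences from your bound are immaterial). So the missing idea is the transfer of entrywise incoherence from the true subspaces to the estimated ones via the principal-angle hypothesis; once you add that short argument, your proof matches the paper's.
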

\begin{proof}
First we show that all of the entries of $\wh{V_x}$ are at most $2\sqrt{\frac{\mu r}{n}}$ in magnitude.  Assume for the sake of contradiction that some entry of  $\wh{V_x}$ is at least $2\sqrt{\frac{\mu r}{n}}$.  Let $c$ be the column of $\wh{V_x}$ that contains this entry.  Let $\Pi(c)$ be the projection of $c$ onto the subspace $V_x$.  Note that all entries of $\Pi(c)$ are at most $\sqrt{\frac{\mu r}{n}}$ since $V_x$ is $\mu$-incoherent.  This means that 
\[
||c - \Pi(c)||_2^2 \geq \frac{\mu r}{n}
\]
so
\[
\alpha(\wh{V_x}, V_x) \geq \sqrt{\frac{\mu r}{n}}
\]
contradicting the assumption that $\alpha(V_x, \wh{V_x}) \leq \left(\frac{c\sigma_r}{n\sigma_1}\right)^{10}$.  Similarly, we get the same bound for the entries of $\wh{V_y}, \wh{V_z}$.  This implies each row of $M$ has norm at most $\frac{8\mu^{3/2}r^{3}}{n^{3/2}}$.  Now consider $\frac{1}{p'}M_S^TM_S$.  This is a sum of independent rank-$1$ matrices with norm at most $\frac{1}{p'}\cdot \frac{64\mu^3r^6}{n^3}$ and the expected value of the sum is $I$, the identity matrix.  Thus by Claim \ref{claim:matrix-chernoff}
\[
\Pr\left[\lambda_{\min}\left( \frac{1}{p'}M_S^TM_S \right)\leq \frac{1}{2} \right] \leq \frac{1}{n^{20}}
\]
In particular, with probability at least, $1 - \frac{1}{n^{20}}$, the smallest singular value of $M_S$ is at least 
\[
\sqrt{\frac{p'}{2}} \geq \frac{1}{n}
\]
\end{proof}

\begin{proof}[Proof of Lemma \ref{lem:projectionstep}]
For $1 \leq i \leq r$, let $\ov{x_i}$ be the projection of $x_i$ onto $\wh{V_x}$.  Define $\ov{y_i},\ov{z_i}$ similarly.  Let
\[
\ov{T} = \sum_{i=1}^r \sigma_i \ov{x_i} \otimes \ov{y_i} \otimes \ov{z_i}
\]
Note
\begin{align*}
||\ov{x_i} \otimes \ov{y_i} \otimes \ov{z_i} - x_i \otimes y_i \otimes z_i||_2 &\leq ||\ov{x_i} - x_i||_2 +  ||\ov{y_i} - y_i||_2 +  ||\ov{z_i} - z_i||_2 \\ &\leq \alpha(V_x, \wh{V_x})+  \alpha(V_y, \wh{V_y})+ \alpha(V_z, \wh{V_z}) \\ &\leq 3\delta
\end{align*}
Thus
\begin{equation}\label{eq:comparison}
||\ov{T} - T||_2 \leq 3\delta\sigma_1 r
\end{equation}
Now consider the difference $\ov{T} - T'$.  By the definition of $T'$ We must have 
\[
\norm{\left(T' - T \right)|_S}_2 \leq \norm{\left(\ov{T} - T \right)|_S}_2 \leq  3\delta \sigma_1 r
\]
Thus $\norm{\left(T' - \ov{T} \right)|_S}_2 \leq 6\delta \sigma_1 r$.  Since $T', \ov{T}$ are both in the subspace $\wh{V_x} \otimes \wh{V_y} \otimes \wh{V_z}$, when flattened $T', \ov{T}$ can be written in the form $Mv', M\ov{v}$ respectively for some $v', \ov{v} \in \R^{r^3}$.  Now we know
\[
||M_S(v' - \ov{v})||_2 \leq 6\delta \sigma_1 r
\]
so by Claim \ref{claim:unfoldedcondition} we must have
\[
||v - \ov{v}||_2 \leq 6\delta \sigma_1 r n
\]
Thus
\[
||T' - \ov{T}||_2= ||M(v' - \ov{v})||_2 \leq 6\sigma_1\delta r n
\]
and combining with (\ref{eq:comparison}) we immediately get the desired.
\end{proof}

\subsection{Decomposition Step}
Now we analyze the decomposition step where we decompose $T'$ into rank-$1$ components.  First, we formally state {\sc Jennrich's Algorithm} and its guarantees.

\subsubsection{Tensor Decomposition via Jennrich's Algorithm}\label{sec:jennrichalg}
{\sc Jennrich's Algorithm} is an algorithm for decomposing a tensor, say $T= \sum_{i=1}^r (x_i \otimes y_i \otimes z_i)$, into its rank-$1$ components that works when the fibers of the rank $1$ components i.e. $x_1, \dots , x_r$ are linearly independent (and similar for $y_1, \dots , y_r$ and $z_1, \dots , z_r$).
\begin{algorithm}[H]
\caption{{\sc Jennrich's Algorithm} }
\begin{algorithmic} 
\State \textbf{Input}: Tensor $T' \in \R^{n \times n \times n}$ where 
\[
T' = T + E
\]
for some rank-$r$ tensor $T$ and error $E$ \\
\State Choose unit vectors $a,b \in \R^n$ uniformly at random
\State Let $T^{(a)}, T^{(b)}$ be $n \times n$ matrices defined as 
\begin{align*}
T^{(a)}_{ij} = T'_{i,j , \cdot } \cdot a \\
T^{(b)}_{ij} = T'_{i,j , \cdot } \cdot b
\end{align*}
\State Let $T_r^{(a)}, T_r^{(b)}$ be obtained by taking the top $r$ principal components of $T^{(a)}, T^{(b)}$ respectively.
\State Compute the eigendecompositions of $U = T_r^{(a)}(T_r^{(b)})^+$ and $V = \left((T_r^{(a)})^+T_r^{(b)}\right)^T$ (where for a matrix $M$, $M^+$ denotes the pseudoinverse)
\State Let $u_1, \dots , u_r, v_1, \dots , v_r$ be the eigenvectors computed in the previous step.  
\State Permute the $v_i$ so that for each pair $(u_i,v_i)$, the corresponding eigenvalues are (approximately) reciprocals.
\State Solve the following for the vectors $w_i$
\[
\arg\min \norm{T' - \sum_{i=1}^r u_i \otimes v_i \otimes w_i}_2^2
\]
\State Output the rank-$1$ components $\{u_i \otimes v_i \otimes w_i\}_{i=1}^r$
\end{algorithmic}
\end{algorithm}

Moitra \cite{moitra2018algorithmic} gives a complete analysis of {\sc Jennrich's Algorithm}.  The result that we need is that as the error $E$ goes to $0$ at an inverse-polynomial rate, {\sc Jennrich's Algorithm} recovers the individual rank-$1$ components to within any desired inverse-polynomial accuracy.  Note that the exact polynomial dependencies do not matter for our purposes as they only result in a constant factor change in the number of iterations of alternating minimization that we need to perform. 
\begin{theorem}[\cite{moitra2018algorithmic}]\label{thm:robust-jennrich}
Let 
\[
T = \sum_{i=1}^r \sigma_i (x_i \otimes y_i \otimes z_i)
\]
where the $x_i,y_i,z_i$ are unit vectors and $\sigma_1 \geq \dots \geq \sigma_r > 0$.  Assume that the smallest singular value of the matrix with columns given by $x_1, \dots, x_r$ is at least $c$ and similar for the $y_i$ and $z_i$.  Then for any constant $d$, there exists a polynomial $P$ such that if
\[
\norm{E}_2 \leq  \frac{\sigma_1}{P(n, \frac{1}{c}, \frac{\sigma_1}{\sigma_r})}
\]
then with $1 - \frac{1}{(10n)^d}$ probability, there is a permutation $\pi$ such that the outputs of {\sc Jennrich's Algorithm} satisfy
\[
\norm{\sigma_{\pi(i)} (x_{\pi(i)} \otimes y_{\pi(i)} \otimes z_{\pi(i)}) - u_i \otimes v_i \otimes w_i}_2 \leq \sigma_1\left(\frac{\sigma_r c}{10\sigma_1 n}\right)^d
\]
for all $1 \leq i \leq r$.
\end{theorem}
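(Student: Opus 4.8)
The plan is to follow the standard robustness analysis of simultaneous diagonalization. Write $X = [x_1 \mid \cdots \mid x_r]$, $Y = [y_1 \mid \cdots \mid y_r]$, $Z = [z_1 \mid \cdots \mid z_r]$, and $D = \mathrm{diag}(\sigma_1,\dots,\sigma_r)$. In the noiseless case, contracting along the third mode gives $T^{(a)}_{\mathrm{true}} = X\,\mathrm{diag}(\sigma_i\langle z_i,a\rangle)\,Y^T$ and likewise for $b$, so $T^{(a)}_{\mathrm{true}}(T^{(b)}_{\mathrm{true}})^+ = X\,\mathrm{diag}\!\big(\langle z_i,a\rangle/\langle z_i,b\rangle\big)\,X^+$, whose eigenvectors are exactly the $x_i$; the companion product recovers the $y_i$, and the eigenvalues of the two products are reciprocals, which is what the matching step exploits. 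The whole proof consists of showing that each of these structural facts is stable under the perturbation $E$, and that the randomness of $a,b$ supplies quantitative eigenvalue gaps.

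First I would record the deterministic reductions. Since $a,b$ are unit vectors, $\norm{E^{(a)}}_{\op},\norm{E^{(b)}}_{\op}\le \norm{E}_2$; and since $T^{(a)}_{\mathrm{true}}$ has rank at most $r$, an Eckart--Young argument gives $\norm{T_r^{(a)} - T^{(a)}_{\mathrm{true}}}_{\op}\le 2\norm{E^{(a)}}_{\op}\le 2\norm{E}_2$, and similarly for $b$. Next, the random choice of $a,b$: by anti-concentration of $\langle z_i,a\rangle$ for a random unit vector, with probability $1 - n^{-\Omega(d)}$ we have $|\langle z_i,a\rangle|,|\langle z_i,b\rangle|\ge n^{-O(1)}$ for all $i$, and the ratios $\langle z_i,a\rangle/\langle z_i,b\rangle$ are pairwise separated by at least $n^{-O(1)}$ (this uses that the $z_i$ are genuinely distinct directions, which the robust linear independence assumption makes quantitative). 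Combined with $\sigma_r(X)\ge c$ and $\sigma_r(Y)\ge c$, this lower-bounds $\sigma_r(T^{(a)}_{\mathrm{true}})$ by roughly $c^2\sigma_r/n^{O(1)}$, so all pseudoinverses appearing in the algorithm are stable provided $\norm{E}_2 \le \sigma_1/P(n,1/c,\sigma_1/\sigma_r)$ for a suitable polynomial $P$.

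The heart of the argument is a perturbation bound for the eigenvectors of the \emph{non-symmetric} matrix $U = T_r^{(a)}(T_r^{(b)})^+$. The point is that $U = X\Lambda X^+ + F$ where $\Lambda$ is diagonal with an inverse-polynomial eigenvalue gap, $X$ is well-conditioned (condition number $O(\sigma_1/(c\sigma_r))$), and $\norm{F}_{\op}\le \poly(n,1/c,\sigma_1/\sigma_r)\norm{E}_2$. For a diagonalizable matrix $S\Lambda S^{-1}$ with eigenvalue gap $\gamma$, perturbing by $\epsilon$ moves each eigenvector by at most $O(\kappa(S)^2\epsilon/\gamma)$; applying this yields computed eigenvectors with $\norm{u_i - x_{\pi(i)}}_2\le \poly(n,1/c,\sigma_1/\sigma_r)\norm{E}_2$ for some permutation $\pi$, and symmetrically for the $v_i$ and $y_i$, while the same gap makes the reciprocal-matching step recover the correct pairing $\pi$. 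I expect this non-symmetric eigenvector perturbation estimate, together with faithfully propagating the inverse-polynomial gap through the truncation and pseudoinverse steps, to be the main obstacle --- it is precisely why one cannot invoke Davis--Kahan and why the hypothesis demands $\norm{E}_2$ below a (fairly large) polynomial threshold.

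Finally, with $u_i \approx x_{\pi(i)}$ and $v_i\approx y_{\pi(i)}$ in hand, the last least-squares step for the $w_i$ is a linear regression whose design matrix is the Khatri--Rao product of $[u_1\mid\cdots\mid u_r]$ and $[v_1\mid\cdots\mid v_r]$, which is within $\poly(n)\norm{E}_2$ of the Khatri--Rao product of $X$ and $Y$, whose smallest singular value is at least $c^2$; hence the regression is well-conditioned and returns $w_i$ with $\norm{w_i - \sigma_{\pi(i)}z_{\pi(i)}}_2\le \poly(n,1/c,\sigma_1/\sigma_r)\norm{E}_2$. Collecting the three factor errors and using $\norm{a\otimes b\otimes c - a'\otimes b'\otimes c'}_2 \le \norm{a-a'} + \norm{b-b'} + \norm{c-c'}$ for near-unit vectors gives the claimed bound $\sigma_1(\sigma_r c/(10\sigma_1 n))^d$ once $P$ is chosen to dominate all the accumulated polynomial factors, which only changes the required accuracy by a polynomial and hence the number of alternating-minimization iterations by a constant factor.
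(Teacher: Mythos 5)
Your proposal is correct in outline and matches the standard robustness analysis of Jennrich's simultaneous-diagonalization algorithm; the paper itself does not prove this theorem but imports it from Moitra's monograph \cite{moitra2018algorithmic}, and that cited proof proceeds exactly as you describe (contraction to $XD_aY^T$, anti-concentration of the random contractions to get inverse-polynomial eigenvalue gaps, perturbation bounds for eigendecompositions of well-conditioned diagonalizable matrices, and a well-conditioned least-squares step for the third factors). So you have, in effect, reconstructed the cited argument rather than found a different route.
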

\begin{remark}
Note that the extra factors of $\sigma_1$ in the theorem above are simply to deal with the scaling of the tensor $T$.  
\end{remark}

\subsubsection{Uniqueness of Decomposition}
In Section \ref{sec:projectionstep}, we showed that our estimate $T'$ is close to $T$. We will now show that the components $\wh{\sigma_i},\wh{x}, \wh{y},\wh{z}$ that we obtain by decomposing $T'$ are close to the true components.

\begin{theorem}\label{thm:parameterestimates}
Consider running the {\sc Post-Processing via Convex Optimization} algorithm with parameter $$p_3 = 2\frac{\log^2 n}{n^2} \left(\frac{10r\mu}{ c} \cdot \frac{\sigma_1}{\sigma_r}\right)^{10}$$ and input subspaces that satisfy
\[
\max\left(\alpha(V_x, \wh{V_x}), \alpha(V_y, \wh{V_y}), \alpha(V_z, \wh{V_z})\right) \leq \left(\frac{\sigma_r c}{10\sigma_1 n} \right)^{10^2}.
\]
With probability at least $0.95$, there exists a permutation $\pi:[n] \rightarrow [n]$ and $\epsilon_x, \epsilon_y, \epsilon_z \in \{-1,1 \}$ such that the estimates  $\wh{\sigma_i},\wh{x_i}, \wh{y_i},\wh{z_i}$ computed in the {\sc Post-Processing via Convex Optimization} algorithm satisfy
\[
\frac{|\sigma_i - \widehat{\sigma_{\pi(i)}}|}{\sigma_1}, ||x_i - \epsilon_x \widehat{x_{\pi(i)}}||, ||y_i - \epsilon_y \widehat{y_{\pi(i)}}||, ||z_i - \epsilon_z \widehat{z_{\pi(i)}}|| \leq \epsilon
\]
where $\epsilon = \left(\frac{c\sigma_r}{10n\sigma_1}\right)^{20}$
\end{theorem}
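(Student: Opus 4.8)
The plan is to chain together three ingredients: (i) the projection step of {\sc Post-Processing via Convex Optimization} produces a tensor $T'$ that is extremely close to $T$; (ii) the robust analysis of Jennrich's algorithm then recovers the rank-one components of $T$ to within any desired inverse-polynomial accuracy; and (iii) closeness of rank-one tensors with unit-vector factors forces closeness of the factors themselves, up to sign.

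First I would apply Lemma \ref{lem:projectionstep}. Write $\delta = \max\left(\alpha(V_x,\wh{V_x}),\alpha(V_y,\wh{V_y}),\alpha(V_z,\wh{V_z})\right) \le \left(\frac{\sigma_r c}{10\sigma_1 n}\right)^{100}$; this is well below the hypothesis $\delta \le \left(\frac{c\sigma_r}{n\sigma_1}\right)^{10}$ required by that lemma, and since the algorithm is run with $p_3 = 2\frac{\log^2 n}{n^2}\left(\frac{10r\mu}{c}\cdot\frac{\sigma_1}{\sigma_r}\right)^{10}$ we conclude that with probability at least $1-n^{-20}$ the output $T'$ satisfies $\norm{T'-T}_2 \le 10\sigma_1 r n\,\delta$. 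Set $E = T'-T$. Then, using $r,\mu,1/c \le n^{\delta}$, we get $\norm{E}_2 \le \sigma_1\left(\frac{\sigma_r c}{10\sigma_1 n}\right)^{90}$, which for any fixed constant $d$ is much smaller than $\sigma_1/P(n,1/c,\sigma_1/\sigma_r)$, the threshold appearing in Theorem \ref{thm:robust-jennrich} — the $100$th power of a quantity at most $\tfrac1{10}$ dominates any fixed polynomial in $n,1/c,\sigma_1/\sigma_r$, so there is ample room and we are free to fix $d$ (say $d=50$) later. The robust linear independence assumption supplies the needed lower bounds on the smallest singular values of the factor matrices. Hence Theorem \ref{thm:robust-jennrich}, applied to $T' = T+E$, gives: with probability at least $1-(10n)^{-d}$, there is a permutation $\pi$ and rank-one terms $u_i\otimes v_i\otimes w_i$ output by Jennrich's algorithm with
\[
\norm{\sigma_{\pi(i)}\,x_{\pi(i)}\otimes y_{\pi(i)}\otimes z_{\pi(i)} - u_i\otimes v_i\otimes w_i}_2 \le \sigma_1\left(\tfrac{\sigma_r c}{10\sigma_1 n}\right)^{d} =: \eta
\]
for all $i$; and writing $u_i\otimes v_i\otimes w_i = \wh{\sigma_i}\,\wh{x_i}\otimes\wh{y_i}\otimes\wh{z_i}$ with unit factors and $\wh{\sigma_i}\ge 0$ as the algorithm does, the same bound holds with $\wh{\sigma_i}\,\wh{x_i}\otimes\wh{y_i}\otimes\wh{z_i}$ on the right.

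It remains to pass from this per-component tensor bound to bounds on $\wh{\sigma_i},\wh{x_i},\wh{y_i},\wh{z_i}$. Fix $i$ and write $x,y,z,\sigma$ for $x_{\pi(i)},y_{\pi(i)},z_{\pi(i)},\sigma_{\pi(i)}$. Since all factors are unit vectors, $\norm{\sigma\,x\otimes y\otimes z}_2 = \sigma$ and $\norm{\wh{\sigma_i}\,\wh{x_i}\otimes\wh{y_i}\otimes\wh{z_i}}_2 = \wh{\sigma_i}$, so the triangle inequality gives $|\sigma-\wh{\sigma_i}|\le\eta$, hence $|\sigma-\wh{\sigma_i}|/\sigma_1 \le \eta/\sigma_1 \le \epsilon$. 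Because $\sigma\ge\sigma_r$ and $\eta\ll\sigma_r$, dividing the tensor bound through by $\sigma$ and absorbing the discrepancy $|\sigma-\wh{\sigma_i}|$ yields $\norm{x\otimes y\otimes z - \wh{x_i}\otimes\wh{y_i}\otimes\wh{z_i}}_2 \le O(\eta/\sigma_r)$. Both sides being unit vectors, this forces $\langle x,\wh{x_i}\rangle\langle y,\wh{y_i}\rangle\langle z,\wh{z_i}\rangle = \langle x\otimes y\otimes z,\ \wh{x_i}\otimes\wh{y_i}\otimes\wh{z_i}\rangle \ge 1 - O(\eta^2/\sigma_r^2)$; since each of these three inner products lies in $[-1,1]$, each must be within $O(\eta/\sigma_r)$ of $\pm1$, with the three signs multiplying to $+1$. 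Taking $\epsilon_x,\epsilon_y,\epsilon_z$ to be those signs we get $\langle x,\epsilon_x\wh{x_i}\rangle \ge 1-O(\eta/\sigma_r)$, hence $\norm{x-\epsilon_x\wh{x_i}}_2^2 = 2-2\langle x,\epsilon_x\wh{x_i}\rangle \le O(\eta/\sigma_r)$, and likewise for $y,z$. With $\eta = \sigma_1\left(\frac{\sigma_r c}{10\sigma_1 n}\right)^{d}$, choosing $d$ a large enough absolute constant (e.g.\ $d=50$) makes both $\sqrt{O(\eta/\sigma_r)}$ and $\eta/\sigma_1$ at most $\epsilon = \left(\frac{c\sigma_r}{10n\sigma_1}\right)^{20}$. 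Finally, a union bound over the failure of Lemma \ref{lem:projectionstep} and of Theorem \ref{thm:robust-jennrich} gives success probability at least $1 - n^{-20} - (10n)^{-d} \ge 0.95$.

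The main difficulty is not any single deep step — both Lemma \ref{lem:projectionstep} and Theorem \ref{thm:robust-jennrich} are already in hand — but rather the bookkeeping in step (iii): one must handle the sign ambiguity intrinsic to CP decompositions (negating any two of $\wh{x_i},\wh{y_i},\wh{z_i}$ leaves $\wh{\sigma_i}\,\wh{x_i}\otimes\wh{y_i}\otimes\wh{z_i}$ unchanged), and verify that every inequality propagates with only a mild $O(1/\sigma_r)$-type blow-up, so that the enormous slack coming from the $10^2$-th power in the hypothesis on $\delta$ — funneled through the free exponent $d$ in Theorem \ref{thm:robust-jennrich} — comfortably absorbs all losses down to the target accuracy $\epsilon$.
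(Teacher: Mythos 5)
Your proposal is correct and follows essentially the same route as the paper: apply Lemma \ref{lem:projectionstep} to get $\norm{T'-T}_2 \le \sigma_1\left(\frac{\sigma_r c}{10\sigma_1 n}\right)^{90}$, feed this into the robust Jennrich guarantee of Theorem \ref{thm:robust-jennrich}, and convert the per-component rank-one closeness into closeness of $\wh{\sigma_i},\wh{x_i},\wh{y_i},\wh{z_i}$ (the paper does this last step by projecting $\wh{x_{\pi(i)}}$ onto the orthogonal complement of $x_i$ to get a linear-in-$\eta/\sigma_r$ bound, while you argue via inner products of unit rank-one tensors; both land comfortably within the available slack). The one imprecision is your claim that the $100$th power ``dominates any fixed polynomial'' in $n,1/c,\sigma_1/\sigma_r$ --- this is false if the polynomial $P$ in Theorem \ref{thm:robust-jennrich} has degree exceeding roughly $90$; the correct justification, which the paper states in its remark after that theorem, is that the degree of $P$ needed for the fixed constant $d$ one uses is itself a fixed quantity, and the exponent $10^2$ in the hypothesis (equivalently, the number of alternating-minimization iterations) is chosen large enough to dominate it.
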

\begin{proof}

By Lemma \ref{lem:projectionstep} we have with probability at least $0.98$ that
\[
||T - T'||_2 \leq \sigma_1 \left(\frac{\sigma_r c}{10\sigma_1 n} \right)^{99}
\]
Now by the robust analysis of {\sc Jennrich's Algorithm} (see Theorem \ref{thm:robust-jennrich}) we know that with $0.97$ probability, for the components $T_1, \dots , T_r$ that we obtain in the decomposition of $T'$, there is a permutation $\pi$ such that 
\[
||T_{\pi(i)} - \sigma_i x_i \otimes y_i \otimes z_i||_2 \leq \sigma_1 \left(\frac{\sigma_r c}{10n\sigma_1 } \right)^{40}
\]
for all $1 \leq i \leq r$.  We write $T_{\pi(i)} = \wh{\sigma_{\pi(i)}} \wh{x_{\pi(i)}} \otimes \wh{y_{\pi(i)}} \otimes \wh{z_{\pi(i)}}$ where $\wh{x_{\pi(i)}}, \wh{y_{\pi(i)}}, \wh{z_{\pi(i)}}$ are unit vectors and $\wh{\sigma_{\pi(i)}}$ is nonnegative.  Note that this decomposition is clearly unique up to flipping the signs on the unit vectors.  Then
\[
|| \wh{\sigma_{\pi(i)}} \wh{x_{\pi(i)}} \otimes \wh{y_{\pi(i)}} \otimes \wh{z_{\pi(i)}} - \sigma_i x_i \otimes y_i \otimes z_i||_2 \geq \bigg| \norm{\wh{\sigma_{\pi(i)}} \wh{x_{\pi(i)}} \otimes \wh{y_{\pi(i)}} \otimes \wh{z_{\pi(i)}}}_2 - \norm{\sigma_i x_i \otimes y_i \otimes z_i}_2 \bigg| = |\sigma_i - \wh{\sigma_{\pi(i)}}|
\]
Thus 
\[
|\sigma_i - \wh{\sigma_{\pi(i)}}| \leq \sigma_1 \left(\frac{\sigma_r c}{10n\sigma_1} \right)^{40}
\]
Now let $x_{\perp}$ be the projection of $\wh{x_{\pi(i)}}$ onto the orthogonal complement of $x_i$.  WLOG $\wh{x_{\pi(i)}} \cdot x_i \geq 0$.  Otherwise we can set $\epsilon_x = -1$.  Then
\[
x_{\perp} \geq 0.5||x_i - \wh{x_{\pi(i)}}||_2
\]
Also
\[
|| \wh{\sigma_{\pi(i)}} \wh{x_{\pi(i)}} \otimes \wh{y_{\pi(i)}} \otimes \wh{z_{\pi(i)}} - \sigma_i x_i \otimes y_i \otimes z_i||_2 \geq ||\wh{\sigma_{\pi(i)}} x_{\perp} \otimes \wh{y_{\pi(i)}} \otimes \wh{z_{\pi(i)}}||_2 \geq 0.1\sigma_r||x_i - \wh{x_{\pi(i)}}||_2
\]
Thus
\[
||x_i - \wh{x_{\pi(i)}}||_2 \leq \left(\frac{\sigma_r c}{10n\sigma_1} \right)^{20}
\]
and similar for $\wh{y_{\pi(i)}}, \wh{z_{\pi(i)}}$, completing the proof.
\end{proof}

\noindent From now on we will assume $\pi$ is the identity permutation and $\epsilon_x = \epsilon_y = \epsilon_z = 1$.  It is clear that these assumptions are without loss of generality.  We now have 
\begin{assertion}\label{assertion:parameters}
For all $1 \leq i \leq r$
\[
\frac{|\sigma_i - \widehat{\sigma_{i}}|}{\sigma_1}, ||x_i -  \widehat{x_{i}}||, ||y_i -  \widehat{y_{i}}||, ||z_i - \wh{z_i}|| \leq \epsilon
\]
where $\epsilon = \left(\frac{c\sigma_r}{10n\sigma_1}\right)^{20}$.
\end{assertion}

\section{Exact Completion via Convex Optimization}\label{sec:convexopt}
In the last step of our algorithm, once we have estimates $\widehat{\sigma_i}, \widehat{x_i}, \widehat{y_i}, \widehat{z_i}$, we solve the following optimization problem which we claim is strongly convex.  Let $S_{\sim}$ be the set of observed entries in $T_{\sim}$.  For each $1 \leq i \leq r$ let $y_i'$ be the unit vector in $\spn(\wh{y_1}, \dots , \wh{y_r})$ that is orthogonal to $\wh{y_1}, \dots , \wh{y_{i-1}}, \wh{y_{i+1}}, \dots , \wh{y_r}$.  Define $z_i'$ similarly.  We solve  
\begin{equation}\label{eq:optimization}
\min_{a_i,b_i,c_i}\norm{\left(T - \sum_{i=1}^r (\wh{\sigma_i}(\wh{x_i} + a_i)) \otimes (\wh{y_i} + b_i) \otimes (\wh{z_i} + c_i)\right)\Bigg|_{S_{\sim}}}_2^2
\end{equation}
with the constraints
\begin{itemize}
    \item $0 \leq ||a_i||_{\infty},||b_i||_{\infty}, ||c_i||_{\infty} \leq \left(\frac{c\sigma_r}{10n\sigma_1}\right)^{10}$ for all $1 \leq i \leq r$.
    \item $b_i \cdot y_i' = 0$ and $c_i \cdot z_i' = 0$ for all $1 \leq i \leq r$.
\end{itemize}

Assume that Assertion \ref{assertion:parameters} holds.  We will prove the following three lemmas which will imply that the optimization problem can be solved efficiently and yields the desired solution.

\begin{lemma}\label{lem:correctness}
Assuming that Assertion \ref{assertion:parameters} holds, an optimal solution of (\ref{eq:optimization}) is 
\begin{align*}
a_i &= \frac{\sigma_i}{\wh{\sigma_i}} \frac{(y_i \cdot y_i') (z_i \cdot z_i')}{(\wh{y_i} \cdot y_i')(\wh{z_i} \cdot z_i')}x_i -\wh{x_i} \\
b_i &= \frac{\wh{y_i} \cdot y_i'}{y_i \cdot y_i'} y_i - \wh{y_i} \\
c_i &= \frac{\wh{z_i} \cdot z_i'}{z_i \cdot z_i'} z_i - \wh{z_i}
\end{align*}
\end{lemma}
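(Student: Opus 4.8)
The plan is to exhibit the proposed point $\{a_i,b_i,c_i\}$ and establish two things: that it lies in the polytope $Q$, and that it makes the objective in (\ref{eq:optimization}) equal to $0$. Since that objective is a sum of squares it is nonnegative, so any feasible point achieving value $0$ is automatically optimal; this proves the lemma. (We do not need uniqueness here, even though it follows from the strong convexity asserted in the remark.)

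First I would carry out the algebraic cancellation. With the stated choices, $\wh{y_i}+b_i = \frac{\wh{y_i}\cdot y_i'}{y_i\cdot y_i'}\,y_i$, $\wh{z_i}+c_i = \frac{\wh{z_i}\cdot z_i'}{z_i\cdot z_i'}\,z_i$, and $\wh{x_i}+a_i = \frac{\sigma_i}{\wh{\sigma_i}}\cdot\frac{(y_i\cdot y_i')(z_i\cdot z_i')}{(\wh{y_i}\cdot y_i')(\wh{z_i}\cdot z_i')}\,x_i$. Forming the $i$-th rank-one term $\wh{\sigma_i}(\wh{x_i}+a_i)\otimes(\wh{y_i}+b_i)\otimes(\wh{z_i}+c_i)$, the prefactor $\wh{\sigma_i}$ cancels the $1/\wh{\sigma_i}$ and every inner-product ratio telescopes against its reciprocal, so the term equals exactly $\sigma_i\,x_i\otimes y_i\otimes z_i$; summing over $i$, the reconstructed tensor equals $T$ on all of $[n]^3$, hence in particular on $S_\sim$, so the objective value is $0$. (Note all the ratios are insensitive to the sign ambiguity in the choice of $y_i'$ and $z_i'$, so the point is well defined.) The two linear constraints of $Q$ are then immediate substitutions: $b_i\cdot y_i' = \frac{\wh{y_i}\cdot y_i'}{y_i\cdot y_i'}(y_i\cdot y_i') - \wh{y_i}\cdot y_i' = 0$, and likewise $c_i\cdot z_i' = 0$.

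The bulk of the work, and the main obstacle, is verifying the box constraints $\|a_i\|_\infty,\|b_i\|_\infty,\|c_i\|_\infty \le \left(\frac{c\sigma_r}{10n\sigma_1}\right)^{10}$. I would proceed in four steps. (i) Lower bound the relevant inner products: robust linear independence gives $\sigma_{\min}$ of the true factor matrices at least $c$, so Assertion \ref{assertion:parameters} together with Weyl's inequality gives $\sigma_{\min}$ of the matrix $M$ with columns $\wh{y_1},\dots,\wh{y_r}$ at least $c/2$; since $y_i'$ is the unit vector in $\mathrm{col}(M)$ orthogonal to the other columns, the standard identity $\wh{y_i}\cdot y_i' = 1/\sqrt{(M^TM)^{-1}_{ii}}\ge \sigma_{\min}(M)$ yields $|\wh{y_i}\cdot y_i'|\ge c/2$, and the same for $|\wh{z_i}\cdot z_i'|$. (ii) Since $|y_i\cdot y_i' - \wh{y_i}\cdot y_i'| = |(y_i-\wh{y_i})\cdot y_i'| \le \|y_i-\wh{y_i}\|_2 \le \epsilon$ (with $y_i'$ unit), and using $|\sigma_i-\wh{\sigma_i}|\le\epsilon\sigma_1$ with $\wh{\sigma_i}\ge\sigma_r/2$, each of the ratios $\frac{\wh{y_i}\cdot y_i'}{y_i\cdot y_i'}$, $\frac{\wh{z_i}\cdot z_i'}{z_i\cdot z_i'}$, $\frac{\sigma_i}{\wh{\sigma_i}}$ is positive and within $O\!\left(\epsilon\,\poly(1/c,\sigma_1/\sigma_r)\right)$ of $1$. (iii) Writing $b_i = \left(\frac{\wh{y_i}\cdot y_i'}{y_i\cdot y_i'}-1\right)y_i + (y_i-\wh{y_i})$, and analogously for $c_i$, and for $a_i = \left(\frac{\sigma_i}{\wh{\sigma_i}}\cdot\frac{(y_i\cdot y_i')(z_i\cdot z_i')}{(\wh{y_i}\cdot y_i')(\wh{z_i}\cdot z_i')}-1\right)x_i + (x_i-\wh{x_i})$ (the product of three near-unit ratios is still $1+O(\epsilon\,\poly)$), the triangle inequality gives $\|a_i\|_2,\|b_i\|_2,\|c_i\|_2 = O\!\left(\epsilon\,\poly(n,1/c,\sigma_1/\sigma_r)\right)$. (iv) Finally $\|\cdot\|_\infty\le\|\cdot\|_2$, and since $\epsilon = \left(\frac{c\sigma_r}{10n\sigma_1}\right)^{20}$ carries an extra factor of $\left(\frac{c\sigma_r}{10n\sigma_1}\right)^{10}$ beyond the target bound, it dominates the $\poly$ term, so the box constraints hold with room to spare. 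The only genuinely delicate point is step (i) — the pseudoinverse/row-norm identity for $|\wh{y_i}\cdot y_i'|$ combined with the singular-value perturbation bound; the rest is routine estimation.
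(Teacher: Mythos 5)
Your proposal is correct and follows essentially the same route as the paper's proof: plug in the candidate point, observe each rank-one term telescopes to $\sigma_i\, x_i \otimes y_i \otimes z_i$ so the objective is $0$ (hence optimal since the objective is nonnegative), check the orthogonality constraints by substitution, and verify the box constraint by lower bounding $\wh{y_i}\cdot y_i'$ (and $\wh{z_i}\cdot z_i'$) by the smallest singular value of the perturbed factor matrix and using $\epsilon = \left(\frac{c\sigma_r}{10n\sigma_1}\right)^{20}$ for slack. Your step (i) simply spells out, via the $1/\sqrt{(M^TM)^{-1}_{ii}}$ identity and Weyl's inequality, the bound the paper asserts directly ($\wh{y_i}\cdot y_i' \geq c - \epsilon r$).
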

\begin{lemma}\label{lem:tractability}
Assuming that Assertion \ref{assertion:parameters} holds, with $1 - \frac{1}{n^{10}}$ probability over the random sample $S_{\sim}$, the objective function in (\ref{eq:optimization}) is $\frac{\sigma_r^2c^6}{10r} \cdot \frac{\log^2 n}{n^2} \left(\frac{10r\mu}{ c} \cdot \frac{\sigma_1}{\sigma_r}\right)^{10}$-strongly convex.
\end{lemma}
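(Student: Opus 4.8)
The plan is to bound the Hessian of the objective in~\eqref{eq:optimization} directly and to show it is $\succeq \lambda I$ at every point of the feasible polytope, with $\lambda$ the claimed constant. Write $\theta = (a_1,b_1,c_1,\dots,a_r,b_r,c_r)$ for the variables, restricted to the affine subspace cut out by the linear constraints $b_i \cdot y_i' = 0$, $c_i \cdot z_i' = 0$, and set $\wh{T}(\theta) = \sum_{i=1}^r \wh{\sigma_i}(\wh{x_i}+a_i)\otimes(\wh{y_i}+b_i)\otimes(\wh{z_i}+c_i)$, so the objective is $g(\theta) = \norm{(T - \wh{T}(\theta))|_{S_\sim}}_2^2$. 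A direct computation gives
\[
\tfrac12\nabla^2 g(\theta) = J(\theta)^T J(\theta) \;-\; \sum_{(j,k,l)\in S_\sim}\big(T - \wh{T}(\theta)\big)_{jkl}\,\nabla^2\!\big(\wh{T}(\theta)_{jkl}\big),
\]
where $J(\theta)$ is the Jacobian of $\theta \mapsto \wh{T}(\theta)|_{S_\sim}$. Thus it suffices to (i) lower bound $\sigma_{\min}(J(\theta))$ uniformly over the polytope (along feasible directions), and (ii) show the curvature term has operator norm much smaller than $\sigma_{\min}(J(\theta))^2$.

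Step (ii) is the easy part. By Lemma~\ref{lem:correctness} the minimizer $\theta^*$ satisfies $\wh{T}(\theta^*) = T$ \emph{exactly}, and by Assertion~\ref{assertion:parameters} one checks $\theta^*$ lies comfortably inside the polytope. Hence for any polytope point $\theta$, $\norm{(T-\wh{T}(\theta))|_{S_\sim}}_2 \leq \norm{\wh{T}(\theta^*) - \wh{T}(\theta)}_2$, and since $\wh{T}(\cdot)$ is a degree-$3$ polynomial whose ``constant'' pieces $\wh{x_i},\wh{y_i},\wh{z_i}$ are unit vectors while $a_i,b_i,c_i$ have $\ell_2$-norm at most $\sqrt{n}\,\big(\tfrac{c\sigma_r}{10n\sigma_1}\big)^{10}$, a Lipschitz estimate bounds this by $\sigma_1\poly(r)\,n^{-9.5}\big(\tfrac{c\sigma_r}{\sigma_1}\big)^{10}$. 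Each Hessian slice $\nabla^2(\wh{T}(\theta)_{jkl})$ has operator norm $O(\sigma_1 r)$, so the curvature term has operator norm at most $\sigma_1^2\poly(r)\,n^{-9.5}(\tfrac{c\sigma_r}{\sigma_1})^{10}$, far below the target $\lambda = \tfrac{\sigma_r^2 c^6}{10r}\cdot\tfrac{\log^2 n}{n^2}\big(\tfrac{10r\mu}{c}\tfrac{\sigma_1}{\sigma_r}\big)^{10} = \Omega(n^{-2+o(1)})$.

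For step (i), first ignore subsampling and lower bound $\norm{J(\theta)\xi}_2 / \norm{\xi}_2$ over feasible directions $\xi$. The columns of $J(\theta)$ are $\wh{\sigma_i}\,e_m\otimes(\wh{y_i}+b_i)\otimes(\wh{z_i}+c_i)$ and the two mode-analogues; the scaling ``gauge'' of each rank-one term (rescale the three factors by $\lambda,\nu,(\lambda\nu)^{-1}$) — the only source of a nontrivial kernel — is precisely removed by the constraints $b_i\cdot y_i'=0$, $c_i\cdot z_i'=0$, which pin down the components of $\wh{y_i}+b_i$ and $\wh{z_i}+c_i$ along $y_i',z_i'$. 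Using that $\wh{x_i},\wh{y_i},\wh{z_i}$ are robustly linearly independent (being $\e$-close to $x_i,y_i,z_i$ by Assertion~\ref{assertion:parameters}, hence the matrices of columns have smallest singular value close to $c$), one shows no feasible $\xi$ can be nearly annihilated: flatten $J(\theta)\xi$ along each of the three modes, contract against the (perturbed) dual vectors of $\{(\wh{y_i}+b_i)\otimes(\wh{z_i}+c_i)\}_i$, $\{(\wh{x_i}+a_i)\otimes(\wh{z_i}+c_i)\}_i$, $\{(\wh{x_i}+a_i)\otimes(\wh{y_i}+b_i)\}_i$, and control the cross terms via the robust independence, yielding $\norm{J(\theta)\xi}_2 \geq \rho\,\norm{\xi}_2$ with $\rho = \Omega(c^3\sigma_r/\sqrt{r})$. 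This gauge-fixing-plus-identifiability argument is the step I expect to be the main obstacle and the most calculation-heavy.

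Finally, to reinstate subsampling uniformly over the polytope — a net argument over $\theta$ fails because the polytope has dimension $\approx rn$, too large for the matrix-Chernoff exponent — observe that every column of $J(\theta)$ is within operator norm $o(\rho)$ of the corresponding column of the fixed matrix $J_0$ obtained by deleting $a_i,b_i,c_i$ from the factors, whose column space lies in $\mathcal W = \R^n\otimes\wh{V_x}^{} \!\!$... more precisely in the fixed subspace $\mathcal W = \R^n\otimes\wh{V_y}\otimes\wh{V_z} + \wh{V_x}\otimes\R^n\otimes\wh{V_z} + \wh{V_x}\otimes\wh{V_y}\otimes\R^n$. Since $\mathcal W$ has dimension $O(nr^2)$ and $\R^n$ is $1$-incoherent while $\wh{V_x},\wh{V_y},\wh{V_z}$ are incoherent with a polynomial parameter, Claim~\ref{claim:incoherencebound} (plus the fact that a sum of three bounded-dimension incoherent subspaces stays incoherent) makes $\mathcal W$ incoherent with a polynomial parameter. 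Applying the matrix Chernoff bound (Claim~\ref{claim:matrix-chernoff}) to $\mathcal W$ and the density-$p_3/2$ sample $S_\sim$ — using $p_3 \gg (\text{incoherence of }\mathcal W)\cdot\dim(\mathcal W)\cdot\log n / n^3$ — shows that with probability $\geq 1-n^{-10}$ every $w\in\mathcal W$ satisfies $\norm{w|_{S_\sim}}_2^2 \geq (p_3/4)\norm{w}_2^2$. Since $J_0\xi\in\mathcal W$ and $J(\theta)-J_0$ has negligible operator norm, a triangle inequality gives $\norm{(J(\theta)\xi)|_{S_\sim}}_2^2 \geq (p_3/8)\norm{J(\theta)\xi}_2^2 \geq (p_3/8)\rho^2\norm{\xi}_2^2$ for all polytope points $\theta$. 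Combining with step (ii), $\tfrac12\nabla^2 g(\theta)\succeq (p_3/8)\rho^2 I - o(\lambda)I \succeq \lambda I$ after plugging in $\rho = \Omega(c^3\sigma_r/\sqrt r)$, $p_3 = 2\tfrac{\log^2 n}{n^2}(\tfrac{10r\mu}{c}\tfrac{\sigma_1}{\sigma_r})^{10}$, and checking the numerology.
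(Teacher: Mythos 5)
Your overall architecture is sound, but it is organized differently from the paper, and the difference matters. The paper never needs any statement that is uniform over the feasible point $\theta$: it writes the objective as $\norm{D|_{S_{\sim}}}_2^2 + P(a_i,b_i,c_i)$, where $D$ is the linearization at $(\wh{x_i},\wh{y_i},\wh{z_i})$, so the dominant part is a quadratic form whose Hessian is a \emph{fixed} matrix independent of $\theta$; a single application of matrix Chernoff to the rank-one (per-entry) decomposition of that fixed Hessian transfers the full-observation bound (Claim \ref{claim:fullhessianlowerbound}) to the subsampled one, and the entire $\theta$-dependence is swept into $P$, whose Hessian is deterministically of size $O(n^{-\Omega(1)})$ over the tiny polytope. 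Your Gauss--Newton formulation instead has a $\theta$-dependent matrix $J(\theta)$, which is why you need the extra uniformity device (the restricted-isometry statement for the fixed subspace $\mathcal W$ of dimension $O(nr^2)$). That device is essentially correct and is a legitimate alternative to the paper's route, with two caveats: the perturbation $\norm{J(\theta)-J_0}_{\op}$ must be compared against $\sqrt{p_3}\,\rho$ rather than $\rho$ (it does pass, since it is $\lesssim \sigma_1\poly(r)n^{-9}$ while $\sqrt{p_3}\,\rho \sim n^{-1+o(1)}$, but ``$o(\rho)$'' is not the right threshold); and $\mathcal W$ must be built from the spans of the Jennrich estimates $\wh{x_i},\wh{y_i},\wh{z_i}$, which are $O(\mu)$-incoherent because they are inverse-polynomially close to the true factors, \emph{not} from the alternating-minimization subspaces $\wh{V_x},\wh{V_y},\wh{V_z}$, whose incoherence parameter $\mu'$ is a large power of the problem parameters --- with $\mu'$ in the exponent the matrix-Chernoff bound at the stated $p_3$ does not give $1-n^{-10}$. (Also, a sum of incoherent subspaces is not incoherent in general; here it is only because $\mathcal W$ splits orthogonally as $(\wh{V_x}\otimes\wh{V_y}\otimes\wh{V_z})\oplus(\wh{V_x}^{\perp}\otimes\wh{V_y}\otimes\wh{V_z})\oplus(\wh{V_x}\otimes\wh{V_y}^{\perp}\otimes\wh{V_z})\oplus(\wh{V_x}\otimes\wh{V_y}\otimes\wh{V_z}^{\perp})$, each piece being a tensor product of incoherent subspaces.)

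The genuine gap is your step (i), which you yourself flag: the uniform lower bound $\norm{J(\theta)\xi}_2 \geq \Omega(c^3\sigma_r/\sqrt{r})\,\norm{\xi}_2$ along the constraint subspace is asserted, not proved, and it is precisely where the lemma's constant $\frac{\sigma_r^2c^6}{10r}$ comes from (it is the content of the paper's Claim \ref{claim:fullhessianlowerbound}). The sketch as written does not close: contracting the mode-$1$ flattening of $J(\theta)\xi$ against the dual vectors of $\{(\wh{y_i}+b_i)\otimes(\wh{z_i}+c_i)\}_i$ recovers $\wh{\sigma_j}\Delta_{a_j}$ only up to an uncontrolled additive component lying in $\spn(\wh{x_1},\dots,\wh{x_r})$, because the duals do not annihilate the mixed terms $(\wh{x_i}+a_i)\otimes\Delta_{b_i}\otimes(\wh{z_i}+c_i)$ and $(\wh{x_i}+a_i)\otimes(\wh{y_i}+b_i)\otimes\Delta_{c_i}$. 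So the contraction argument at best handles the components of $\Delta_{a_i},\Delta_{b_i},\Delta_{c_i}$ orthogonal to the estimated spans; the in-span components --- which is exactly where the scaling gauge lives --- still require a separate quantitative argument using the constraints $b_i\cdot y_i'=0$, $c_i\cdot z_i'=0$ together with the smallest singular value of the system $\{\wh{x_i}\otimes\wh{y_j}\otimes\wh{z_k}\}$, which is what the paper's decomposition into the four mutually orthogonal sums $A,B,C,D$ accomplishes. Until that computation is carried out (at $\theta=0$ suffices; a trivial perturbation then covers the whole polytope), the proposal establishes the reduction but not the lemma.
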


\begin{lemma}\label{lem:smoothness}
Assuming that Assertion \ref{assertion:parameters} holds, with $1 - \frac{1}{n^{10}}$ probability over the random sample $S_{\sim}$, the objective function in (\ref{eq:optimization}) is $20\sigma_1^2r \cdot \frac{\log^2 n}{n^2} \left(\frac{10r\mu}{ c} \cdot \frac{\sigma_1}{\sigma_r}\right)^{10}$-smooth.
\end{lemma}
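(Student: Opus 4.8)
The plan is to bound, uniformly over the polytope $Q$, the operator norm of the Hessian of the objective in (\ref{eq:optimization}); since that objective is a polynomial in the variables $\{a_i,b_i,c_i\}$, this is the entire content. Set $p'=\frac{\log^2 n}{n^2}\left(\frac{10r\mu}{c}\cdot\frac{\sigma_1}{\sigma_r}\right)^{10}$, so that $S_{\sim}$ reveals each entry independently with probability $p_3/2=p'$. Write $\wh{T}(\Delta)=\sum_{i=1}^r\wh{\sigma_i}(\wh{x_i}+a_i)\otimes(\wh{y_i}+b_i)\otimes(\wh{z_i}+c_i)$, let $\Delta$ abbreviate the variable tuple, and put $g(\Delta)=(\wh{T}(\Delta)-T)\big|_{S_{\sim}}$, so the objective is $f(\Delta)=\norm{g(\Delta)}_2^2$. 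Since each coordinate $\wh{T}(\Delta)_e$ is a cubic polynomial in $\Delta$,
\[
\nabla^2 f=2\sum_{e\in S_{\sim}}\nabla\wh{T}(\Delta)_e\big(\nabla\wh{T}(\Delta)_e\big)^{T}+2\sum_{e\in S_{\sim}}\big(\wh{T}(\Delta)_e-T_e\big)\nabla^2\wh{T}(\Delta)_e .
\]
The first (positive semidefinite) term carries essentially the whole bound, and the second will be shown to be polynomially negligible. This mirrors the proof of Lemma \ref{lem:tractability}, where the same decomposition arises but one instead wants a lower bound on the first term restricted to the tangent space of $Q$, together with an upper bound on the magnitude of the second.

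For the main term, fix any unit vector $v=\{\alpha_i,\beta_i,\gamma_i\}$. Then $v^{T}\big(\sum_{e\in S_{\sim}}\nabla\wh{T}_e\nabla\wh{T}_e^{T}\big)v=\norm{\big(D_v\wh{T}(\Delta)\big)\big|_{S_{\sim}}}_2^2$, where the directional derivative is
\[
D_v\wh{T}(\Delta)=\sum_{i=1}^r\wh{\sigma_i}\Big(\alpha_i\otimes(\wh{y_i}+b_i)\otimes(\wh{z_i}+c_i)+(\wh{x_i}+a_i)\otimes\beta_i\otimes(\wh{z_i}+c_i)+(\wh{x_i}+a_i)\otimes(\wh{y_i}+b_i)\otimes\gamma_i\Big).
\]
Because $\|a_i\|_{\infty},\|b_i\|_{\infty},\|c_i\|_{\infty}\le(c\sigma_r/(10n\sigma_1))^{10}$, up to a tensor of negligible Frobenius norm (which, being restricted further to $S_{\sim}$, only shrinks) $D_v\wh{T}(\Delta)$ equals $\sum_i\wh{\sigma_i}(\alpha_i\otimes\wh{y_i}\otimes\wh{z_i}+\wh{x_i}\otimes\beta_i\otimes\wh{z_i}+\wh{x_i}\otimes\wh{y_i}\otimes\gamma_i)$, which lies in the fixed subspace
\[
\widetilde{W}=\big(\wh{V_x}\otimes\wh{V_y}\otimes\wh{V_z}\big)\oplus\big(\wh{V_x}^{\perp}\otimes\wh{V_y}\otimes\wh{V_z}\big)\oplus\big(\wh{V_x}\otimes\wh{V_y}^{\perp}\otimes\wh{V_z}\big)\oplus\big(\wh{V_x}\otimes\wh{V_y}\otimes\wh{V_z}^{\perp}\big),
\]
a sum of four mutually orthogonal blocks of total dimension $O(nr^2)$. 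By Assertion \ref{assertion:parameters} the subspaces $\wh{V_x},\wh{V_y},\wh{V_z}$ are $O(\mu)$-incoherent (by the argument in the proof of Claim \ref{claim:unfoldedcondition}), their orthogonal complements are trivially $O(1)$-incoherent, and hence by Claim \ref{claim:incoherencebound} applied blockwise $\widetilde{W}$ is $O(\mu^2)$-incoherent. Since $p'$ lies far above the threshold $\mu_{\widetilde W}\dim(\widetilde W)/n^3$, a matrix Chernoff bound (as in Claim \ref{claim:matrix_inv} and Claim \ref{claim:unfoldedcondition}, now using the upper-tail bound Claim \ref{claim:matrix-chernoff2}) gives, with probability $1-n^{-10}$, $\norm{X|_{S_{\sim}}}_2^2\le 2p'\norm{X}_2^2$ for every $X\in\widetilde{W}$. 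Finally $\norm{D_v\wh{T}(\Delta)}_2\le(1+o(1))\sum_i\wh{\sigma_i}(\|\alpha_i\|+\|\beta_i\|+\|\gamma_i\|)\le(1+o(1))\sigma_1\sqrt{3r}$ by the triangle inequality and Cauchy--Schwarz, so
\[
v^{T}\Big(2\sum_{e\in S_{\sim}}\nabla\wh{T}_e\nabla\wh{T}_e^{T}\Big)v\le\big(12+o(1)\big)\,\sigma_1^2 r\,p' .
\]

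For the error term: Assertion \ref{assertion:parameters} gives $\|\wh{T}-T\|_2=O(\sigma_1 r\epsilon)$ with $\epsilon=(c\sigma_r/(10n\sigma_1))^{20}$, while moving within $Q$ perturbs each coordinate of $\wh{T}(\Delta)$ by $O(\sigma_1\mu r^2/n)(c\sigma_r/(10n\sigma_1))^{10}$, so $|\wh{T}(\Delta)_e-T_e|=O(\sigma_1 n^{-9})$ uniformly over $\Delta\in Q$ and $e$. Each $\nabla^2\wh{T}(\Delta)_e$ is a direct sum over $i$ of $3\times3$ hollow matrices on the coordinates $(a_i)_{j_1},(b_i)_{j_2},(c_i)_{j_3}$ (for $e=(j_1,j_2,j_3)$) with entries $O(\sigma_1\sqrt{\mu r/n})$, so $\norm{\nabla^2\wh{T}(\Delta)_e}_{\op}=O(\sigma_1\sqrt{\mu r/n})$, and $|S_{\sim}|=O(p' n^3)=O\big(n\log^2 n\cdot(10r\mu/c\cdot\sigma_1/\sigma_r)^{10}\big)$ with high probability; hence the second term has operator norm $O\big(\sigma_1^2 n^{-9}\cdot n^{3/2}\cdot\poly(r,\log n,\mu,1/c,\sigma_1/\sigma_r)\big)=o(\sigma_1^2 r\,p')$. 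Combining the two terms, union bounding over the $O(n^{-10})$-probability failure events, and invoking continuity of $\nabla^2 f$ in $\Delta$ (so that controlling it at every $v$ and every $\Delta\in Q$ suffices) yields the claimed $20\sigma_1^2 r\,p'$-smoothness. The one substantive step is the uniform-in-$v$ control of the main term, i.e.\ exhibiting the fixed, polynomially incoherent subspace $\widetilde{W}$ containing all directional derivatives so the matrix Chernoff bound applies at the sampling rate $p'=p_3/2$; everything else is bookkeeping of the same kind already carried out for the projection step (Lemma \ref{lem:projectionstep}) and for Lemma \ref{lem:tractability}.
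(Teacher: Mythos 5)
Your proof is correct, but it organizes the argument differently from the paper. The paper writes the objective as $\norm{D|_{S_{\sim}}}_2^2 + P(a_i,b_i,c_i)$, where $D$ is the part of the residual that is \emph{linear} in the variables; the Hessian of $\norm{D|_{S_{\sim}}}_2^2$ is then a sum of independent rank-one matrices indexed by the observed entries, each of operator norm $O(\mu^2 r^4\sigma_1^2/n^2)$ by entrywise incoherence, whose expectation is $p'$ times the full-sample Hessian $H_0$; Claim \ref{claim:fullhessianupperbound} bounds $\lambda_{\max}(H_0)\le 5\sigma_1^2 r$ and the upper-tail matrix Chernoff bound (Claim \ref{claim:matrix-chernoff2}) is applied \emph{directly to this rank-one sum}, after which the Hessian of $P$ is absorbed by its tiny coefficients. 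You instead use the exact Gauss--Newton decomposition of $\nabla^2 f$, show that every directional derivative lies (up to a negligible tensor) in the fixed incoherent subspace $\widetilde{W}$, prove a uniform restricted-isometry-type bound $\norm{X|_{S_{\sim}}}_2^2\le 2p'\norm{X}_2^2$ on $\widetilde{W}$ via matrix Chernoff, and bound the full-sample directional derivative by the same triangle-inequality/Cauchy--Schwarz computation that constitutes Claim \ref{claim:fullhessianupperbound}, with the residual-times-curvature term handled crudely. The two routes rest on the same pillars (the $O(\sigma_1^2 r)$ full-sample bound, matrix Chernoff plus incoherence, and negligibility of higher-order/residual contributions), but the paper's version is leaner in that it never needs $\widetilde{W}$ or its incoherence, while yours buys uniformity in the direction $v$ and in $\Delta\in Q$ by construction and parallels the subspace-sampling argument of Claim \ref{claim:unfoldedcondition}. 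Two harmless inaccuracies: the argument you cite for the incoherence of $\wh{V_x},\wh{V_y},\wh{V_z}$ yields $O(\mu r)$ rather than $O(\mu)$ (immaterial given the margin in $p'$), and in the curvature term $|S_{\sim}|\cdot O(\sigma_1\sqrt{\mu r/n})$ gives an extra factor of order $n^{1/2}$, not $n^{3/2}$, times polylogarithmic quantities --- either way it is dwarfed by $\sigma_1^2 r p'$, so the stated $20\sigma_1^2 r p'$ smoothness bound follows.
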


\noindent First we demonstrate why these lemmas are enough to finish the proof of Theorem \ref{thm:main}.

\begin{proof}[Proof of Theorem \ref{thm:main}]
Note that for the solution stated in Lemma \ref{lem:correctness}, the value of the objective in (\ref{eq:optimization}) is $0$ and thus the solution is a local minimum.  Lemma \ref{lem:tractability} implies that the optimization problem is strongly convex and thus this is actually the global minimum.  Also since the ratio of the strong convexity and smoothness parameters is $\frac{200\sigma_1^2r^2}{\sigma_r^2c^6}$, the optimization can be solved efficiently (see \cite{convexopt}).  For the solution in Lemma \ref{lem:correctness}, the output of our {\sc Full Exact Tensor Completion Algorithm} is exactly $T$.  Thus combining Theorem \ref{thm:init}, Corollary \ref{coro:altminfinal}, and  Theorem \ref{thm:parameterestimates} with Lemma \ref{lem:correctness}, Lemma \ref{lem:tractability} and Lemma \ref{lem:smoothness}, we are done.
\end{proof}

\subsection{The True Solution Satisfies the Constraints}
First we show that the true solution $T$ can be recovered while satisfying the constraints.  
\begin{proof}[Proof of Lemma \ref{lem:correctness}]
When
\begin{align*}
a_i &= \frac{\sigma_i}{\wh{\sigma_i}} \frac{(y_i \cdot y_i') (z_i \cdot z_i')}{(\wh{y_i} \cdot y_i')(\wh{z_i} \cdot z_i')}x_i - \wh{x_i} \\
b_i &= \frac{\wh{y_i} \cdot y_i'}{y_i \cdot y_i'} y_i - \wh{y_i} \\
c_i &= \frac{\wh{z_i} \cdot z_i'}{z_i \cdot z_i'} z_i - \wh{z_i}
\end{align*}
then the value of the objective is $0$ and we exactly recover $T$.  It remains to show that this solution satisfies the constraints.  It is immediate that the second constraint is satisfied.  We now verify that the first constraint is also satisfied.  Note that the smallest singular value of $V_y$ is at least $c$.  Thus the smallest singular value of the matrix with columns $\wh{y_1}, \dots , \wh{y_r}$ is at least $c - \epsilon r$.  In particular $\wh{y_i} \cdot y_i'$ must be at least $c - \epsilon r$.  Also the difference between $\wh{y_i} \cdot y_i'$ and $y_i \cdot y_i'$ is at most $\epsilon$.  Thus
\[
 1 - \frac{2 \epsilon r}{c} \leq \frac{\wh{y_i} \cdot y_i'}{y_i \cdot y_i'} \leq  1 + \frac{2 \epsilon r}{c}
\]
Combining this with the fact that $\frac{|\sigma_i - \widehat{\sigma_i}|}{\sigma_1}, ||x_i - \widehat{x_i}||, ||y_i - \widehat{y_i}||, ||z_i - \widehat{z_i}|| \leq \epsilon$, it is clear that the first constraint is satisfied.
\end{proof}
\subsection{The Optimization Problem is Strongly Convex}\label{sec:strongconvexity}
To show that the optimization problem is strongly convex, we will compute the Hessian of the objective function.  Let $m$ be the magnitude of the largest entry of 
\[
T - \sum_{i=1}^r \wh{\sigma_i}\wh{x_i} \otimes \wh{y_i} \otimes \wh{z_i}
\]
Note $m \leq 10r\sigma_1\epsilon$ where $\epsilon = \left(\frac{c\sigma_r}{10n\sigma_1}\right)^{20}$. 
\\\\
Next, let $\wh{\sigma} = \max(\wh{\sigma_1}, \dots , \wh{\sigma_r})$.  Note $\wh{\sigma} \leq 2\sigma_1$.  Also define 
\[
D = \sum_{i = 1}^r  \wh{\sigma_i}a_i \otimes \wh{y_i} \otimes \wh{z_i} + \wh{\sigma_i}\wh{x_i} \otimes b_i \otimes \wh{z_i} + \wh{\sigma_i}\wh{x_i} \otimes \wh{y_i} \otimes c_i
\]
Note that the objective function can be written as 
\begin{equation}\label{eq:approx}
||D|_{S_{\sim}}||_2^2 + P(a_i,b_i,c_i)
\end{equation}
where $P$ is a polynomial with the following property: all terms of $P$ of degree $2$ have coefficients with magnitude at most $(10nr)^6m\wh{\sigma}$ and all coefficients for higher degree terms have magnitude at most $(10nr)^6\wh{\sigma}^2$.  Now to prove strong convexity we will lower bound the smallest singular value of the Hessian of  $||D|_{S_{\sim}}||_2^2$ (with respect to the variables $a_i,b_i,c_i$).  Since $m,a_i,b_i,c_i$ are all small, we can ensure that the contribution of $P$ does not affect the strong convexity and this will complete the proof.

\subsubsection{Understanding the Hessian when $S_{\sim}$ contains all entries}
First we consider $H_0$, the Hessian of $||D||_2^2$ i.e. when we are not restricted to the set of entries in $S_{\sim}$.
\begin{claim}\label{claim:fullhessianlowerbound}
The smallest eigenvalue of $H_0$ is at least $\frac{\sigma_r^2c^6}{2r}$ 
\end{claim}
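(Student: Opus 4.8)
The object $D$ is a linear map from the parameter space (the variables $a_i, b_i, c_i \in \R^n$, with the linear constraints $b_i \cdot y_i' = 0$, $c_i \cdot z_i' = 0$ incorporated) to $\R^{n^3}$, given by
\[
D = \sum_{i=1}^r \wh{\sigma_i}\, a_i \otimes \wh{y_i} \otimes \wh{z_i} + \wh{\sigma_i}\, \wh{x_i} \otimes b_i \otimes \wh{z_i} + \wh{\sigma_i}\, \wh{x_i} \otimes \wh{y_i} \otimes c_i.
\]
The Hessian $H_0$ of $\norm{D}_2^2$ is exactly $2 L^T L$ where $L$ is this linear map, so its smallest eigenvalue equals $2\sigma_{\min}(L)^2$ over the constrained subspace. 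So the whole claim reduces to lower bounding $\norm{D}_2$ from below by (a constant times) $\frac{\sigma_r c^3}{\sqrt r}$ times the norm of the parameter vector $(a_i,b_i,c_i)_i$.

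**Key steps.** First I would reduce to the true parameters: since Assertion~\ref{assertion:parameters} gives $\norm{x_i - \wh{x_i}}, \norm{y_i - \wh{y_i}}, \norm{z_i - \wh{z_i}}, |\sigma_i - \wh{\sigma_i}|/\sigma_1 \le \epsilon$ with $\epsilon$ inverse-polynomially tiny, I can replace $\wh{x_i}, \wh{y_i}, \wh{z_i}, \wh{\sigma_i}$ by $x_i, y_i, z_i, \sigma_i$ in the expression for $D$, incurring only a relative perturbation of size $O(r\epsilon)$ in $L$, which affects $\sigma_{\min}(L)^2$ by a negligible multiplicative factor. Thus it suffices to show that
\[
\Norm{\sum_{i=1}^r \sigma_i\big(a_i \otimes y_i \otimes z_i + x_i \otimes b_i \otimes z_i + x_i \otimes y_i \otimes c_i\big)}_2^2 \ \gtrsim\ \frac{\sigma_r^2 c^6}{r}\sum_{i=1}^r \big(\norm{a_i}^2 + \norm{b_i}^2 + \norm{c_i}^2\big),
\]
under the constraints. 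Second, I would exploit the three unfoldings. Unfolding along the first mode, the $a_i$ contribution is $M_a \cdot \mathrm{diag}(\sigma_i) \cdot (\text{rows } y_i \otimes z_i)$ where $M_a$ has columns $a_i$, while the $b_i$ and $c_i$ contributions, when unfolded this way, lie in $V_x \otimes (\cdot)$ — in fact the $b_i$ term unfolds to $\sum_i \sigma_i x_i (b_i \otimes z_i)^T$ and the $c_i$ term to $\sum_i \sigma_i x_i (y_i \otimes c_i)^T$, both with row space inside $V_x \otimes \R^n$ (mode-1 factor exactly $x_i$). The point of the constraint $b_i \cdot y_i' = 0$ (and its $c$ analogue) is precisely to make the three groups of terms "non-interfering" when we project appropriately: $y_i'$ is dual to $\wh{y_i}$, so $b_i \perp y_i'$ means $b_i$ has no component that would let $x_i \otimes b_i \otimes z_i$ masquerade as $x_i \otimes y_i \otimes (\cdot)$. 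I would make this precise by: (i) projecting the mode-1 unfolding onto the orthogonal complement of $V_x$ within the first factor to isolate $\sum_i \sigma_i a_i' \otimes y_i \otimes z_i$ where $a_i'$ is the component of $a_i$ orthogonal to $V_x$ — using robust linear independence of $\{y_i\}$ and $\{z_i\}$ (Claim~\ref{claim:nondegenerate}-style reasoning: the rows $y_i \otimes z_i$ have $r$th singular value $\ge c^2$) to get $\norm{\cdot}^2 \gtrsim \sigma_r^2 c^4 \sum_i \norm{a_i'}^2$; and (ii) symmetrically for $b$ and $c$; and then (iii) handling the components of $a_i$ inside $V_x$ (and similarly for $b_i, c_i$) by a separate argument — here is where the constraints matter, together with linear independence — showing that even these components contribute, since $\mathrm{diag}(\sigma_i)$ times Khatri–Rao-type structure is well-conditioned. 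A clean way to organize (iii): write each $a_i = a_i^\parallel + a_i^\perp$ with respect to $V_x$, note $\sum_i \sigma_i a_i^\parallel \otimes y_i \otimes z_i$ has all three factors "aligned" so it unfolds to $M_x \cdot (\text{something}) \cdot (\text{KR of } M_y, M_z)$, and combined with the $b, c$ terms (which after the constraint cannot cancel it in the right coordinates) we again invoke that $M_x$, $M_y$, $M_z$ have smallest singular value $\ge c$.

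**Main obstacle.** The genuinely delicate part is step (iii) — showing that the contributions cannot destructively cancel each other. The three families of rank-one terms $\{a_i^\parallel \otimes y_i \otimes z_i\}$, $\{x_i \otimes b_i^\parallel \otimes z_i\}$, $\{x_i \otimes y_i \otimes c_i^\parallel\}$ (parallel components w.r.t. the respective subspaces) all live close to $V_x \otimes V_y \otimes V_z$ and a priori could partially annihilate. The constraints $b_i \cdot y_i' = 0$, $c_i \cdot z_i' = 0$ are exactly what rules this out: $y_i'$ being the dual basis vector to $\wh{y_i}$ means the map $b_i \mapsto$ (coefficient of $\wh{y_i}$ in $b_i$) is killed, so no $b_i$ term can produce a multiple of $x_i \otimes y_i \otimes z_i$. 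I expect the right approach is to set up, for each fixed "diagonal index" $i$, a $3 \times 3$ (or slightly larger) Gram-type block matrix capturing the inner products among the three families restricted to index $i$, show it is well-conditioned given the constraints plus $\sigma_r/\sigma_1$ and $c$ bounds, and then argue the off-diagonal-in-$i$ interactions are controlled by the linear independence of $\{x_i\}$, $\{y_i\}$, $\{z_i\}$. Carefully bookkeeping the loss factors to land at the stated $\frac{\sigma_r^2 c^6}{2r}$ (rather than something worse) is the part that will require the most care, though the exponents are generous enough that I would not expect it to be tight.
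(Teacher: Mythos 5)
Your reduction of the claim to a smallest-singular-value bound for the linear map $(a_i,b_i,c_i)\mapsto D$, and your treatment of the perpendicular components via mode unfoldings (each slice of, say, the $a$-perpendicular part is a combination of $\wh{y_i}\otimes \wh{z_i}$, whose $r$th singular value is at least roughly $c^2$, giving $\sigma_r^2c^4\sum_i\norm{a_i^{\perp}}^2$), match the paper's proof. But the step you yourself flag as the main obstacle \--- lower bounding the contribution of the components parallel to $\spn(\wh{x_1},\dots,\wh{x_r})$, $\spn(\wh{y_1},\dots,\wh{y_r})$, $\spn(\wh{z_1},\dots,\wh{z_r})$ \--- is exactly the heart of the claim (it is the whole reason the polytope $Q$ exists), and you leave it as a plan rather than a proof. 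Moreover the plan you sketch, a well-conditioned $3\times 3$ Gram block for each diagonal index $i$ plus "off-diagonal-in-$i$ interactions controlled by linear independence," is doubtful in the regime the paper targets: the factors are only robustly linearly independent with parameter $c$ that can be as small as $n^{-\delta}$, i.e.\ the factors may be highly correlated, so cross-index inner products such as $\langle \wh{x_i},\wh{x_j}\rangle$ can be near $1$ and the cross-index interactions are not small; a diagonally-dominant/perturbative argument of that shape would not go through, and in any case you have not supplied it. This is a genuine gap.

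The paper's resolution is structural and needs no Gram or cancellation analysis. Expand each parallel component in the (non-orthogonal but well-conditioned) bases, e.g.\ $\Delta_{a_i}^- = a_i^{(1)}\wh{x_1}+\dots+a_i^{(r)}\wh{x_r}$, noting the coefficient mass is at least $\norm{\Delta_{a_i}^-}^2/r$ since the factor matrices have operator norm at most $\sqrt{r}$. The constraints $b_i\cdot y_i'=0$ and $c_i\cdot z_i'=0$, with $y_i',z_i'$ the dual vectors, force the coefficients $b_i^{(i)}=c_i^{(i)}=0$, so when the full parallel part is expanded, each basis tensor of the forms $\wh{x_j}\otimes\wh{y_i}\otimes\wh{z_i}$, $\wh{x_i}\otimes\wh{y_j}\otimes\wh{z_i}$, $\wh{x_i}\otimes\wh{y_i}\otimes\wh{z_j}$ carries at most one coefficient \--- there is literally nothing available to cancel. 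Since these tensors are a subset of the columns of the Kronecker product of the three (hatted) factor matrices, whose smallest singular value is at least $(c-r\epsilon)^3$, one directly gets that the parallel contribution has squared Frobenius norm at least $(c-r\epsilon)^6\min(\wh{\sigma_1},\dots,\wh{\sigma_r})^2\sum_{i=1}^r\bigl(\norm{\Delta_{a_i}^-}^2+\norm{\Delta_{b_i}^-}^2+\norm{\Delta_{c_i}^-}^2\bigr)/r$; together with the pairwise-orthogonal perpendicular pieces this yields the stated $\sigma_r^2c^6/(2r)$ bound. That single observation \--- the constraints make the coefficient-to-tensor map collision-free, so a single global singular-value bound on the Kronecker matrix applies \--- is the idea missing from your write-up (and it also makes your preliminary hat-removal reduction unnecessary, since the paper just carries the $(c-r\epsilon)$ factors).
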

\begin{proof}
Consider a directional vector $v = \left(\Delta_{a_1}, \Delta_{b_1}, \Delta_{c_1}, \dots , \Delta_{a_r}, \Delta_{b_r}, \Delta_{c_r} \right)$.  Then 
\[
||v^TH_0v||_2^2 = \norm{\sum_{i = 1}^r ( \wh{\sigma_i}\Delta_{a_i} \otimes \wh{y_i} \otimes \wh{z_i} + \wh{\sigma_i}\wh{x_i} \otimes \Delta_{b_i} \otimes \wh{z_i} + \wh{\sigma_i}\wh{x_i} \otimes \wh{y_i} \otimes \Delta_{c_i})  }_2^2
\]
We now lower bound the RHS.  For each $i$, let $\Delta_{a_i}^{-}$ be the projection of $\Delta_{a_i}$ onto $\spn(\wh{x_1}, \dots , \wh{x_r})$ and let $\Delta_{a_i}^{\perp}$ be the projection of $\Delta_{a_i}$ onto the orthogonal complement of $\spn(\wh{x_1}, \dots , \wh{x_r})$.  Define $\Delta_{b_i}^-, \Delta_{b_i}^{\perp}, \Delta_{c_i}^-, \Delta_{c_i}^{\perp}$ similarly.  We want to lower bound the squared Frobenius norm of
\begin{align*}
&\sum_{i = 1}^r (\wh{\sigma_i} \Delta_{a_i} \otimes \wh{y_i} \otimes \wh{z_i} + \wh{\sigma_i}\wh{x_i} \otimes \Delta_{b_i} \otimes \wh{z_i} + \wh{\sigma_i}\wh{x_i} \otimes \wh{y_i} \otimes \Delta_{c_i})  =\\ & \sum_{i = 1}^r (\wh{\sigma_i}\Delta_{a_i}^- \otimes \wh{y_i} \otimes \wh{z_i} + \wh{\sigma_i}\wh{x_i} \otimes \Delta_{b_i}^- \otimes \wh{z_i} + \wh{\sigma_i}\wh{x_i} \otimes \wh{y_i} \otimes \Delta_{c_i}^-) \\ &+ \sum_{i = 1}^r \wh{\sigma_i}\Delta_{a_i}^{\perp} \otimes \wh{y_i} \otimes \wh{z_i} + \sum_{i = 1}^r \wh{\sigma_i}\wh{x_i} \otimes \Delta_{b_i}^{\perp} \otimes \wh{z_i} + \sum_{i = 1}^r \wh{\sigma_i}\wh{x_i} \otimes \wh{y_i} \otimes \Delta_{c_i}^{\perp}
\end{align*}
Let the four sums above be $A,B,C,D$ respectively. $A,B,C,D$ are pairwise orthogonal.  Thus it suffices to lower bound the Frobenius norm of each of them individually.  First we lower bound the Frobenius norm of $A$.  Since $\Delta_{a_i}^-$ is in $\spn(\wh{x_1}, \dots , \wh{x_r})$, it can be written as a linear combination of $\wh{x_1}, \dots , \wh{x_r}$ say
\[
\Delta_{a_i}^- = a_i^{(1)}\wh{x_1} + \dots + a_i^{(r)}\wh{x_r}
\]
Furthermore 
\[
\left(a_i^{(1)}\right)^2+ \dots +  \left( a_i^{(r)}\right)^2 \geq \frac{||\Delta_{a_i}^-||_2^2}{r}
\]
We can use the same argument for $\Delta_{b_i}^-, \Delta_{c_i}^-$.  Also, since in our optimization problem we have the constraints $b_i \cdot y_i' = 0, c_i \cdot z_i' = 0$, we know that the coefficients $b_i^{(i)}, c_i^{(i)}$ are $0$.  Thus we can write $A$ as a sum
\[
A = \sum_{i=1}^r \sum_{j=1}^r\wh{\sigma_i} a_i^{(j)}\wh{x_j} \otimes \wh{y_i} \otimes \wh{z_i} + \sum_{i=1}^r \sum_{j=1, j \neq i}^r \wh{\sigma_i}\wh{x_i} \otimes b_i^{(j)}\wh{y_j} \otimes \wh{z_i} + \sum_{i=1}^r \sum_{j=1, j \neq i }^r \wh{\sigma_i}\wh{x_i} \otimes \wh{y_i} \otimes c_i^{(j)}\wh{z_j}
\] 
Note that the above is a linear combination of terms of the form $\wh{x_i} \otimes \wh{y_j} \otimes \wh{z_k}$ and each term appears at most once.  Furthermore, the sum of the squares of the coefficients is at least
\[
\sum_{i=1}^r\min(\wh{\sigma_1}, \dots , \wh{\sigma_r})^2 \left(  \frac{||\Delta_{a_i}^-||_2^2}{r} + \frac{||\Delta_{b_i}^-||_2^2}{r} + \frac{||\Delta_{c_i}^-||_2^2}{r} \right)
\]
Next, observe that the smallest singular value of the matrix with columns given by $\wh{x_i} \otimes \wh{y_j} \otimes \wh{z_k}$ for $1 \leq i,j,k \leq r$ is at least $(c - r\epsilon)^3$.  Thus
\[
||A||_2^2 \geq (c - r\epsilon)^6  \sum_{i=1}^r\min(\wh{\sigma_1}, \dots , \wh{\sigma_r})^2 \left(  \frac{||\Delta_{a_i}^-||_2^2}{r} + \frac{||\Delta_{b_i}^-||_2^2}{r} + \frac{||\Delta_{c_i}^-||_2^2}{r} \right)
\]
Now we lower bound the squared Frobenius norm of $B$.  Each slice of $B$ is a linear combination of $\wh{y_1} \otimes \wh{z_1}, \dots , \wh{y_r} \otimes \wh{z_r}$.  Note the matrix with columns given by $\wh{y_i} \otimes \wh{z_j}$ for $1 \leq i,j \leq r$ has smallest singular value at least $(c -r \epsilon)^2$.  Thus if we let $\Delta_{a_i}^{\perp [j]}$ be the $j$\ts{th} entry of $\Delta_{a_i}^{\perp}$ then the sum of the squares of the entries in the $j$\ts{th} layer of $B $ is at least
\[
\min(\wh{\sigma_1}, \dots , \wh{\sigma_r})^2(c - r\epsilon)^4\left( \left(\Delta_{a_1}^{\perp [j]}\right)^2 + \dots + \left(\Delta_{a_r}^{\perp [j]}\right)^2\right).
\]
Overall we get
\begin{align*}
||B||_2^2 \geq \sum_{j=1}^n \min(\wh{\sigma_1}, \dots , \wh{\sigma_r})^2(c - r\epsilon)^4\left( \left(\Delta_{a_1}^{\perp [j]}\right)^2 + \dots + \left(\Delta_{a_r}^{\perp [j]}\right)^2\right) \\ \geq  \min(\wh{\sigma_1}, \dots , \wh{\sigma_r})^2 (c - r\epsilon)^4\sum_{i=1}^r ||\Delta_{a_i}^{\perp}||_2^2.
\end{align*}
Similarly
\begin{align*}
||C||_2^2 &\geq  \min(\wh{\sigma_1}, \dots , \wh{\sigma_r})^2(c - r\epsilon)^4\sum_{i=1}^r ||\Delta_{b_i}^{\perp}||_2^2 \\ 
||D||_2^2 &\geq  \min(\wh{\sigma_1}, \dots , \wh{\sigma_r})^2(c - r\epsilon)^4\sum_{i=1}^r ||\Delta_{c_i}^{\perp}||_2^2.
\end{align*}
Overall we have
\[
||v^TH_0v||_2^2 = ||A||_2^2 + ||B||_2^2 + ||C||_2^2 + ||D||_2^2 \geq \frac{\min(\wh{\sigma_1}, \dots , \wh{\sigma_r})^2 (c - r\epsilon)^6}{r}||v||_2^2 \geq \frac{\sigma_r^2c^6}{2r}||v||_2^2
\]
and we get that the smallest eigenvalue of $H_0$ is at least $\frac{\sigma_r^2c^6}{2r}$.
\end{proof}
\subsubsection{Understanding the Hessian when $|S_{\sim}|$ is small}
To prove Lemma \ref{lem:tractability}, we want to go from a bound on the Hessian of $||D||_2^2$ to a bound on the Hessian of $||D|_{S_{\sim}}||_2^2$.  We will then use the fact that the Hessian of $P(a_i,b_i,c_i)$ is small and cannot substantially affect the strong convexity.  
\begin{proof}[Proof of Lemma \ref{lem:tractability}]
Note that $||D||_2^2$ is a sum of $n^3$ terms each of which is the square of a linear function (corresponding to an entry).  Each of these terms contributes a rank-$1$ term to the Hessian.  Furthermore, since all entries of $\wh{x_i}, \wh{y_i}, \wh{z_i}$ are at most $\sqrt{\frac{\mu r}{n}} + \epsilon$, the operator norm of each of these rank $1$ terms is at most $9\wh{\sigma}^2r^2\left(\sqrt{\frac{\mu r}{n}} + \epsilon\right)^4  \leq \frac{10r^4\mu^2 \sigma_1^2}{n^2}$.  
\\\\
If we add each entry to $S_{\sim}$ with probability $p = \frac{\log^2 n}{n^2} (\frac{10r\mu}{ c} \cdot \frac{\sigma_1}{\sigma_r})^{10}$ then by Claim \ref{claim:matrix-chernoff}, with at least $1 - \frac{1}{n^{10}}$ probability, the sum of the rank $1$ terms corresponding to entries of $S$ has smallest singular value at least $\frac{\sigma_r^2c^6}{4r} p$.
\\\\
We have shown that with high probability, the smallest eigenvalue of the Hessian of $||D|_{S_{\sim}}||_2^2$ is at least $\frac{\sigma_r^2c^6}{4r}p$.  It remains to note that the Hessian of $P(a_i,b_i,c_i)$ has operator norm at most $\frac{\sigma_r^2c^6}{10rn^2}$ for all $a_i,b_i,c_i$ in the feasible set and thus the optimization problem we formulated is strongly convex with parameter
\[
\frac{\sigma_r^2c^6}{10r} p = \frac{\sigma_r^2c^6}{10r} \cdot \frac{\log^2 n}{n^2} \left(\frac{10r\mu}{ c} \cdot \frac{\sigma_1}{\sigma_r}\right)^{10}
\]
\end{proof}

\subsection{The Optimization Problem is Smooth}
The proof that the objective function is smooth will follow a similar approach to that in Section \ref{sec:strongconvexity}.  We use the same notation as the previous section.  Again, the first step will be to consider the Hessian $H_0$ of $||D||_2^2$ when we are not restricted to the set of entries in $S_{\sim}$.

\begin{claim}\label{claim:fullhessianupperbound}
The largest eigenvalue of $H_0$ is at most $5\sigma_1^2r$.
\end{claim}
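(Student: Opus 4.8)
The plan is to mirror the proof of Claim~\ref{claim:fullhessianlowerbound}, but replacing the lower bound on the quadratic form $v^T H_0 v$ by an upper bound obtained from the triangle inequality. Recall that for a direction $v = (\Delta_{a_1}, \Delta_{b_1}, \Delta_{c_1}, \dots, \Delta_{a_r}, \Delta_{b_r}, \Delta_{c_r})$ one has
\[
v^T H_0 v = \norm{\sum_{i=1}^r \bigl( \wh{\sigma_i}\Delta_{a_i} \otimes \wh{y_i} \otimes \wh{z_i} + \wh{\sigma_i}\wh{x_i} \otimes \Delta_{b_i} \otimes \wh{z_i} + \wh{\sigma_i}\wh{x_i} \otimes \wh{y_i} \otimes \Delta_{c_i}\bigr)}_2^2 ,
\]
so it suffices to bound the $\ell_2$-norm inside. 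Writing this expression as $D_1 + D_2 + D_3$, where $D_1 = \sum_i \wh{\sigma_i}\Delta_{a_i} \otimes \wh{y_i} \otimes \wh{z_i}$ and $D_2, D_3$ are defined analogously, the triangle inequality gives $\norm{D_1 + D_2 + D_3}_2 \le \norm{D_1}_2 + \norm{D_2}_2 + \norm{D_3}_2$.

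To bound $\norm{D_1}_2$, unfold $D_1$ along its first mode: the resulting $n \times n^2$ matrix equals $M_{\Delta a}\,\widehat{\Sigma}\,N_1^T$, where $M_{\Delta a}$ has columns $\Delta_{a_1}, \dots, \Delta_{a_r}$, $\widehat{\Sigma} = \operatorname{diag}(\wh{\sigma_1}, \dots, \wh{\sigma_r})$, and $N_1$ is the $n^2 \times r$ matrix with columns $\wh{y_1} \otimes \wh{z_1}, \dots, \wh{y_r} \otimes \wh{z_r}$. Since $\norm{D_1}_2$ is exactly the Frobenius norm of this unfolding, submultiplicativity yields $\norm{D_1}_2 \le \norm{M_{\Delta a}}_F \cdot \norm{\widehat\Sigma}_{\op} \cdot \norm{N_1}_{\op}$. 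Here $\norm{\widehat\Sigma}_{\op} = \wh\sigma := \max_i \wh{\sigma_i}$, and because the columns of $N_1$ are unit vectors, $\norm{N_1}_{\op} \le \norm{N_1}_F = \sqrt r$. Thus $\norm{D_1}_2 \le \wh\sigma\,\sqrt r\,\bigl(\sum_i \norm{\Delta_{a_i}}_2^2\bigr)^{1/2}$, and the analogous bounds hold for $\norm{D_2}_2$ and $\norm{D_3}_2$ by unfolding along the second and third modes, with $\Delta_{b_i}$ and $\Delta_{c_i}$ in place of $\Delta_{a_i}$.

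Finally, combining the three bounds and applying Cauchy--Schwarz in the form $(\sqrt A + \sqrt B + \sqrt C)^2 \le 3(A+B+C)$ with $A = \sum_i\norm{\Delta_{a_i}}_2^2$, $B = \sum_i\norm{\Delta_{b_i}}_2^2$, $C = \sum_i\norm{\Delta_{c_i}}_2^2$, we obtain
\[
v^T H_0 v \le 3\,\wh\sigma^2\, r\,\bigl(\textstyle\sum_i \norm{\Delta_{a_i}}_2^2 + \norm{\Delta_{b_i}}_2^2 + \norm{\Delta_{c_i}}_2^2\bigr) = 3\,\wh\sigma^2\, r\, \norm{v}_2^2 .
\]
By Assertion~\ref{assertion:parameters} we have $\wh\sigma \le (1+\epsilon)\sigma_1$ with $\epsilon = \left(\frac{c\sigma_r}{10n\sigma_1}\right)^{20}$ negligibly small, so $3\wh\sigma^2 \le 5\sigma_1^2$, and hence $v^T H_0 v \le 5\sigma_1^2 r \norm{v}_2^2$ for every direction $v$; this proves that the largest eigenvalue of $H_0$ is at most $5\sigma_1^2 r$. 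The argument is entirely routine; the only point that needs a little care is the constant-tracking --- in particular one must use the sharp bound $\wh\sigma \le (1+\epsilon)\sigma_1$ rather than the cruder $\wh\sigma \le 2\sigma_1$ used in the strong-convexity section, together with the elementary estimate $\norm{N_1}_{\op} \le \norm{N_1}_F = \sqrt r$ for the matrix of Kronecker products of unit vectors.
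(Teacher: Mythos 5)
Your proof is correct and takes essentially the same route as the paper's: write $v^TH_0v$ as the squared Frobenius norm of the linearized perturbation, apply the triangle inequality and Cauchy--Schwarz, and use $\wh{\sigma_i}\leq(1+\epsilon)\sigma_1$ (from Assertion \ref{assertion:parameters}) to land at $3\wh{\sigma}^2 r\leq 5\sigma_1^2 r$. The only cosmetic difference is that you group the terms mode-wise and bound each group through an unfolding with $\norm{N_1}_{\op}\leq\sqrt{r}$, whereas the paper applies the triangle inequality directly to all $3r$ rank-one terms (each of Frobenius norm exactly $\wh{\sigma_i}\norm{\Delta_{a_i}}_2$); both yield the same constant.
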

\begin{proof}
Consider a directional vector $v = \left(\Delta_{a_1}, \Delta_{b_1}, \Delta_{c_1}, \dots , \Delta_{a_r}, \Delta_{b_r}, \Delta_{c_r} \right)$.  Then 
\[
||v^TH_0v||_2^2 = \norm{\sum_{i = 1}^r ( \wh{\sigma_i}\Delta_{a_i} \otimes \wh{y_i} \otimes \wh{z_i} + \wh{\sigma_i}\wh{x_i} \otimes \Delta_{b_i} \otimes \wh{z_i} + \wh{\sigma_i}\wh{x_i} \otimes \wh{y_i} \otimes \Delta_{c_i})  }_2^2.
\]
Thus,
\begin{align*}
||v^TH_0v||_2^2 \leq \left(\sum_{i = 1}^r \norm{ \wh{\sigma_i}\Delta_{a_i} \otimes \wh{y_i} \otimes \wh{z_i}}_2 + \norm{\wh{\sigma_i}\wh{x_i} \otimes \Delta_{b_i} \otimes \wh{z_i}}_2 + \norm{\wh{\sigma_i}\wh{x_i} \otimes \wh{y_i} \otimes \Delta_{c_i}}_2  \right)^2 \\ \leq \left(\wh{\sigma}\sum_{i=1}^r \left(\norm{\Delta_{a_i}}_2 + \norm{\Delta_{b_i}}_2 + \norm{\Delta_{c_i}}_2\right)  \right)^2 \\ \leq 5\sigma_1^2r \left( \sum_{i=1}^r \norm{\Delta_{a_i}}_2^2 + \norm{\Delta_{b_i}}_2^2 + \norm{\Delta_{c_i}}_2^2 \right) \\ = 5\sigma_1^2r\norm{v}_2^2
\end{align*}
which immediately implies the desired.

\end{proof}

Now we can complete the proof of Lemma \ref{lem:smoothness} in the same way we proved Lemma \ref{lem:tractability} through a matrix Chernoff bound and the fact that the Hessian of $P(a_i,b_i,c_i)$ is small.

\begin{proof}[Proof of Lemma \ref{lem:smoothness}]
Note that $||D_{S_{\sim}}||_2^2$ is a sum of $|S_{\sim}|$ terms each of which is the square of a linear function (corresponding to an entry).  Each of these terms contributes a rank-$1$ term to the Hessian.  Furthermore, since all entries of $\wh{x_i}, \wh{y_i}, \wh{z_i}$ are at most $\sqrt{\frac{\mu r}{n}} + \epsilon$, the operator norm of each of these rank $1$ terms is at most $9\wh{\sigma}^2r^2\left(\sqrt{\frac{\mu r}{n}} + \epsilon\right)^4  \leq \frac{10r^4\mu^2 \sigma_1^2}{n^2}$. 
\\\\
If we add each entry to $S_{\sim}$ with probability $p = \frac{\log^2 n}{n^2} (\frac{10r\mu}{ c} \cdot \frac{\sigma_1}{\sigma_r})^{10}$ then by Claim \ref{claim:matrix-chernoff2}, with at least $1 - \frac{1}{n^{10}}$ probability, the sum of the rank $1$ terms corresponding to entries of $S$ has largest singular value at most $10\sigma_1^2rp$.
\\\\
We have shown that with high probability, the largest eigenvalue of the Hessian of $||D|_{S_{\sim}}||_2^2$ is at most $10\sigma_1^2rp$.  It remains to note that the Hessian of $P(a_i,b_i,c_i)$ has operator norm at most $\frac{\sigma_r^2c^6}{10rn^2}$ for all $a_i,b_i,c_i$ in the feasible set and thus the optimization problem we formulated is smooth with parameter
\[
20\sigma_1^2rp = 20\sigma_1^2r \cdot \frac{\log^2 n}{n^2} \left(\frac{10r\mu}{ c} \cdot \frac{\sigma_1}{\sigma_r}\right)^{10}
\]
\end{proof}

\section{Nearly Linear Time Implementation}\label{sec:lineartime}

Now we show how to implement our {\sc Full Exact Tensor Completion} algorithm with running time that is essentially linear in the number of observations (up to $\poly(r,\log n, \sigma_1/\sigma_r, \mu, 1/c)$- factors).  We will assume that our observations are in a list of tuples giving the coordinates and value i.e. $(i,j,k, T_{ijk})$.  Throughout this section, we use $\zeta$ to denote a quantity that is  $\poly(r,\log n, \sigma_1/\sigma_r, \mu, 1/c)$.

\subsection{Initialization}
We will construct $\wh{B}$ implicitly, i.e. we will store the coordinates of all of its nonzero entries and their values.  To do this we can enumerate over all pairs $(j,k) \in [n]^2$ such that there is some $i \in [n]$ for which we observe $T_{ijk}$.  For each of these pairs $(j,k)$ we take all pairs $i,i'$ such that $T_{ijk}$ and $T_{i'jk}$ are observed (we may have $i = i')$ and update $\wh{B}_{ii'}$ .  For each pair $(j,k)$, let $X_{j,k}$ be the number of distinct $i$ for which we observe $T_{ijk}$.  Note 
\[
\E\left[\sum_{j=1}^n \sum_{k=1}^n X_{j,k}^2 \right] \leq \zeta n^{3/2}
\]
so the time complexity of this step and the sparsity of $\wh{B}$ is essentially linear in the number of observations.
\\\\
Next to compute the top-$r$ singular vectors of $\wh{B}$ we can use matrix powering (with the implicit sparse representation for $\wh{B}$).  Note Lemma \ref{lem:spectral} implies that there is a sufficient gap between the $r$\ts{th} and $r+1$\ts{st} singular values of $\wh{B}$ that matrix powering converges within $\zeta$ rounds.  It is clear that the remainder of the steps in the initialization algorithm can be completed in nearly linear time.

\subsection{Alternating Minimization}

Note that for the least squares optimization problem, it suffices to solve the optimization for each row separately.  For the rows of $U_x(\wh{T_{t+1}})$, let $o_1, \dots , o_n$ be the number of observations in each row.  The least squares problems for the rows have sizes $o_1, \dots , o_n$ respectively.  Instead of constructing the full matrix $B_t$ (which has size $n^2$), we only need to compute the columns of $B_t$ that correspond to actual observations, which can be done using the matrices $V_x^t, V_y^t, V_z^t$.  Thus, the least squares problems can be solved in time essentially linear in $o_1 + \dots + o_n$.  Overall, this implies that all of the alternating minimization steps can be completed in nearly linear time.

\subsection{Post-Processing}
Note the projection step can be solved in nearly linear time and from it we obtain a representation of $T'$ as a sum of $r^3$ rank-$1$ tensors (corresponding to the basis given by $\wh{V_x} \otimes \wh{V_y} \otimes \wh{V_z}$.

\subsubsection{Jennrich's Algorithm}
Note we have an implicit representation of the tensor $T'$ that we are decomposing as a sum of $r^3$ rank-$1$ components.  Thus, we can compute implicit representations of $T^{(a)},T^{(b)}$ each as a sum of $r^3$ rank-$1$ matrices.  Next, we can use matrix powering with the implicit representations to compute the top $r$ principal components for $T^{(a)},T^{(b)}$ (note the analysis in \cite{moitra2018algorithmic} implies there is a sufficient gap between the $r$\ts{th} and $r+1$\ts{st} singular values of these matrices).  Now, we can compute the pseudo-inverses of the rank-$r$ matrices $T_r^{(a)},T_r^{(b)}$ (written implicitly as the sum of $r$ rank-$1$ matrices) in $n\poly(r)$ operations. 
\\\\
We can compute the eigendecompositions of $U = T_r^{(a)}(T_r^{(b)})^+$ and $V = \left((T_r^{(a)})^+T_r^{(b)}\right)^T$ using implicit matrix powering again (the analysis in \cite{moitra2018algorithmic} implies that with $0.99$ probability, the eigenvalues of these matrices are sufficiently separated).  These operations all take $n \zeta$ time.  Finally, we show that once we have $(u_1,v_1), \dots , (u_r,v_r)$, we can solve for $w_1, \dots , w_r$.  To do this, instead of solving the full least squares problem, we will choose a random subset of $\poly(r,\log n, \sigma_1/\sigma_r, \mu, 1/c)$ entries within each layer of the tensor $T'$ and solve the least squares optimization restricted to those entries.
\\\\
To see why this works, first note that the subspaces spanned by $u_1, \dots , u_r$ and $v_1, \dots , v_r$ are $2\mu$-incoherent (the proof of Theorem \ref{thm:parameterestimates} implies that $u_1, \dots , u_r$ and $v_1, \dots , v_r$ are close to the true factors up to some permutation).  Next, let $A$ be the matrix with columns given by $u_1 \otimes v_1, \dots , u_r \otimes v_r$.  Note that if $A'$ is a matrix constructed by selecting a random subset of $\poly(r,\log n, \sigma_1/\sigma_r, \mu, 1/c)$ rows of $A$, then with $1 - \frac{1}{n^{10}}$ probability, $A'$ is well-conditioned (by incoherence and the matrix Chernoff bound in Claim \ref{claim:matrix-chernoff}).  Since $A'$ is well-conditioned, the solution to the restricted least squares optimization problem must still be close to the true solution. 
\\\\
Thus, the entire least-squares optimization can be completed in $n\zeta$ time.  Overall, we conclude that the tensor decomposition step can be completed in $n\zeta$ time.

\subsubsection{Convex Optimization}
It remains to show that the final optimization problem can be solved in nearly linear time.  Note the size of the optimization problem is $n \poly(r,\log n, \sigma_1/\sigma_r, \mu, 1/c)$.  Lemma \ref{lem:tractability} and Lemma \ref{lem:smoothness} imply that the condition number of this convex optimization problem is $\poly(r,\log n, \sigma_1/\sigma_r, \mu, 1/c)$ so it can be solved in $n\zeta$ time.

\end{document}